\documentclass{article}
\usepackage{fullpage}

\usepackage{amsmath, amssymb,bm}
\usepackage{hyperref}
\usepackage{enumerate,mathtools}
\usepackage{amsthm}
\usepackage{cite}
\usepackage{microtype}
\usepackage{color}
\usepackage{subcaption}



\newtheorem{theorem}{Theorem}
\newtheorem{cor}[theorem]{Corollary}
\newtheorem{lemma}[theorem]{Lemma}

\newtheorem{prop}{Proposition}

\theoremstyle{definition}

\newtheorem{remark}{Remark}
\newtheorem{assumption}{Assumption}

\usepackage{pgfplots}
\usetikzlibrary{positioning}
\pgfplotsset{compat=1.12}


\newcommand{\bA}{\bm{A}}
\newcommand{\bB}{\bm{B}}

\newcommand{\bG}{\bm{G}}

\newcommand{\bN}{\bm{N}}

\newcommand{\bW}{\bm{W}}
\newcommand{\bX}{\bm{X}}
\newcommand{\bY}{\bm{Y}}
\newcommand{\bZ}{\bm{Z}}

\newcommand{\by}{\bm{y}}

\newcommand{\cA}{\mathcal{A}}

\newcommand{\cD}{\mathcal{D}}

\newcommand{\cF}{\mathcal{F}}

\newcommand{\cI}{\mathcal{I}}

\newcommand{\cL}{\mathcal{L}}

\newcommand{\cR}{\mathcal{R}}

\newcommand{\cT}{\mathcal{T}}
\newcommand{\cU}{\mathcal{U}}

\newcommand{\cW}{\mathcal{W}}


\newcommand{\dd}{\mathrm{d}}

\newcommand{\one}{\boldsymbol{1}}

\newcommand{\psd}{\mathbb{S}_+}

\newcommand{\bEx}{\ensuremath{\mathbb{E}}}

\newcommand{\ex}[1]{\ensuremath{\mathbb{E}\left[ #1\right]}}

\newcommand{\pr}[1]{\ensuremath{\mathbb{P}\left[ #1\right]}}

\DeclareMathOperator{\cov}{\mathsf{Cov}}

\DeclareMathOperator{\var}{\sf Var}

\DeclareMathOperator{\diag}{\mathrm diag}

\newcommand{\DKL}[2]{\ensuremath{D\left( #1 \, \|  \, #2 \right)}}

\newcommand{\wX}{X}

\DeclareMathOperator{\gtr}{tr}

\DeclareMathOperator{\gvec}{\mathsf{vec}}

\DeclareMathOperator{\MMSE}{\mathsf{MMSE}}

\newcommand{\reals}{\mathbb{R}}

\newcommand{\eps}{\epsilon}
\newcommand{\normal}{\mathcal{N}}

\let\originalleft\left
\let\originalright\right
\renewcommand{\left}{\mathopen{}\mathclose\bgroup\originalleft}
\renewcommand{\right}{\aftergroup\egroup\originalright}

\allowdisplaybreaks

\newif\iflongpaper

\longpapertrue

\title{The Geometry of Community Detection\\ via the MMSE Matrix} 

\author{Galen Reeves \and  Vaishakhi Mayya \and Alexander Volfovsky
\thanks{G.~Reeves is with the Department of Electrical and Computer Engineering and the Department of Statistical Science, Duke University, Durham, NC 27708 USA (e-mail: galen.reeves@duke.edu). V.~Mayya is with the Department of Electrical and Computer Engineering, Duke University, Durham, NC 27708 USA (e-mail: vaishakhi.mayya@duke.edu). A.~Volfovsky is with the Department of Statistical Science, Duke University, Durham, NC 27708 USA (e-mail: alexander.volfovsky@duke.edu).}}

\begin{document}


\maketitle

\begin{abstract}
The information-theoretic limits of community detection have been studied extensively for network models with high levels of symmetry or homogeneity. The contribution of this paper is to study a broader class of network models that allow for variability in the sizes and behaviors of the different communities, and thus better reflect the behaviors observed in real-world networks. Our results show that the ability to detect communities can be described succinctly in terms of a matrix of effective signal-to-noise ratios that provides a geometrical representation of the relationships between the different communities. This characterization follows from a matrix version of the I-MMSE relationship and generalizes the concept of an effective scalar signal-to-noise ratio introduced in previous work.  We provide explicit formulas for the asymptotic per-node mutual information and  upper bounds on the minimum mean-squared error. The theoretical results are supported by numerical simulations. 
\end{abstract}



\section{Introduction} 
\label{sec:introduction}

Modern data problems often ask questions about how individuals (or computers or countries) interact or relate to each other within a network. A frequently studied problem in this context is that of community detection: how does one partition a network into clusters (or communities or groups) of nodes? A natural partition of a network is into communities that exhibit similar connection patterns, both within and between communities. A generative model for random networks called the stochastic block model (SBM) exhibits such behavior and hence much of the theoretical analysis of community detection has focused on it \cite{holland1983stochastic}. Under the SBM each individual belongs to exactly one of $k$ communities, and the probability of an edge between two individuals is exclusively a function of their community memberships. 

The problem of community detection can be modeled in terms of a joint distribution on $(\bX, \bG)$ where $\bG$ is a simple graph on $n$ vertices and $\bX = (X_1, \dots, X_n)$ is a collection of labels associated with the vertices.
In the SBM this joint distribution is governed by two parameters: a probability vector $p$ of each node being assigned to one of $k$ labels, and a $k\times k$ matrix of probabilities $Q$ where $Q_{ab}$ is the probability of an edge between nodes in communities $a$ and $b$. The community detection task is recovering the labels $\bX$ given the graph $\bG$ and potentially side information.

Inspired by the work of Decelle et al.~\cite{decelle:2011a}, a recent line of work has studied the information-theoretic limits of recovery when the distribution of $(\bX,\bG)$ is known. Most of this work has focused on either the two-community SBM \cite{deshpande:2017,caltagirone:2018, lelarge:2018,barbier:2016a,lesieur:2017,krzakala:2016,deshpande:2018} or the so-called $k$-community symmetric SBM \cite{lesieur:2017, abbe:2018a, banks:2016a,abbe:2018}. 
In all of these cases,  performance is summarized in terms of a single numerical value, which is often referred to as the effective signal-to-noise ratio of the problem. General SBMs have been considered by Abbe and Sandon~\cite{abbe:2018a} who characterize conditions for weak recovery and also by Lesieuir et al.~\cite{lesieur:2017} who analyze the performance of an approximate message passing algorithm. 

A different line of research within the statistics community has focused on settings where the parameters of the distribution, such as the distribution of communities and the conditional probabilities of edges, are unknown quantities that must also be inferred, along with the community memberships \cite{rohe2011spectral, suwan:2016}. While the models considered in this literature are highly flexible, the conditions needed for consistent recovery of communities corresponds to a very high SNR regime relative to the information theoretic analysis. 

\subsection{Our Contributions} 

The contribution of this paper is to characterize the information-theoretic limits for a large class of degree-balanced SBMs. In contrast to the symmetric SBM, these models allow for variability in the sizes and behaviors of the different communities, and thus reflect behaviors observed in real-world networks. While previous work is limited to a scalar measure of performance for the overall community detection problem, we introduce a multivariate measure of performance, the minimum mean-squared error (MMSE) matrix, which describes detection limits for individual communities.  For example, this matrix allows us to characterize settings where some of the communities can be detected while other cannot. 

Our analysis of the community detection problem  leverages a matrix version of the I-MMSE relation~\cite{reeves:2018a}, which both simplifies and generalizes  techniques used in previous work. In particular, the upper bound on the mutual information  in Theorem~\ref{thm:cF0} is a consequence of a novel non-asymptotic inequality that holds under \emph{any} distribution on the community labels.  Many of our techniques can be applied more generally to other high-dimensional inference  problems, including matrix and tensor factorization.

\subsection{Overview of Approach}
This paper introduces a multivariate measure of performance, which we refer to as the MMSE matrix: 
\begin{align}
\MMSE(\bX \mid \bG)  &  \triangleq \frac{1}{n} \sum_{i=1}^n \bEx_{\bG} \left[  \cov(X_i \mid  \bG) \right]. \label{eq:mmse_matrix}
\end{align}
In this expression, $\cov(X_i \mid  \bG)$ is the covariance matrix of the $i$-th node's label after is has been embedded in to an $\ell$-dimensional Euclidean space (where $\ell$ is either $k$ or $k-1$). 
We show that the MMSE matrix provides important geometrical information about the uncertainty in the community memberships. While the trace of the MMSE matrix corresponds to standard measures of performance  such as the average overlap, the information provided by individual entries in the MMSE matrix can be used to answer more nuanced questions about which of the community relationships can  (or cannot) be recovered.

One of the key ideas in this paper is to focus on community detection in the setting where there is additional covariate information about the labels. Specifically, we assume that one has side-information from the signal-plus-noise model:
\begin{align}
\bY = \bX S^{1/2} + \bN, \label{eq:signal_plus_noise}
\end{align}
where $S$ is an $\ell \times \ell$ positive semidefinite matrix, known as the matrix SNR,  and $\bN$ is an $n \times \ell$ matrix with  i.i.d.\ standard Gaussian entries.

The introduction of the signal-plus-noise model plays an important role both for our analysis and for our interpretation of the results. For example, it allows us to leverage the matrix I-MMSE relation \cite{reeves:2018a} to characterize the MMSE matrix in terms of the gradient of the mutual information: 
\begin{align}
\nabla_S I(\bX; \bG, \bY) & = \frac{n}{2}  \MMSE(\bX \mid \bG, \bY).
\end{align}
Remarkably, this relationship holds generally for any joint distribution on the pair $(\bX, \bG)$. Notice  that the matrix MMSE in \eqref{eq:mmse_matrix} is obtained by evaluating this expression at  $S = 0$. 

The signal-plus-noise model also provides a natural way to address non-identifiability issues that  arise when the distribution over the labels is invariant to permutations.  The key idea is that in the large-$n$ limit, {\it an arbitrarily small amount of side-information is sufficient to break the symmetry in the model}.
Hence, focusing on the double limit
\[
\lim_{S \to 0} \lim_{n\to \infty} \MMSE(\bX \mid \bG, \bY),
\]
provides a meaningful and interpretable measure of average performance that bypasses the need to optimize over an equivalence class of permutations.

Section~\ref{sec:formulas}   provides  formulas for the per-vertex mutual information and MMSE matrix in the large-$n$ limit. These formulas are stated for a degree-balanced stochastic block model and can be approximated numerically with arbitrary precision. Numerical simulations are provided in Section~\ref{sec:analysis}.

\subsection{Notation}
We use  $\mathbb{S}^d$, $\mathbb{S}_+^d$ to denote the space $d \times d$ symmetric matrices and symmetric positive semi-definite matrices, respectively. Given a symmetric positive semi-definite matrix $S$, we use $S^{1/2}$ to denote the unique positive semi-definite square root. Given matrix $A, B \in \mathbb{S}^d$, the relation $A \preceq B$ means that $B - A \in \mathbb{S}_+^d$.

\section{Definitions}
\label{sec:mmse_matrix}

The $k$ community stochastic blockmodel is frequently parameterized in terms of the tuple $(n, p, Q)$ where $p = (p_1, \dots, p_k)$ is a distribution over $k$ communities and $Q \in [0,1]^{k \times k}$ is a symmetric matrix such that $Q_{ab}$ is the probability of an edge between nodes in communities $a$ and $b$.
Without loss of generality, the community labels can be embedded into finite dimensional Euclidean space. Two useful representations  are considered in Sections~\ref{sec:standard} and \ref{sec:whitened_basis}. In Section~\ref{sec:db_sbm} we introduce the degree balanced SBM for which we state the remainder of the results in the paper. Lastly, in Section~\ref{sec:signal_plus_noise} we introduce the signal plus noise problem which we leverage to derive the results for community detection.

\subsection{Standard Basis Representation}\label{sec:standard}

 A natural embedding associates the labels with the standard basis vectors $\{e_1, \dots, e_k\}$ in $\reals^k$, i.e., the columns of the identity matrix. Under this representation,  the expected value of a label vector $X_i$ is a point on the probability simplex.  The conditional covariance is defined by
 \begin{align*}
 \cov(X_i \mid \bG ) & \triangleq  \bEx_{\bX \mid \bG} \left[  \left( X_i - \ex{ X_i | \bG } \right)^T  \left( X_i - \ex{ X_i | \bG } \right)  \right],
 \end{align*}
and the MMSE matrix is defined according to \eqref{eq:mmse_matrix}. By the data processing inequality for MMSE, this matrix satisfies
\begin{align*}
0 \preceq \MMSE(\bX \mid \bG)  \preceq  \MMSE(\bX ) \triangleq \frac{1}{n} \sum_{i=1}^n \cov(X_i). 
\end{align*}
As a consequence, the difference between the MMSE matrix and covariance provides a measure of the difference between the prior and posterior marginals of the labels. 
 
\begin{prop}\label{prop:MMSE_standard} Under the standard basis representation, the $k \times k$ MMSE matrix satisfies 
\begin{align*}
 \gtr\left( \MMSE(\bX ) - \MMSE(\bX \mid \bG  ) \right) 
  = \frac{1}{n} \sum_{i=1}^n  \bEx_{\bG} \left[  \left\| P_{X_i \mid \bG}(\cdot \mid G)   -  P_{X_i}(\cdot) \right\|_{2}^2\right].
\end{align*}
\end{prop}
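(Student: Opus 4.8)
The plan is to reduce the claim to a per-node identity and then exploit the special structure of the one-hot (standard basis) encoding. The natural starting point is the law of total covariance applied to each label $X_i$ with respect to the graph $\bG$,
\begin{align*}
\cov(X_i) = \bEx_{\bG}\left[ \cov(X_i \mid \bG) \right] + \cov_{\bG}\left( \ex{X_i \mid \bG} \right),
\end{align*}
which holds for any joint distribution on $(X_i, \bG)$ and isolates the ``explained'' part of the prior covariance. Averaging over $i$ and comparing with the definitions of $\MMSE(\bX)$ and $\MMSE(\bX \mid \bG)$ immediately yields
\begin{align*}
\MMSE(\bX) - \MMSE(\bX \mid \bG) = \frac{1}{n} \sum_{i=1}^n \cov_{\bG}\left( \ex{X_i \mid \bG} \right).
\end{align*}
Taking the trace and using the elementary identity $\gtr \cov(Z) = \bEx\left[ \| Z - \bEx[Z]\|_2^2 \right]$, valid for any random vector $Z$, the trace of the $i$-th summand becomes $\bEx_{\bG}\left[ \| \ex{X_i \mid \bG} - \ex{X_i} \|_2^2 \right]$.

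The remaining step is the identification of these conditional means with probability vectors, which is where the standard basis representation enters. Since $X_i$ takes values in $\{e_1, \dots, e_k\}$, its posterior mean is exactly the posterior pmf viewed as a vector in $\reals^k$: the $a$-th coordinate of $\ex{X_i \mid \bG}$ equals $\pr{X_i = e_a \mid \bG} = P_{X_i \mid \bG}(a \mid \bG)$, and likewise $\ex{X_i} = P_{X_i}(\cdot)$. Substituting these identifications produces precisely the right-hand side of the proposition, completing the argument.

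I do not anticipate a genuine obstacle here; the content is bookkeeping organized around two standard facts, and the main thing to get right is choosing a clean route so that the Euclidean norm on $\reals^k$ appearing after the trace computation is seen to coincide with the $\ell^2$ norm on probability mass functions in the statement — immediate once the one-hot encoding is made explicit. As an alternative to the law of total covariance, one could argue directly: for a one-hot vector $\ex{X_i^T X_i \mid \bG} = \diag\left( \ex{X_i \mid \bG} \right)$, so that $\gtr \cov(X_i \mid \bG) = 1 - \| \ex{X_i \mid \bG}\|_2^2$ and similarly $\gtr \cov(X_i) = 1 - \| \ex{X_i} \|_2^2$, after which expanding $\| \cdot \|_2^2$ and invoking the tower property $\bEx_{\bG}\left[ \ex{X_i \mid \bG} \right] = \ex{X_i}$ collapses the cross term and recovers the same conclusion.
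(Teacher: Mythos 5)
Your argument is correct and is essentially the paper's own proof: both apply the law of total covariance (variance) per node, take the trace to obtain $\bEx_{\bG}\left[\left\|\ex{X_i \mid \bG} - \ex{X_i}\right\|_2^2\right]$, and then identify the conditional mean of a one-hot vector with the posterior pmf. The alternative direct computation you sketch at the end is a nice bonus but not needed.
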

\begin{proof}
For each $i$, we can write
\begin{align*}
\gtr\left( \cov(X_i) - \ex{ \cov(X_i \mid \bG)} \right) 
 = \ex{ \left\| \ex{ X_i \mid \bG} - \ex{ X_i}  \right\|^2} = \ex{ \left\|P_{X_i \mid \bG}(\cdot \mid G)   -  P_{X_i}(\cdot)  \right\|^2},
\end{align*}
where the first equality follows from the law of total variance and the last step holds because, under the standard bases representation, we have $\ex{ X_{i\ell} \mid \bG } = \pr{ X_{i \ell } = e_{\ell} \mid \bG}$. Summing over all $i$ and normalizing by $n$ completes the proof. 
\end{proof}

Furthermore, the individual entries of the MMSE matrix also provide information about different recovery tasks. For example, consider the problem of determining whether a label belongs to a subset $A \subset [k]$. If we define $\one_A = \sum_{\ell \in A} e_\ell$, then  $\one_A^T X_i$ is binary random variable indicating whether the $i$-th label belongs to $A$. 
Summing the entries in the MMSE matrix indexed by the set $A$ provides a measures of the average error probability: 
\begin{align*}
\one_A^T \MMSE(\bX \mid \bG) \one_A 
 &= \frac{1}{n} \sum_{i=1}^n \bEx_{G} \left[  \var( \one_A^T X_i \mid  \bG)\right].
\end{align*}

\subsection{Whitened Representation} 
\label{sec:whitened_basis} 

Next, we focus on the setting where the labels are identically distributed with probability vector $p = (p_1 , \dots, p_k)$.   The whitened representation is defined to be of a set of $k$ points $\{\mu_1, \dots, \mu_k\}$ in $\reals^{k-1}$ with the property that 
\[
\sum_\ell p_\ell \mu_\ell = 0, \qquad \sum_\ell p_\ell \mu_\ell  \mu_\ell^T= I_{k-1}. 
\]
Under the whitened representation, each label vector has zero mean and identity covariance and thus the MMSE matrix satisfies $0 \preceq \MMSE(\bX \mid \bG)  \preceq   I_{k-1} $.

\begin{remark}[Unique Specification of Whitened Representation]\label{rem:unique_contruction}
The whitened representation can be defined explicitly as a function of $p$ as follows. 
Let $\tilde{p} = (\sqrt{p_1}, \dots, \sqrt{p_k})^T$ and apply the Gram-Schmidt process to the vectors $\{\tilde{p}, e_1, \dots, e_{k-1}\}$ to obtain an orthonormal basis for $\reals^k$ of the form $[\tilde{p}, B]$ where $B$ is $k \times (k-1)$. Then,  the support of the whitened representation is related to the standard basis vectors according to 
\begin{align}
\mu_\ell &= B^T P^{-1/2} e_\ell \quad  \iff \quad  e_{\ell}  = p + P^{1/2} B \mu_\ell  ,\label{eq:representation_transform} 
\end{align}
where $P  = \diag(p)$.  This  construction is unique and has the useful property that 
$\mu_\ell$ lies in the span of $\{e_1, \dots, e_\ell\}$. 
\end{remark}

\begin{prop}\label{prop:MMSE_white}
If the labels are identically distributed then the $(k-1)\times (k-1)$  MMSE matrix of the whitened representation satisfies 
\begin{align*}
\gtr\left( I -  \MMSE(\bX \mid \bG)  \right) 
&  =  \frac{1}{n} \sum_{i=1}^n   \chi^2\left(  P_{X_i , \bG} \,  \|  \,    P_{X_i}  P_{\bG} \right),
\end{align*}
where $\chi^2(P\,  \| \,  Q) = \int ( \dd P/ \dd Q)^2\,  \dd Q $ denotes the chi-squared divergence. 
\end{prop}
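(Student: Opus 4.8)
The plan is to reduce the matrix statement to a scalar identity about the posterior mean and then exploit the algebraic structure of the whitened embedding. First I would invoke the matrix law of total variance for each node: since the whitened representation forces $\ex{X_i} = 0$ and $\cov(X_i) = I_{k-1}$, we have $I_{k-1} = \bEx_{\bG}\left[\cov(X_i \mid \bG)\right] + \cov\left(\ex{X_i \mid \bG}\right)$. Rearranging and using that $\ex{X_i \mid \bG}$ has zero mean over $\bG$ gives $I_{k-1} - \bEx_{\bG}\left[\cov(X_i \mid \bG)\right] = \bEx_{\bG}\left[\ex{X_i \mid \bG}^T \ex{X_i \mid \bG}\right]$. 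Taking traces, summing over $i$, and dividing by $n$ turns the left-hand side into $\gtr\left(I - \MMSE(\bX \mid \bG)\right)$ and the right-hand side into $\frac{1}{n}\sum_i \bEx_{\bG}\left[\left\| \ex{X_i \mid \bG} \right\|^2\right]$. It then remains to identify each summand with a chi-squared divergence.

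The crux is a Gram-matrix identity for the whitened support points, namely $\mu_\ell^T \mu_m = \delta_{\ell m}/p_\ell - 1$, which I would derive directly from the two defining constraints rather than from the explicit construction in Remark~\ref{rem:unique_contruction}. Collecting the points into the $k \times (k-1)$ matrix $M$ with rows $\mu_\ell^T$, the constraints read $p^T M = 0$ and $M^T P M = I_{k-1}$, where $P = \diag(p)$. These are exactly the conditions for the augmented $k \times k$ matrix $\tilde M = [\tilde p, \, P^{1/2} M]$, with $\tilde p = (\sqrt{p_1}, \dots, \sqrt{p_k})^T$, to have orthonormal columns, hence to be orthogonal. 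Reading off $\tilde M \tilde M^T = I_k$ then yields $M M^T = P^{-1} - \one \one^T$, whose $(\ell, m)$ entry is precisely $\delta_{\ell m}/p_\ell - 1$.

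Finally I would substitute this identity into the posterior mean. Writing $q_\ell(G) = P_{X_i \mid \bG}(\mu_\ell \mid G)$, we have $\ex{X_i \mid G} = \sum_\ell q_\ell(G) \mu_\ell$, so $\left\| \ex{X_i \mid G} \right\|^2 = \sum_{\ell, m} q_\ell(G) q_m(G)\, \mu_\ell^T \mu_m = \sum_\ell q_\ell(G)^2/p_\ell - 1$, where the last step uses the Gram identity together with $\sum_\ell q_\ell(G) = 1$. Taking the expectation over $\bG$ gives $\sum_G P_{\bG}(G) \sum_\ell q_\ell(G)^2/p_\ell - 1$, and since $q_\ell(G)/p_\ell$ is exactly the Radon--Nikodym derivative $\dd P_{X_i, \bG}/\dd(P_{X_i} P_{\bG})$ evaluated at $(\mu_\ell, G)$, this expression is $\chi^2(P_{X_i, \bG} \| P_{X_i} P_{\bG})$, the $-1$ being the normalization that accompanies the chi-squared divergence. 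I expect the Gram identity to be the main obstacle --- once it is in place the remaining manipulations are bookkeeping --- with a secondary point of care being the zero-mean step in the law of total variance and the constant offset that reconciles the squared posterior mean with the chi-squared divergence.
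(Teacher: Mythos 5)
Your proof is correct, and it follows the same skeleton as the paper's: law of total variance to reduce the claim to $\frac{1}{n}\sum_i \bEx_{\bG}\bigl[\|\ex{X_i\mid\bG}\|^2\bigr]$, followed by an algebraic identification of each summand with the chi-squared divergence. The one genuine difference is in the middle step. The paper passes back to the standard-basis representation via the explicit transform \eqref{eq:representation_transform}, writing $\ex{X_i\mid\bG}=B^TP^{-1/2}\ex{\tilde X_i\mid\bG}$ and using the orthogonality of $[\tilde p,B]$ to evaluate the norm. You instead stay entirely in the whitened coordinates and derive the Gram identity $\mu_\ell^T\mu_m=\delta_{\ell m}/p_\ell-1$ directly from the two moment constraints, by observing that $[\tilde p,\,P^{1/2}M]$ has orthonormal columns and reading off $MM^T=P^{-1}-\one\one^T$. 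The two arguments rest on the same underlying orthogonality fact, but your version has the mild advantage of not invoking the specific Gram--Schmidt construction of Remark~\ref{rem:unique_contruction}: it applies verbatim to any embedding satisfying $\sum_\ell p_\ell\mu_\ell=0$ and $\sum_\ell p_\ell\mu_\ell\mu_\ell^T=I$, which makes explicit that the proposition is basis-construction-independent. One small remark: your final expression $\sum_\ell q_\ell(G)^2/p_\ell-1$ agrees with what the paper's own proof computes, namely the centered divergence $\int(\dd P/\dd Q-1)^2\,\dd Q$; the uncentered formula $\int(\dd P/\dd Q)^2\,\dd Q$ displayed in the proposition statement omits the $-1$ and appears to be a typo in the paper rather than a defect in your argument.
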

\begin{proof}
Noting that $\MMSE(\bX )  = I$ and using the same approach as in the proof of Proposition~\ref{prop:MMSE_standard}, we have
\begin{align}
 \gtr\left(I - \MMSE(\bX \mid \bG  ) \right) 
& = \frac{1}{n} \sum_{i=1}^n  \bEx\left[\left \| \ex{ X_i \mid \bG} - \ex{ X_i}  \right\|_2^2\right]. \label{eq:total_var_white}
\end{align}
Next, let $\tilde{X}_i$ denote the representation of $X_i$ in the standard basis and observe that 
\begin{align*}
\left \|   \ex{ X_i \mid \bG} - \ex{ X_i}   \right\|_2^2
 & =  \left \|  B ^T P^{-1/2}  \ex{ \tilde{X}_i \mid \bG} -  B^T P^{-1/2}  \ex{ \tilde{X}_i}   \right\|_2^2\\
& =  \sum_{\ell = 1}^k \left(\frac{1}{   \sqrt{p_\ell }}  \pr{ \tilde{X}_i = e_\ell \mid \bG }   -  \sqrt{ p_\ell }\right)^2\\
&  =   \chi^2\left(  P_{X_i \mid \bG}(\cdot \mid \bG)  \, \| \,    P_{X_i}(\cdot) \right),
\end{align*}
where we have used \eqref{eq:representation_transform} and the fact that $\ex{ \tilde{X}_i} = p$. Plugging this expression back into \eqref{eq:total_var_white} gives the stated result. 
\end{proof}

For the purposes of analysis, the two representations described above are equivalent in the sense that there is a one-to-one mapping between the $k \times k$ MMSE matrix defined under the standard basis representation and the $(k-1) \times (k-1)$ MMSE matrix defined under the whitened representation. For notational convenience we work in the whitened representation.

\subsection{Degree-Balanced SBM}\label{sec:db_sbm} 
The average degree of an SBM corresponds to the expected number of edges for a node chosen uniformly at random and is denoted  by $d$. An SBM is said to be \emph{degree-balanced} if the expected degree of a node does not depend on its community assignments. This condition is equivalent to saying that $Q p$ is proportional to the all ones vector. 

For the purposes of this paper, it is useful to consider a reparameterization of the degree-balanced SBM in terms of the tuple $(n,d,p, R)$ where $d$ is the average degree and  $R \in \mathbb{S}^{k-1}$. Using this parameterization, the entries of $Q$ are given by
\begin{align}
Q_{ab} = \frac{d}{n} + \frac{\sqrt{d ( 1- d/n) }}{n} \mu_a^T R \mu_b, \label{eq:Qab}
\end{align}
where $\{\mu_1, \dots, \mu_k\}$ are defined as a function of $p$ using the procedure described in Remark~\ref{rem:unique_contruction}. The tuple $(n,d,p,R)$ is valid only if the entries of $Q$ are between zero and one.

The matrix $R$ quantifies the relative strength of relationships between different communities. The eigenvalue decomposition is given by
\begin{align*}
R = U \diag(\lambda) U^T,
\end{align*}
where  $\lambda = (\lambda_1, \dots , \lambda_{k-1})$ are real numbers. To simplify the analysis, we will assume throughout that all the eigenvalues are nonzero so that $R$ is invertible.

We remark that the definition of signal-to-noise ratio given by Abbe and Sandon \cite[Section 2.1]{abbe:2018a} corresponds to $\max_{i} \lambda_i^2$. Furthermore, for the special case of $k=2$ communities, the representation of  $X_i$ is one-dimensional and the formulation of Lelarge and  Miolane~\cite{lelarge:2018} is equivalent to ours.

\subsection{Signal-Plus-Noise Problem} 
\label{sec:signal_plus_noise}

Our analysis uses properties of the signal-plus-noise model given in \eqref{eq:signal_plus_noise}. Throughout this section we will assume the labels are drawn i.i.d.\ according to a probability vector $p = (p_1, \dots , p_k)$ with strictly positive entries and are supported on the whitened representation described in Section~\ref{sec:whitened_basis}. For each $S \in \mathbb{S}_+^{k-1}$,  the task of recovering $\bX$ from $\bY$ decouples into $n$ independent copies of the problem  
\begin{align*}
Y =S^{1/2} X  + N,
\end{align*}
where $X$ is supported on $\{ \mu_1, \dots, \mu_k\}$ with probability vector $p$  and  $N \sim \normal(0, I)$ is independent Gaussian noise. 

Following~\cite{reeves:2018a} we define the the  mutual information function $I_{X} : \mathbb{S}_+^{k-1} \to [0, \infty)$ and matrix-valued MMSE function $M_{X} : \mathbb{S}_+^{k-1} \to \mathbb{S}_+^{k-1} $ according to
\begin{align}
I_{X}(S) & =  I(X ; Y) \label{eq:I_X} \\
M_{X}(S) & = \ex{ \cov(X \mid  Y)}. \label{eq:M_X}
\end{align}
The gradient and Hessian of  $I_{X}(S)$ are given by~\cite[Lemma~4]{reeves:2018a} 
\begin{align}
\nabla_S I_{X}(S) &= \frac{1}{2} M_{X}(S) \\
\nabla^2_S I_{X}(S) & = - \frac{1}{2} \ex{ \cov(X \mid  Y) \otimes \cov(X \mid  Y)}  \label{eq:I_Xpp},
\end{align} 
where $\otimes$ denotes the Kronecker product.
We note that these functions can be approximated using numerical integration methods or Monte-Carlo sampling.

\section{Formulas for Mutual Information and MMSE}\label{sec:formulas} 

Our analysis focuses on a sequence of degree-balanced SBMs where the parameters $(p, R)$ are fixed as the size of the network $n$ scales to infinity.  Additionally, we make two assumptions.

\begin{assumption}[Diverging Average Degree]\label{assump:lin_deg} The average degree of the network $d$ increases with $n$ such that both $d$ and $(n-d)$ tend to infinity. 
\end{assumption}

\begin{assumption}[Definite Matrix]\label{assump:defR} The matrix $R$ is either positive definite or negative definite.  
\end{assumption}

Our first result is stated in terms of the potential function $\cF : \mathbb{S}_+^{k-1} \to \reals_+$ defined by
\begin{align}
\cF(\Delta) & =   I_{X}(\Delta)  + \frac{1}{ 4} \gtr\left( \left( R-  R^{-1} \Delta \right)^2 \right). \label{eq:cF}
\end{align}
where $I_{X}(\cdot)$ is defined by \eqref{eq:I_X}.  Notice that the first term in the potential function is defined exclusively by the prior distribution  of labels $p$ whereas the second term is defined exclusively by the matrix $R$.  By the matrix I-MMSE relation \cite{reeves:2018a}, it can be verified that every stationary point of $\cF(\Delta)$ satisfies the fixed-point equation
\begin{align}
M_{\wX}(\Delta) & = I - R^{-1} \Delta  R^{-1}. \label{eq:FP}
\end{align}
where $M_{X}(\cdot)$ is defined by \eqref{eq:M_X}. Noting that $M_{X}(0) = I$,  we see that  $\Delta = 0$ is always a stationary point. Furthermore,  every solution of \eqref{eq:FP} belongs to the set  $\{ \Delta \, : \, 0 \preceq \Delta \preceq R^2\}$.

\begin{theorem}\label{thm:cF} Under Assumptions~\ref{assump:lin_deg} and \ref{assump:defR}, 
\begin{align*}
\lim_{n \to \infty } \frac{1}{n} I(\bX;\bG)  =   \min_{\Delta \in \psd^{k-1}} \cF(\Delta),
\end{align*}
where $\cF(\Delta)$ is given in \eqref{eq:cF}.
\end{theorem}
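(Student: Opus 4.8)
The plan is to prove the formula by deriving matching upper and lower bounds on $\tfrac1n I(\bX;\bG)$, both organized around the auxiliary Gaussian channel of Section~\ref{sec:signal_plus_noise} and the matrix I-MMSE relation $\nabla_S I_X(S)=\tfrac12 M_X(S)$. The conceptual engine is a Gaussian equivalence: under the parameterization \eqref{eq:Qab}, the centered and normalized edge variable attached to a pair $(i,j)$ has conditional mean $\tfrac{1}{\sqrt n}\,\mu_{X_i}^T R\,\mu_{X_j}$ and a variance converging to that of a standard Gaussian, so observing $\bG$ should be asymptotically as informative as observing the spiked Wigner matrix $\bW=\tfrac1{\sqrt n}\bX R\bX^T+\bZ$ with $\bZ$ symmetric Gaussian noise. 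Assumption~\ref{assump:lin_deg} is exactly what makes the per-edge Bernoulli-to-Gaussian discrepancy negligible in a Lindeberg-type replacement, so that the two mutual informations differ by $o(n)$.

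For the upper bound I would run a Guerra--Toninelli interpolation, which I expect to be the content of the non-asymptotic inequality in Theorem~\ref{thm:cF0}. Fixing $\Delta\in\psd^{k-1}$, interpolate between the coupled model and $n$ decoupled scalar observations $Y=\Delta^{1/2}X+N$ of the type defining $I_X(\Delta)$. Differentiating the interpolating mutual information in the interpolation parameter and substituting the gradient identity $\nabla_S I_X=\tfrac12 M_X$, the derivative reduces to a quadratic form in the fluctuation of the matrix overlap about its replica-symmetric value; completing the square against the quadratic term in \eqref{eq:cF} leaves a remainder of definite sign. Because the interpolation replaces Bernoulli edges by Gaussian side-channels, Assumption~\ref{assump:lin_deg} again controls the discrepancy, and since the construction is valid for every prior $p$ and every $\Delta$ it yields the non-asymptotic bound $\tfrac1n I(\bX;\bG)\le \cF(\Delta)+o(1)$, hence $\limsup_n \tfrac1n I(\bX;\bG)\le \min_{\Delta\in\psd^{k-1}}\cF(\Delta)$.

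The matching lower bound is where I expect the real difficulty. The natural route is an Aizenman--Sims--Starr cavity computation (or, equivalently, an adaptive interpolation in which the side-channel SNR follows a trajectory that forces the overlap to track a solution of the fixed-point equation \eqref{eq:FP}). The crucial ingredient is concentration of the \emph{matrix-valued} overlap $\tfrac1n\sum_{i}\langle X_i\rangle\langle X_i\rangle^{T}$ around its mean $I-M_X(\Delta^\star)$: only once the overlap is shown to self-average can the sign-indefinite remainder produced by the cavity step be discarded in the limit, upgrading the one-sided bound to an equality. Proving this concentration in the matrix setting---where one must control the fluctuations of a full $(k-1)\times(k-1)$ object rather than a scalar---is the main obstacle; I would attack it by combining the Nishimori identities, the concavity of $I_X$ encoded in the Hessian formula \eqref{eq:I_Xpp}, and an averaging argument over a small perturbation of the SNR matrix $S$, which is precisely the role played by the symmetry-breaking side information in the overview of Section~\ref{sec:signal_plus_noise}.

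Finally, Assumption~\ref{assump:defR} is used to render the variational problem well posed and to select the relevant stationary point. Since the quadratic term makes $\cF$ coercive, its minimum over $\psd^{k-1}$ is attained; definiteness of $R$ ensures that $\Delta\mapsto R^{-1}\Delta R^{-1}$ is a positive-definite-preserving bijection on the compact set $\{\Delta:0\preceq\Delta\preceq R^2\}$ containing all solutions of \eqref{eq:FP}, which is what lets me identify the minimizer with the stationary point selected by the interpolation. Matching the upper and lower bounds at this common minimizer, and transferring the conclusion back from $\bW$ to $\bG$ via the Gaussian equivalence, yields $\lim_n \tfrac1n I(\bX;\bG)=\min_{\Delta}\cF(\Delta)$.
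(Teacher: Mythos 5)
Your overall architecture matches the paper's: reduce $I(\bX;\bG)$ to the Gaussian spiked-matrix model via a Lindeberg-type channel universality argument (this is Theorem~\ref{thm:GtoZ} and Corollary~\ref{cor:GtoZ}), then evaluate the limit of the Gaussian model by the variational formula. The divergence is in how the second step is handled. The paper does \emph{not} prove the variational formula for the Gaussian model itself: it imports both the upper and lower bounds wholesale from Lelarge and Miolane (Theorem~\ref{thm:matrix_factorizatiion}, i.e.\ \cite[Theorem~12]{lelarge:2018}), so that the entire proof of Theorem~\ref{thm:cF} is three lines (universality, cite the external theorem with $t=1$, note $\ex{XX^T}=I$ under whitening and that the infimum is attained on $\{0\preceq\Delta\preceq R^2\}$). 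You instead propose to reprove that formula: a Guerra-type interpolation for the upper bound and an Aizenman--Sims--Starr/adaptive-interpolation argument with matrix-overlap concentration for the lower bound. Your upper-bound plan is essentially what the paper does develop, but only for Theorem~\ref{thm:cF0} (the version with side information), via the Legendre-transform argument of Theorem~\ref{thm:I_XW_UB}; it is not needed for Theorem~\ref{thm:cF}. Your lower-bound plan is exactly the hard content of the cited Lelarge--Miolane theorem, and you leave it as a sketch (``I would attack it by combining the Nishimori identities\dots''); concentration of the full $(k-1)\times(k-1)$ overlap matrix is a substantial piece of work that the paper deliberately avoids redoing. So: same skeleton, but you are signing up to prove a theorem the paper cites, and the part you flag as the main obstacle is left unexecuted. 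One smaller correction: the role of Assumption~\ref{assump:defR} is not really coercivity or well-posedness of the variational problem (the quadratic term is coercive whenever $R$ is invertible); it is needed so that $\bX R\bX^T=(\bX R^{1/2})(\bX R^{1/2})^T$ (or the analogue with $(-R)^{1/2}$), which is what lets the paper map the general-$R$ Gaussian model onto the identity-coupling setting in which \cite[Theorem~12]{lelarge:2018} is stated (see Appendix~\ref{sec:thm:matrix_factorizatiion_proof}).
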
  

The next result provides an upper bound on the mutual information in the setting where side information is generated according to the signal-plus-noise model \eqref{eq:signal_plus_noise} parameterized by a positive semi-definite matrix  $S$.  To characterize this setting, we define the modified potential function:
\begin{align}
\cF(\Delta, S) & =   I_{X}(S+ \Delta)  + \frac{1}{ 4} \gtr\left( \left( R-  R^{-1} \Delta \right)^2 \right). \label{eq:cF0}
\end{align}
Notice that the main difference from \eqref{eq:FP} is that the side information changes the prior information about the labels. 

\begin{theorem}\label{thm:cF0} Suppose that $\bY$ is generated according to the signal-plus-noise model  \eqref{eq:signal_plus_noise} with matrix $S \in \mathbb{S}_+^{k-1}$. Under Assumption~\ref{assump:lin_deg}, 
\begin{align*}
\limsup_{n \to \infty } \frac{1}{n} I(\bX;\bG, \bY )  \le   \min_{\Delta \in \psd^{k-1}} \cF(\Delta, S) .
\end{align*}
 where $\cF(\Delta, S)$ is given in \eqref{eq:cF0}. 
\end{theorem}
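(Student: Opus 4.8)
The plan is to establish the bound through a single non-asymptotic inequality obtained from a variational (mean-field) relaxation of the posterior, and then to pass to the limit using Assumption~\ref{assump:lin_deg}. The reason only the diverging-degree hypothesis is needed---and not the definiteness Assumption~\ref{assump:defR} required for the matching lower bound in Theorem~\ref{thm:cF}---is that a variational upper bound on mutual information holds for \emph{every} prior and never requires controlling the sign of the interaction. Writing the marginal likelihood as a partition function $\mathcal{Z}$, the Gibbs variational principle lower-bounds $\log \mathcal{Z}$ by its value on any restricted (product) family of trial measures; since $\tfrac1n I(\bX;\bG,\bY) = \tfrac1n\bEx[\log P(\bG,\bY\mid\bX)] - \tfrac1n\bEx[\log \mathcal{Z}]$, a lower bound on $\log\mathcal{Z}$ translates into an upper bound on the mutual information. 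This is in contrast to a Guerra-type interpolation, whose remainder is a quadratic form $\gtr\big((Q-Q_\ast)R(Q-Q_\ast)R\big)$ that is sign-definite only when $R$ is definite; the variational route sidesteps this term entirely, which is precisely why Assumption~\ref{assump:defR} can be dropped here.

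First I would Gaussianize the graph. Under \eqref{eq:Qab} the log-likelihood ratio of each edge relative to the degree-$d$ null model is a function of the fluctuation $\tfrac{1}{\sqrt n}X_i^T R X_j$, which is $O(n^{-1/2})$. Taylor-expanding $\log(1+x)$ and using Assumption~\ref{assump:lin_deg} to send both $d$ and $n-d$ to infinity, the cubic and higher-order contributions are negligible in aggregate, so the graph Hamiltonian reduces, up to an error that vanishes after normalization by $n$, to the spiked-Wigner form $\sum_{i<j}\big[\tfrac{1}{\sqrt n}(X_i^T R X_j)\,\tilde W_{ij} - \tfrac{1}{2n}(X_i^T R X_j)^2\big]$, where $\tilde W_{ij}$ are the standardized edge variables. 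This is the only place the diverging-degree hypothesis enters, and the resulting inequality remains valid for an arbitrary joint law of $\bX$.

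Next I would evaluate the variational bound on the Gaussianized model with a Gaussian product trial measure indexed by a matrix $\Delta\in\psd^{k-1}$. In such a trial measure each node decouples into an independent single-letter Gaussian channel whose effective signal-to-noise matrix is the sum of the genuine side information $S$ and an induced field $\Delta = R\,q\,R$, where $q$ is the trial overlap; the single-node contribution is then the scalar mutual information $I_{X}(S+\Delta)$ of \eqref{eq:I_X}, while the self-interaction term $\tfrac{1}{2n}(X_i^T R X_j)^2$ together with the double-counting correction contributes the penalty $\tfrac14\gtr\big((R-R^{-1}\Delta)^2\big)$ after substituting $q = R^{-1}\Delta R^{-1}$ and using $\bEx[X X^T]=I$. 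Matching these pieces identifies the bound with $\cF(\Delta,S)$ from \eqref{eq:cF0}, and a short computation shows its stationarity condition is precisely the fixed-point relation \eqref{eq:FP} with $S$ added to the argument of $M_{X}$, confirming that $\cF(\cdot,S)$ is the correct mean-field free energy. Since $\Delta$ is free, I obtain $\tfrac1n I(\bX;\bG,\bY)\le \cF(\Delta,S)+\eps_n$ for every $\Delta$, with $\eps_n\to0$; taking $\limsup_n$ and then infimizing over $\Delta$ gives the claim, with the infimum attained because every relevant stationary point lies in the compact set $\{\Delta:0\preceq\Delta\preceq R^2\}$ on which $\cF(\cdot,S)$ is continuous.

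The main obstacle I anticipate is twofold, and both parts are technical rather than conceptual. The first is the uniform control of the Gaussianization remainder: one must show that the expected aggregate of the discarded higher-order terms in the edge log-likelihood ratios is $o(n)$ \emph{uniformly} in the label configuration, which requires quantitative moment bounds and is where Assumption~\ref{assump:lin_deg} does the real work. The second is the rigorous identification of the discrete mean-field energy with the continuum functional $\cF(\Delta,S)$---in particular justifying the passage from the optimal product trial measure to the single-letter object $I_{X}(S+\Delta)$ and verifying the $\tfrac14\gtr\big((R-R^{-1}\Delta)^2\big)$ penalty---together with the interchange of $\limsup_n$ and the optimization over $\Delta$. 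I would handle the latter by proving the inequality for each fixed $\Delta$ first and only then optimizing, so that no uniformity in $\Delta$ is needed in the limit.
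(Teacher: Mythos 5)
Your first step (Gaussianizing the graph so that $I(\bX;\bG,\bY)$ can be replaced by $I(\bX;\bY,\bZ)$ with $\bZ=\bW+\bm{\xi}$, $\bW=n^{-1/2}\bX R\bX^T$) is the same reduction the paper performs via Theorem~\ref{thm:GtoZ} and Corollary~\ref{cor:GtoZ}, and your diagnosis that this is where Assumption~\ref{assump:lin_deg} does its work is correct. The gap is in your second step. The Gibbs variational principle with a \emph{product} trial measure does not produce the potential $\cF(\Delta,S)$; it produces the naive mean-field functional, which is strictly weaker. Concretely, writing $\tfrac1n I(\bX;\bY,\bZ)=-\tfrac1n\bEx\log\int e^{H(\bx)}\,\dd P(\bx)$ with $H$ the log-likelihood ratio, the quadratic term in $H$ is $-\tfrac{1}{2}\sum_{i<j}(w_{ij}(\bx)-W_{ij})^2$ with $w_{ij}(\bx)=n^{-1/2}x_i^TRx_j$. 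Under any product trial measure $Q=\prod_iQ_i$ with $\bEx[\bEx_{Q_i}[x_ix_i^T]]=I$, one gets $\bEx\bigl[\bEx_Q[(x_i^TRx_j)^2]\bigr]=\gtr\bigl(R\,\bEx_{Q_i}[x_ix_i^T]\,R\,\bEx_{Q_j}[x_jx_j^T]\bigr)\to\gtr(R^2)$, so after summing over the $\binom{n}{2}$ pairs the self-interaction contributes $+\tfrac12\gtr(R^2)$ per node to the mutual-information bound rather than the $+\tfrac14\gtr(R^2)$ appearing in $\cF(\Delta,S)$ (already at $\Delta=0$ your bound reads $I_X(S)+\tfrac12\gtr(R^2)$ versus the claimed $I_X(S)+\tfrac14\gtr(R^2)$), and the cross terms $-\tfrac12\gtr(\Delta)+\tfrac14\gtr((R^{-1}\Delta)^2)$ do not emerge from the computation either. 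This is the familiar absence of the Onsager correction in naive mean field: the "short computation" you assert in the third paragraph ("matching these pieces identifies the bound with $\cF(\Delta,S)$") is precisely the step that fails. A static trial-measure bound is valid but not tight enough; to get $\cF(\Delta,S)$ for \emph{every} $\Delta$ one must vary the pairwise coupling strength and the fictitious single-letter SNR together along a path and control the sign of an overlap-fluctuation remainder.

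Your framing of why Assumption~\ref{assump:defR} can be dropped is also a mischaracterization of what an interpolation argument requires here. The paper's proof (Theorem~\ref{thm:I_XW_UB} via Lemmas~\ref{lem:grad_inq} and \ref{lem:J_XW_inq}) is an interpolation in disguise --- it takes the Legendre--Fenchel transform of $I_{\bX,\bW}(\cdot,t)$ in the SNR matrix, proves the pointwise gradient inequality $I^{(2)}_{\bX,\bW}\le\tfrac14 g(2I^{(1)}_{\bX,\bW})$, integrates in $t$ at fixed dual variable $U$ using the envelope theorem, and biconjugates --- and its remainder, $\tfrac{1}{n^2}\gtr\bigl(\bEx[(R\bA^T\bB-\bEx[R\bA^T\bB])^2]\bigr)$, is nonnegative for \emph{any} invertible $R$ because $\bA,\bB$ are conditionally i.i.d.\ replicas; definiteness of $R$ is not what makes the interpolation work. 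Assumption~\ref{assump:defR} enters only through the matching lower bound of Theorem~\ref{thm:cF}, which relies on \cite[Theorem~12]{lelarge:2018}. If you want to salvage your outline, replace the Gibbs step with either the paper's duality argument or a Guerra path $s\mapsto(\,S+(1-s)\Delta,\;s\,)$ in $(S,t)$, using the I-MMSE relation and the Nishimori identity to show the $s$-derivative of the free energy is bounded by $\partial_s\cF$ up to a sign-controlled overlap fluctuation.
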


\begin{remark}
Similar to previous work \cite{deshpande:2017,caltagirone:2018, lelarge:2018,  krzakala:2016 ,barbier:2016a,lesieur:2017}, our proofs of Theorems~\ref{thm:cF} and \ref{thm:cF0} use a channel universality argument to relate the community detection problem to a low-rank estimation problem. Assumption~2 is needed for the proof of Theorem~1, which leverages \cite[Theorem~12]{lelarge:2018}.  To prove Theorem \ref{thm:cF0} we develop a novel variation of the Guerra interpolation method that exploits the matrix I-MMSE relationship~\cite{reeves:2018a} to provide a general and non-asymptotic upper bound.
\end{remark}

Next, we recall that that by the data processing inequality, the MMSE matrix satisfies
\[
\MMSE(\bX \mid \bG) \succeq \MMSE(\bX \mid \bG, \bY),
\]
for all $S \in \mathbb{S}^{k-1}_+$.  For any fixed problem size $n$,  the difference between these matrices converges to zero as $S \to 0$. However, in the large-$n$ limit it is possible that the limiting behavior is discontinuous with respect to $S$. This can occur, for example, when the SBM is invariant to permutations of the labels and hence $\MMSE(\bX \mid \bG) = \MMSE(\bX)$. The presence of side-information with an arbitrarily small positive definite matrix $S$ is sufficient to break the permutation invariance, and thus  the small-$S$ limit provides a meaningful measure of recovery performance that overcomes the non-identifiability  issues. 

The following result follows from the  matrix  I-MMSE relation and Theorems~\ref{thm:cF} and \ref{thm:cF0}. \iflongpaper The proof is given in Appendix~\ref{sec:thm:MMSE_UB_proof}.\fi

\begin{theorem}\label{thm:MMSE_UB} Consider Assumptions~\ref{assump:lin_deg} and \ref{assump:defR}. For every $S \succ 0$,
\begin{align*}
  \limsup_{n \to \infty } \lambda_\mathrm{max} \left(  \MMSE(\bX \mid \bG, \bY) - M_{X}(\Delta^*)  \right)  \le  0
\end{align*}
where $\Delta^*$ denotes any minimizer of $\cF(\Delta)$. In other words, 
\[
\MMSE(\bX \mid \bG, \bY) \preceq  M_{X}( \Delta^*)  + o_n(1),
\]
where $o_n(1)$  denotes a sequence of symmetric matrices that converges to zero as $n \to \infty$. 
\end{theorem}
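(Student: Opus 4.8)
The plan is to combine the matrix I-MMSE relation with Theorems~\ref{thm:cF} and \ref{thm:cF0} to first extract a one-dimensional (radial) estimate and then upgrade it to a bound in the positive-semidefinite order. Write $g_n(S) = \frac1n I(\bX;\bG,\bY)$ for the per-vertex mutual information when the side-information matrix is $S$, write $\MMSE_S$ for $\MMSE(\bX \mid \bG, \bY)$ with the same convention, and set $\cF^*(S) = \min_{\Delta \in \psd^{k-1}} \cF(\Delta,S)$, so that $\cF^*(0) = \cF(\Delta^*)$. By the matrix I-MMSE relation, $\nabla_S g_n(S) = \frac12 \MMSE_S$. Moreover, the computation behind \eqref{eq:I_Xpp}, applied to the full model, gives $\nabla^2_S g_n \preceq 0$ (so $g_n$ is concave) and shows that the directional derivative of $S \mapsto \MMSE_S$ along any $D \succeq 0$, tested against $vv^T$, equals $-\frac1n \sum_i \ex{\gtr\left(\cov(X_i \mid \bG,\bY) D\right)\, v^T \cov(X_i \mid \bG, \bY) v} \le 0$. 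Hence $\MMSE_S$ is nonincreasing in $S$ with respect to the positive-semidefinite order; this monotonicity is the second key tool.

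First I would establish, for every fixed $S \succ 0$, the scalar bound
\[
\limsup_{n \to \infty} \gtr\left( \left( \MMSE_S - M_X(\Delta^*) \right) S \right) \le 0 .
\]
This follows by chaining three estimates. Concavity of $g_n$ gives the radial inequality $\frac12 \gtr(\MMSE_S\, S) \le g_n(S) - g_n(0)$. Theorem~\ref{thm:cF0} gives $g_n(S) \le \cF^*(S) + o_n(1)$ and Theorem~\ref{thm:cF} gives $g_n(0) = \cF^*(0) + o_n(1)$, so $g_n(S) - g_n(0) \le \cF^*(S) - \cF^*(0) + o_n(1)$. Finally, the supergradient estimate $\cF^*(S) - \cF^*(0) \le \frac12 \gtr(M_X(\Delta^*) S)$ holds because $\cF^*(S) \le \cF(\Delta^*, S)$ and $\cF^*(0) = \cF(\Delta^*,0)$, so the difference is at most $I_X(S + \Delta^*) - I_X(\Delta^*) \le \gtr(\nabla I_X(\Delta^*) S) = \frac12 \gtr(M_X(\Delta^*) S)$ by concavity of $I_X$.

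The main obstacle is that this estimate only controls the single direction $S$, whereas the conclusion is about $\lambda_\mathrm{max}$, i.e.\ all directions. To close this gap I would apply the scalar bound not at $S$ but at the auxiliary matrices $S' = \delta\, u u^T + \delta^2 I$, where $u$ is an arbitrary unit vector and $\delta > 0$ is small enough that $S' \preceq S$. Expanding $\gtr\left( (\MMSE_{S'} - M_X(\Delta^*)) S' \right) = \delta\, u^T(\MMSE_{S'} - M_X(\Delta^*)) u + \delta^2 \gtr(\MMSE_{S'} - M_X(\Delta^*))$, using that $\gtr(\MMSE_{S'})$ and $\gtr(M_X(\Delta^*))$ are both bounded by $k-1$ (both matrices are $\preceq I$), and dividing by $\delta$, the scalar bound yields $\limsup_n u^T(\MMSE_{S'} - M_X(\Delta^*)) u \le (k-1)\delta$. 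The monotonicity of the MMSE matrix transfers this to the fixed $S$: since $S' \preceq S$ we have $\MMSE_S \preceq \MMSE_{S'}$, hence $\limsup_n u^T(\MMSE_S - M_X(\Delta^*)) u \le (k-1)\delta$; letting $\delta \to 0$ gives the bound for every unit vector $u$.

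Finally, to pass from a per-direction bound to a bound on $\lambda_\mathrm{max}$ uniformly in $n$ I would use a standard covering argument. The matrices $A_n := \MMSE_S - M_X(\Delta^*)$ are uniformly bounded in operator norm, so $u \mapsto u^T A_n u$ is uniformly Lipschitz on the unit sphere; taking a finite $\eta$-net $\{u_1,\dots,u_m\}$ and applying the per-direction bound at each net point gives $\limsup_n \lambda_\mathrm{max}(A_n) \le \max_j \limsup_n u_j^T A_n u_j + C\eta \le C\eta$, and $\eta \to 0$ completes the proof. The substantive steps are the derivation of the scalar bound and the positive-semidefinite monotonicity of $\MMSE_S$; the covering and limiting arguments are routine.
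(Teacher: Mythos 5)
Your argument is correct and follows essentially the same route as the paper's: both combine Theorems~\ref{thm:cF} and \ref{thm:cF0} to bound the limiting difference quotient of $\frac1n I(\bX;\bG,\bY)$ by the supergradient $\frac12\gtr(M_X(\Delta^*)S)$ of the limiting potential, use Loewner monotonicity of the MMSE matrix in $S$ to transfer estimates at small perturbations to the fixed $S$, and finish with a compactness/covering argument to pass from directional bounds to $\lambda_\mathrm{max}$. One minor slip: the directional derivative of $S\mapsto\MMSE(\bX\mid\bG,\bY)$ tested against $vv^T$ is $-\frac1n\sum_i\ex{v^T\cov(X_i\mid\bG,\bY)\,D\,\cov(X_i\mid\bG,\bY)\,v}$ rather than the expression you wrote, but the monotonicity it is meant to establish is just the data-processing inequality for the MMSE matrix that the paper invokes directly, so nothing in the argument is affected.
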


The numerical experiments of Section~\ref{sec:analysis} suggest that the upper bounds in Theorem~\ref{thm:cF0} are asymptotically tight, {\it i.e.}, that the MMSE matrix satisfies
\[
\MMSE(\bX \mid \bG, \bY) = M_{X}(S + \Delta^*) + o_n(1) 
\]
for almost all $S$, where $\Delta^*$ is the unique minimizer of $\cF(\cdot, S)$.

The next result provides an asymptotic lower bound on the problem of estimating $\bX R \bX^T$, which implies a lower bound on $\MMSE(\bX \mid \bG)$. The proof is given in Appendix~\ref{proof:thm:MMSE_LB}. 

\begin{theorem}\label{thm:MMSE_LB} Under Assumptions~\ref{assump:lin_deg} and \ref{assump:defR}, 
\begin{align*}
\liminf_{n \to \infty} \frac{1}{n^2 } \ex{ \left\| \bX R \bX^T - \ex{ \bX R  \bX^T \mid \bG } \right \|_F^2}   & \ge  \min_{\Delta \in \cD} \gtr (R^2  - R^{-2} \Delta^2  ) , 
\end{align*}
where $\cD = \arg \min \cF(\Delta)$. Furthermore, this implies that
\begin{align*}
\liminf_{n \to \infty}  \gtr\left( R^2 (I - \MMSE(\bX \mid \bG))^2 \right)  & \ge  \min_{\Delta \in \cD}   \gtr\left( R^2 (I - M_X(\Delta))^2 \right) .
\end{align*}
\end{theorem}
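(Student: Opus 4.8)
The plan is to derive the matrix-estimation bound (first display) by introducing an auxiliary Gaussian side-channel for the rank signal $\bX R \bX^T$ and differentiating the mutual information in the added SNR, and then to convert it into the MMSE-matrix bound (second display) through a posterior-replica/overlap identity together with the fixed-point relation \eqref{eq:FP}.

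\emph{Step 1 (auxiliary channel and I-MMSE).} For $t\ge 0$ I would augment the graph by a symmetric Gaussian observation $\bW_t = \sqrt{t/n}\,\bX R \bX^T + \bZ$, whose symmetric noise matches the undirected graph, and set $h_n(t) = \tfrac1n I(\bX;\bG,\bW_t)$. Applying the scalar I-MMSE relation coordinatewise over the independent entries $\{(i,j):i\le j\}$ gives, up to negligible diagonal terms, $h_n'(t) = \tfrac{1}{4}\cdot\tfrac{1}{n^2}\ex{\|\bX R\bX^T - \ex{\bX R\bX^T\mid \bG,\bW_t}\|_F^2}$, so that $4h_n'(0)$ is exactly the normalized matrix MMSE to be bounded. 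Each $h_n$ is nondecreasing and \emph{concave} in $t$, which is the key structural fact exploited below.

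\emph{Step 2 (limiting potential via universality).} Because $\bW_t$ and $\bG$ both carry the same rank signal $\bX R \bX^T$, the Gaussian side information should combine additively with the graph's effective unit SNR; invoking the same channel-universality argument that underlies Theorem~\ref{thm:cF} (via \cite[Theorem~12]{lelarge:2018}) I expect $\lim_n h_n(t) = g(1+t)$, where $g(s) = \min_{\Delta\in\psd^{k-1}} \cF_s(\Delta)$ and $\cF_s$ is \eqref{eq:cF} with $R$ replaced by $\sqrt{s}\,R$, namely
\[
\cF_s(\Delta) = I_X(\Delta) + \tfrac14\bigl( s\,\gtr(R^2) - 2\,\gtr(\Delta) + s^{-1}\gtr\bigl((R^{-1}\Delta)^2\bigr)\bigr),
\]
so that $g(1)=\min \cF$ recovers Theorem~\ref{thm:cF}. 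Then, since $h_n$ is concave and converges pointwise to the concave function $t\mapsto g(1+t)$, the one-sided derivatives converge, $\liminf_n h_n'(0)\ge g'(1^+)$, and Danskin's theorem gives $g'(1^+)=\min_{\Delta\in\cD}\partial_s \cF_s(\Delta)\big|_{s=1} = \tfrac14\min_{\Delta\in\cD}\bigl(\gtr(R^2)-\gtr((R^{-1}\Delta)^2)\bigr)$. Using the fixed-point identity $I-M_X(\Delta)=R^{-1}\Delta R^{-1}$ from \eqref{eq:FP}, a short cyclic-trace computation shows $\gtr((R^{-1}\Delta)^2)=\gtr(R^{-2}\Delta^2)=\gtr\bigl(R^2(I-M_X(\Delta))^2\bigr)$ on $\cD$. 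Combining with $4h_n'(0)=\tfrac1{n^2}(\text{matrix MMSE})$ yields the first inequality, $\liminf_n \tfrac1{n^2}\ex{\|\bX R\bX^T-\ex{\bX R\bX^T\mid\bG}\|_F^2}\ge \min_{\Delta\in\cD}\gtr(R^2-R^{-2}\Delta^2)$.

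\emph{Step 3 (transfer to the MMSE matrix).} For the second display I would rewrite the explained part of the matrix MMSE through two conditionally independent posterior replicas $\bX,\bX'$: writing $Q=\bX^T\bX'$ one has $\tfrac1{n^2}\ex{\|\ex{\bX R\bX^T\mid\bG}\|_F^2}=\tfrac1{n^2}\ex{\gtr(R Q R Q^T)}$ with $\ex{\tfrac1n Q}=I-\MMSE(\bX\mid\bG)$. Decomposing $\tfrac1n Q$ into its mean and fluctuation and using that $\gtr(R\,A\,R\,A^T)=\|R^{1/2}AR^{1/2}\|_F^2\ge 0$, together with $\tfrac1{n^2}\ex{\|\bX R\bX^T\|_F^2}\to\gtr(R^2)$, relates the matrix MMSE to $\gtr(R^2(I-\MMSE(\bX\mid\bG))^2)$; feeding in the first bound and the identity $\gtr(R^{-2}\Delta^2)=\gtr(R^2(I-M_X(\Delta))^2)$ from Step~2 then produces the stated inequality for the MMSE matrix.

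\emph{Main obstacle.} The delicate part is Step~2: making the ``additive SNR'' identification rigorous, i.e.\ establishing channel universality for the \emph{joint} observation $(\bG,\bW_t)$ and controlling it uniformly enough in $t$ to legitimately pass $\tfrac{d}{dt}$ through the limit. I would route this entirely through concavity — pointwise convergence of concave functions forces convergence of the one-sided derivatives — so that only a pointwise universality statement for each fixed $t$ is needed, with Danskin's theorem supplying $g'(1^+)$ even when $\cF$ has non-unique minimizers; a secondary technical point is verifying the overlap decomposition and the sign of the fluctuation term in Step~3.
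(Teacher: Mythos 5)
Your proposal follows essentially the same route as the paper: the paper likewise augments $\bG$ with the Wigner observation $\bZ(t)=\sqrt{t/n}\,\bX R\bX^T+\bm{\xi}$, lower-bounds the matrix MMSE at $t=0$ by the secant slope $4(\psi(1+t)-\psi(1))/t$ (your concavity argument, phrased there via monotonicity of the MMSE integrand in the integral I-MMSE identity), identifies $\lim_n \frac1n I(\bX;\bG,\bZ(t))=\psi(1+t)$ through Theorem~\ref{thm:GtoZ} and the additivity of independent Gaussian observations, and evaluates the right derivative at $t=0$ by the envelope theorem plus the stationarity relation \eqref{eq:FP}. Your Step~3 is exactly the content of Lemma~\ref{lem:grad_inq} evaluated at $S=0$ (the replica mean--fluctuation decomposition of $\ex{\gtr(RQRQ^T)}$ is precisely how that lemma is proved), so the transfer to the MMSE-matrix inequality also coincides with the paper's argument.
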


\color{black}

\section{Implications for Weak Recovery}
\label{sec:weak_recovery}

Broadly speaking, weak recovery refers to the  ability to produce an estimate that is positively correlated with the ground truth. In the context of community detection, the precise definition of weak recovery is a bit more nuanced due to the fact that symmetries in the problem formulation can result in a posterior distribution that is invariant to  permutations of the labels. As a specific example, consider the two-community degree-balanced SBM where each community is equally likely. Even if an estimator can partition the nodes into two groups such that all of the nodes in each group belong to the same community,  it is impossible to determine which label should be assigned to which group. 

One approach that is taken in the literature to address this nonidentifiability assesses the performance of an estimator after choosing a permutation of the labels that leads to the best performance; see e.g., \cite[Section~2]{abbe:2018a}. Another approach focuses on the related problem of estimating the pairwise interaction terms $\{ X^T_i R X_j \}$. Specifically weak recovery with respect to the pairwise interactions is possible if 
\begin{align}
\limsup_{n  \to \infty} \frac{1}{n^2  } \sum_{i, j} \MMSE( X_i^T R X_j \mid \bG) < \frac{1}{n^2  } \sum_{i, j} \var(X_i^T R X_j ), \label{eq:weak_recovery_mtx} 
\end{align}
where $\MMSE( X_i^T R X_j \mid \bG) \triangleq \bEx_{\bG}\left[ \var( X_i^T R X_j \mid \bG)\right]$.  Notice that under the whitened basis representation we propose, $\var(X_i^T R X_j ) = \|R\|_F^2$ and this condition is equivalent to
\begin{align*}
\limsup_{n  \to \infty} \frac{1}{n^2 } \ex{ \left\| \bX R \bX^T - \ex{ \bX R  \bX^T \mid \bG } \right \|_F^2}  <  \|R\|_F^2
\end{align*}

Following the approach taken in this paper, we see that a natural alternative is to focus on the small-$S$ behavior of the MMSE matrix. In particular, we say that weak recovery is possible if
\begin{align}
\inf_{S \succ 0}  \liminf_{n \to \infty } \gtr\left(  \MMSE(\bX \mid \bG, \bY)  \right) < \gtr \left(  \MMSE(\bX) \right )  . \label{eq:weak_recovery}
\end{align}

In view of these definitions, we see that Theorem~\ref{thm:MMSE_UB}  and Theorem~\ref{thm:MMSE_LB} provide necessary and sufficient  conditions for  weak recovery, depending on whether the potential function $\cF(\cdot)$ has a unique minimizer at zero.  

\begin{theorem}[Weak Recovery]\label{thm:weak_recovery} 
Consider Assumptions~\ref{assump:lin_deg} and \ref{assump:defR}. If $\cF(\cdot)$ has a  minimizer that is not equal to zero  then weak recovery in the sense of  \eqref{eq:weak_recovery} is possible.  Conversely, if  $\cF(\cdot)$ has a unique minimizer at zero, then weak recovery in the sense of  \eqref{eq:weak_recovery_mtx} is not possible. \color{black} 
\end{theorem}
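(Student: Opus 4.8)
The plan is to read off both implications directly from the asymptotic MMSE characterizations already in hand, so that Theorem~\ref{thm:weak_recovery} becomes essentially a corollary of Theorems~\ref{thm:MMSE_UB} and \ref{thm:MMSE_LB} together with the fixed-point relation \eqref{eq:FP}. The only genuine computations are a pair of trace identities, and the main thing to track is that the two directions are phrased with respect to two different notions of weak recovery: the sufficiency half via \eqref{eq:weak_recovery} (which I will bound using the upper bound), and the necessity half via \eqref{eq:weak_recovery_mtx} (which I will obstruct using the lower bound).

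For the sufficiency direction I would fix a minimizer $\Delta^* \neq 0$ of $\cF(\cdot)$, which is PSD since the minimization is over $\psd^{k-1}$. Because $\Delta^*$ is a stationary point, the fixed-point equation \eqref{eq:FP} gives $M_X(\Delta^*) = I - R^{-1}\Delta^* R^{-1}$, whence $\gtr(M_X(\Delta^*)) = (k-1) - \gtr(R^{-1}\Delta^* R^{-1})$. Writing $\Delta^* = C^T C$ and using that $R$ is invertible (Assumption~\ref{assump:defR}), I get $\gtr(R^{-1}\Delta^* R^{-1}) = \|C R^{-1}\|_F^2 > 0$, so $\gtr(M_X(\Delta^*)) < k-1 = \gtr(\MMSE(\bX))$, the last equality because each label has identity covariance in the whitened representation. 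Theorem~\ref{thm:MMSE_UB} asserts that for every $S \succ 0$ one has $\MMSE(\bX \mid \bG, \bY) \preceq M_X(\Delta^*) + o_n(1)$ with this same $\Delta^*$; taking traces (monotone on $\psd^{k-1}$) gives $\limsup_{n\to\infty} \gtr(\MMSE(\bX \mid \bG, \bY)) \le \gtr(M_X(\Delta^*))$. Crucially the right-hand side is independent of $S$, so passing to $\inf_{S \succ 0}\liminf_{n\to\infty}$ preserves the strict inequality and establishes \eqref{eq:weak_recovery}.

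For the necessity direction I would specialize Theorem~\ref{thm:MMSE_LB} to the hypothesis $\cD = \arg\min\cF = \{0\}$. The lower bound then collapses to $\min_{\Delta\in\cD} \gtr(R^2 - R^{-2}\Delta^2) = \gtr(R^2) = \|R\|_F^2$, giving $\liminf_{n\to\infty} \frac{1}{n^2}\ex{\|\bX R \bX^T - \ex{\bX R \bX^T \mid \bG}\|_F^2} \ge \|R\|_F^2$. A fortiori the corresponding $\limsup$ is at least $\|R\|_F^2 = \frac{1}{n^2}\sum_{i,j}\var(X_i^T R X_j)$, which is exactly the negation of the strict inequality in \eqref{eq:weak_recovery_mtx}; hence weak recovery in that sense fails.

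Since the difficult asymptotic analysis is already packaged in Theorems~\ref{thm:MMSE_UB} and \ref{thm:MMSE_LB}, there is no serious analytic obstacle here; the step needing the most care is the bookkeeping. Specifically I must handle cleanly: (i) the strict positivity $\gtr(R^{-1}\Delta^* R^{-1}) > 0$, which relies on invertibility of $R$ and is precisely why a \emph{nonzero} minimizer is required; (ii) the normalization $\gtr(\MMSE(\bX)) = k-1$ valid in the whitened representation; and (iii) the fact that the two halves are stated through \eqref{eq:weak_recovery} and \eqref{eq:weak_recovery_mtx} respectively, so that sufficiency draws on the upper bound while necessity draws on the matching lower bound.
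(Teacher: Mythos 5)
Your proposal is correct and follows essentially the same route as the paper, which states Theorem~\ref{thm:weak_recovery} as a direct consequence of Theorems~\ref{thm:MMSE_UB} and \ref{thm:MMSE_LB}: the sufficiency half via the trace of the upper bound $M_X(\Delta^*)$ at a nonzero minimizer (using the fixed-point relation \eqref{eq:FP} and invertibility of $R$ to get strictness), and the necessity half by evaluating the lower bound of Theorem~\ref{thm:MMSE_LB} at $\cD=\{0\}$ to obtain $\|R\|_F^2$. Your bookkeeping of the two distinct weak-recovery notions, the normalization $\gtr(\MMSE(\bX))=k-1$, and the passage from $\limsup$ over a fixed $S\succ 0$ to $\inf_{S\succ 0}\liminf_n$ all match what the paper relies on implicitly.
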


\color{black}

Evaluating the Hessian of the potential function at zero provides a simple test to determine whether $\Delta=0$ is a local minimum. Using \eqref{eq:I_Xpp}, it can be shown that 
\begin{equation*}
\nabla^2 \cF(\Delta) \Big \vert_{\Delta =0}  \propto   R^{-1} \otimes R^{-1}  - I_{(k-1)^2}. 
\end{equation*}
Therefore, if $\max_{i} \lambda^2_i(R) > 1$ then $\Delta =0$ is not a local minimizer.

\section{Numerical Experiments} 
\label{sec:analysis}

\begin{figure*}
        \centering
        \begin{subfigure}[b]{0.44\textwidth}
                \centering
                \begin{picture}(200,200)
               \put(0,0) {\includegraphics[width=\textwidth]{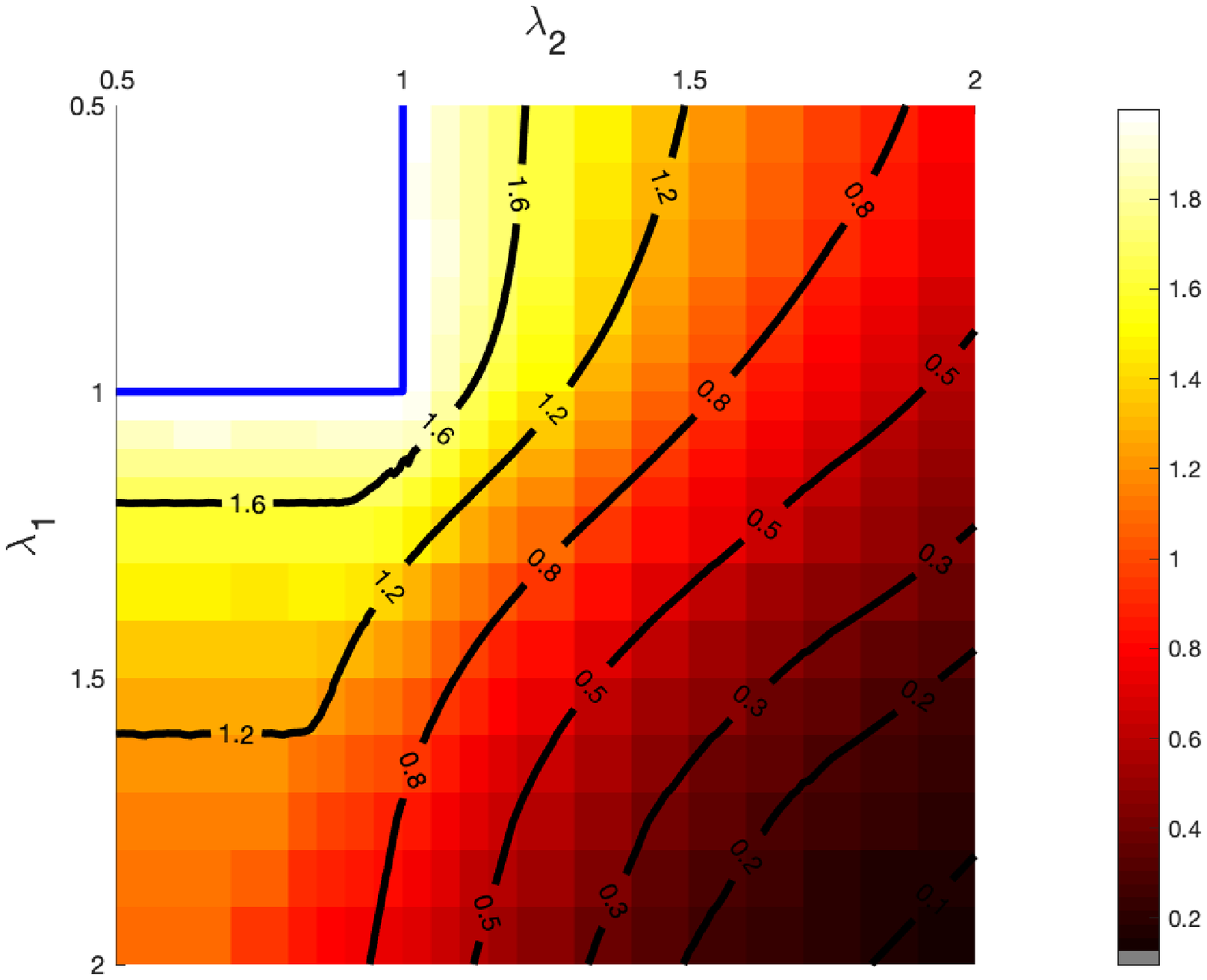}}
              \put(41,135){\footnotesize{\textcolor{blue}{$\Delta^* = 0$}}}               
             \end{picture}
             \vspace*{-.4cm}
                \caption{$p = (1/3, 1/3, 1/3)$ }
                \label{fig:grid1}
        \end{subfigure}%
        \begin{subfigure}[b]{0.44\textwidth}
                \centering
                \begin{picture}(200,200)
               \put(0,0) {\includegraphics[width=\textwidth]{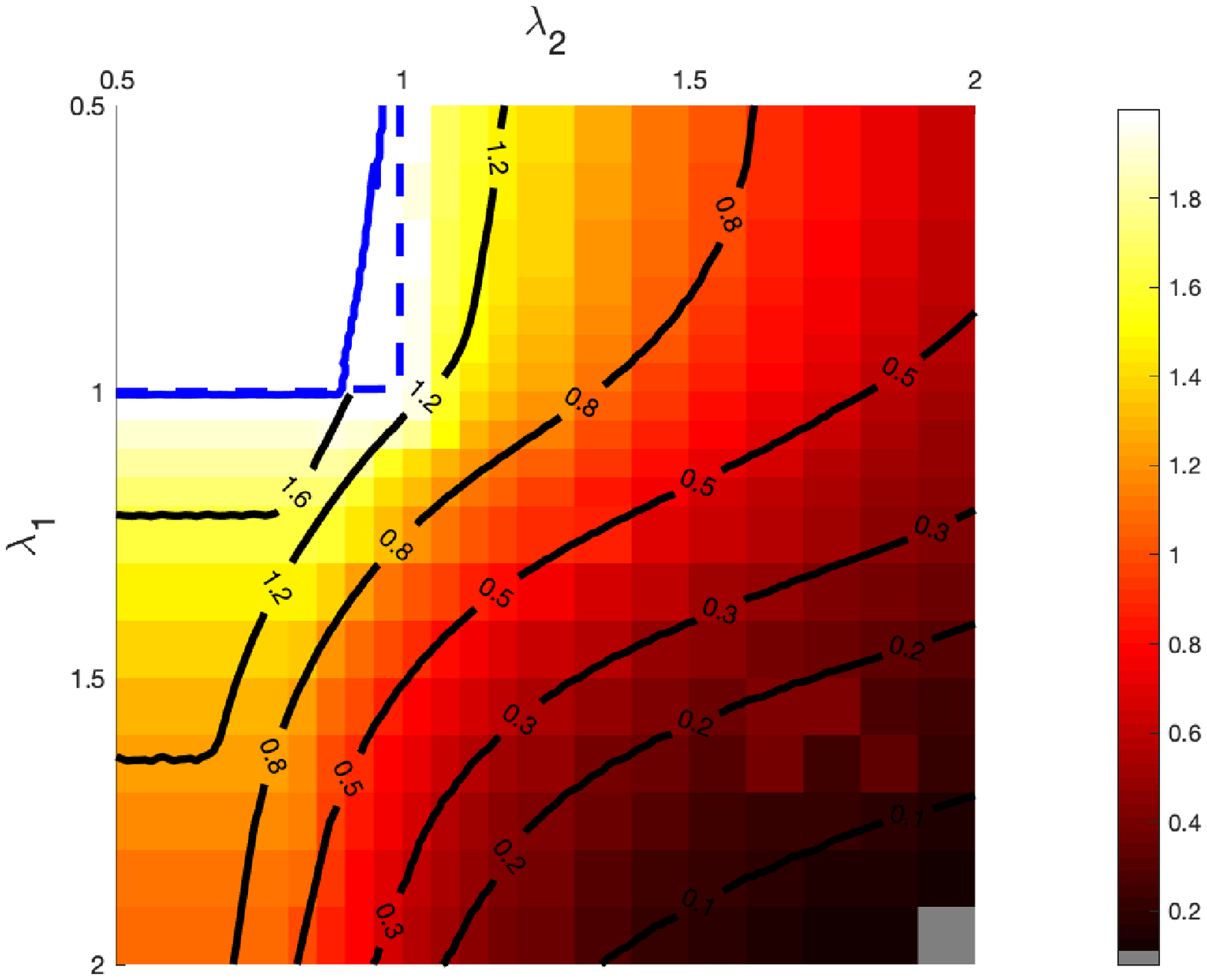}}
              \put(41,135){\footnotesize{\textcolor{blue}{$\Delta^* = 0$}}}
               \end{picture}
                     \vspace*{-.4cm}
                \caption{$p = (0.6, 0.3, 0.1)$ }
                \label{fig:grid2}
        \end{subfigure}
 \caption{\small 
 \textcolor{black}{Comparison of upper bound on $\gtr(\MMSE(\bX \mid \bG))$  given in Theorem~\ref{thm:MMSE_UB} (black contour lines) and the empirical MSE of  belief propagation (heat map) on a network of size $n = 10^5$ with average degree $d=30$.  In both cases,  $R = \diag(\lambda_1, \lambda_2)$. The upper bound on the weak recovery threshold given in Theorem~\ref{thm:weak_recovery} (solid blue line) corresponds to the boundary where $\Delta^* = 0$. The weak recovery threshold for acyclic BP \cite{abbe:2018a} (dashed blue line) corresponds to $\max(\lambda_1, \lambda_2) = 1$. The grey region in (b) corresponds to settings where $(n,d,p,R)$ does not define a valid SBM.}
 }        \label{fig:grid}
\end{figure*}

This section compares the asymptotic bounds given in  Section~\ref{sec:formulas} with the MSE obtained using belief propagation (BP). The case of the three-community degree balanced SBM $(n,d,p, R)$  is illustrated in Figure~\ref{fig:grid}. The black contour lines correspond to the trace of $M_{X}(\Delta^*)$ where $\Delta^*$ is the global minimizer of the potential function defined in \eqref{eq:cF}.  The heat map values correspond to the empirical MSE of  the BP algorithm described in \cite{decelle:2011a} applied to a network of size $n = 10^5$ with average degree $d = 30$. Each pixel is the median of eight independent trials and the MSE is measured with respect to the whitened basis representation. In each trial, the BP algorithm is run using fifteen different random initializations and the MSE is assessed based on the initialization that produces in the lowest predicted MSE. 

In the case of uniform community assignments (Figure~\ref{fig:grid1}), the weak recovery limit for acyclic BP \cite{abbe:2018a} is equal to our upper bound on the weak detection threshold. Furthermore, we see that there is a close correspondence between the asymptotic formula and the empirical results. Note that the  special case $\lambda_1=\lambda_2$ corresponds to the three-community symmetric SBM.

In the case of non-uniform  community assignments (Figure~\ref{fig:grid2}), there exists a region of the parameter space where weak recovery is possible with $\max(\lambda_1, \lambda_2)< 1$. The existence of such a region has been shown previously in the special case of the two-community asymmetric SBM~\cite{caltagirone:2018}. We also see that the asymptotic formulas match the empirical behavior qualitatively, although the empirical MSE is worse than is suggested by the formulas.  The grey region in Figure~\ref{fig:grid2} corresponds to settings where $(n,d,p,R)$ does not define a valid SBM.

\subsubsection*{Numerical Approximation of Formulas} 

We use Monte Carlo sampling  to approximately evaluate the functions $I_{X}$ and $M_{X}$, and we use the concave-convex procedure \cite{yuille:2002} to explore the local minima of the potential function.  Starting is an initialization point $\Delta^0$, a sequence of iterates is obtained according to 
\begin{align*}
\Delta^{t+1} &=  (1 - \eps)  \left( R^2 - RM_{X}( \Delta^t)R \right)  +  \eps \Delta^t,
\end{align*}
where $\eps \in [0, 1)$ is a dampening parameter.

\section{Main Steps in Proof} 

This section provides an overview of the main theoretical results of the paper. These results are described in the context of a more general inference problem where the goal is to  estimate a random $n \times \ell$ matrix $\bX = [ X_1, \dots, X_n]^T$. The setting of the $k$-community degree-balanced SBM described in Section~\ref{sec:formulas} corresponds to the special case where $\ell = k-1$ and the rows of $\bX$ are drawn i.i.d.\ from the whitened distribution described in Section~\ref{sec:whitened_basis}. 

\subsection{Equivalence between Observation Models} 
The high-level idea behind our approach is to established an equivalence between three different observations models. The first observation model is the signal-plus-noise model given by:
\begin{align}
\bY = \bX S^{1/2} + \bN, \label{eq:obs1} 
\end{align}
where $S \in \psd^\ell$ and $\bN$ is an $n \times \ell$ standard Gaussian matrix, i.e., the entries are i.i.d.\ $\normal(0,1)$.

To describe the second observation model, we first define the symmetric $n \times n$ random matrix
\begin{align}
\bW = \frac{1}{\sqrt{n}} \bX R \bX^T. \label{eq:W} 
\end{align}
where $R \in \mathbb{S}^\ell$. Then, the observations are given by 
\begin{align}
\bZ = \sqrt{t} \bW + \bm{\xi}, \label{eq:sym_model}
\end{align}
where $t \in [0, \infty)$ and $\bm{\xi}$ is an $n \times n$ standard Gaussian Wigner matrix, i.e. a symmetric matrix whose entries above the diagonal are i.i.d.\ $\normal(0, 1)$ and whose entries on the diagonal are i.i.d.\ $\normal(0, 2)$. 

For the last model, the observations consist of an $n$-node simple graph, which is represented by its adjacency matrix $\bG \in \{0,1\}^{n \times n}$. By convention the diagonal entries are set to zero and the off-diagonal entries are given by $G_{ij} =G_{ji}=1$ if there is an edge between nodes $i$ and $j$ and zero otherwise.  Our results apply to the setting where the entries of the adjacency matrix are drawn independently conditional on $\bW$ according to 
\begin{align}
G_{ij} \sim \text{Bernoulli}\left( \frac{d}{n}  + \sqrt{\frac{d}{n}  \left(1- \frac{d}{n} \right ) } W_{ij} \right) , \quad i < j,\label{eq:Aij}
\end{align} 
where $d \in (0, n)$ parameterizes the expected number of edges.

Notice that both \eqref{eq:sym_model} and \eqref{eq:Aij} consist of elementwise observations of $\bW$ from a fixed output channel. The following result provides a link between the mutual information in these observation models. The proof is given in Appendix~\ref{proof:thm:GtoZ}.

\begin{theorem}[Channel Universality]\label{thm:GtoZ} 
Let $\bW$ be a symmetric $n \times n$ random matrix with bounded entries  $|W_{ij}|  \le B/\sqrt{n}$ and finite support of cardinality $N$.  Let $\bZ$ be drawn according to \eqref{eq:sym_model} with $t = 1$ and $\bG$ be drawn according to \eqref{eq:Aij}. Given any $\delta > 0$, there exists a constant $C(\delta, B)$ such that 
\begin{align*}
\left | I(\bW; \bG) - I(\bW ; \bZ) \right| \le   C(\delta, B)  \left( \frac{n^{3/2} + n  \sqrt{ \log N}   }{ \sqrt{d (n-d)}}  + \frac{n  \log N   }{ d (n-d)}   \right),
\end{align*} 
uniformly for all integers $n > \delta / 2$ and $d \in [\delta, n - \delta]$.  
\end{theorem}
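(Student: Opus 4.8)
The plan is to prove the estimate by a Lindeberg-style interpolation in which the two channels are exchanged one coordinate at a time. Enumerate the $\binom{n}{2}$ upper-triangular pairs and, for $0 \le r \le \binom{n}{2}$, let $\bO^{(r)}$ be the hybrid observation in which the first $r$ pairs are observed through the Gaussian Wigner channel \eqref{eq:sym_model} at $t=1$ and the remaining pairs through the Bernoulli channel \eqref{eq:Aij}. Then $\bO^{(0)} = \bG$ and $\bO^{(\binom{n}{2})} = \bZ$, so that
\[
I(\bW;\bZ) - I(\bW;\bG) = \sum_{r=1}^{\binom{n}{2}} \left[ I(\bW; \bO^{(r)}) - I(\bW;\bO^{(r-1)}) \right],
\]
and it suffices to bound each single-pair swap uniformly and then sum.

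The crux of the matching is a moment computation. For a single pair with signal $w = W_{ij}$, set $q = d/n$ and $\sigma = \sqrt{q(1-q)} = \sqrt{d(n-d)}/n$, and compare the Bernoulli channel $w \mapsto \mathrm{Bernoulli}(q+\sigma w)$ with the Gaussian channel $w \mapsto w + \xi$. The scaling in \eqref{eq:Aij} is chosen precisely so that the score of the Bernoulli family at $w=0$, namely $(G_{ij}-q)/\sigma$, has unit variance, matching the Fisher information of the unit-variance Gaussian channel. Consequently the zeroth-, first-, and second-order terms in the weak-signal expansion (recall $|W_{ij}| \le B/\sqrt{n}$) of the per-pair mutual information agree between the two channels. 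The leading discrepancy is the third cumulant of the Bernoulli score, equal to $(1-2q)/\sigma$, against $0$ for the Gaussian; this term of order $1/\sigma = n/\sqrt{d(n-d)}$ is what ultimately drives the bound.

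For each swap I would freeze the other observations and express the change in mutual information through the conditional law of $W_{ij}$ given $\bO_{\setminus ij}$ (the cavity field), then Taylor-expand the single-channel mutual information in $w$ around $w=0$. By the moment matching above, the terms through second order cancel, leaving a third-order term bounded by $C\, \ex{|W_{ij}|^3}\, |1-2q|/\sigma$ times a bounded derivative of the single-pair log-partition function, plus a higher-order remainder. Using $\ex{|W_{ij}|^3} \le B^3 n^{-3/2}$, a single swap contributes $O\!\left(n^{-1/2}/\sqrt{d(n-d)}\right)$, and summing over the $\binom{n}{2} = O(n^2)$ pairs produces the dominant term $n^{3/2}/\sqrt{d(n-d)}$. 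The remaining $\sqrt{\log N}$ and $\log N$ factors enter when controlling the cavity field and the fourth-order remainder: because $\bW$ has finite support of cardinality $N$, the relevant conditional expectations concentrate with fluctuations of size $O(\sqrt{\log N})$, which feed into the subleading contributions $n\sqrt{\log N}/\sqrt{d(n-d)}$ and $n\log N/(d(n-d))$.

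The hard part will be the uniform control of the remainder in the single-pair expansion. Because the entries of $\bW$ are correlated—they are all generated from the low-rank matrix $\bX R \bX^T$ in \eqref{eq:W}—removing one observation does not decouple $W_{ij}$ from the rest, so the cavity posterior is genuinely high-dimensional; bounding its moments and the associated derivatives of the log-partition function uniformly over the other observations, and tracking precisely how the support cardinality $N$ enters these concentration estimates, is the delicate step. The hypotheses $|W_{ij}| \le B/\sqrt{n}$ and finiteness of the support are exactly what render these per-pair remainders summable and yield the stated non-asymptotic rate.
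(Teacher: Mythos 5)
Your route is genuinely different from the paper's. The paper never swaps observation channels coordinate by coordinate: it fixes a realization $w$ of $\bW$, writes both mutual informations as relative entropies of the form $-\ex{\log \int e^{\cL(u)}\,\dd P_U(u)}$, Taylor-expands the full log-likelihood ratio $\cL(u)$ of the vector of off-diagonal observations to second order in the perturbation $u$, and only then applies Chatterjee's generalized Lindeberg principle to the score \emph{vector} of the already-linearized functional, finishing by comparing the Fisher information matrix $\cI$ to the identity. Your coordinate-wise swap of the channels themselves is the older Korada--Montanari style argument. Your leading-order bookkeeping is correct and consistent with the paper's: the scaling in \eqref{eq:Aij} makes the per-edge Fisher information equal to $1$ in both channels, the mismatch is the third cumulant $(1-2q)/\sigma$ of the Bernoulli score, and $\ex{|W_{ij}|^3}/\sigma$ summed over $O(n^2)$ pairs produces the dominant $n^{3/2}/\sqrt{d(n-d)}$ term.

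That said, there are concrete gaps. First, the entire content of the theorem sits in the per-swap comparison lemma --- a bound on the difference of the two scalar mutual informations that is uniform over all cavity priors supported in $[-B/\sqrt{n}, B/\sqrt{n}]$, with explicit control of the third- and fourth-order remainders of the weak-signal expansion, uniformly in $d \in [\delta, n-\delta]$ --- and you assert rather than prove it, explicitly deferring ``the delicate step.'' Without that lemma the telescoping sum proves nothing. Second, your account of the $\sqrt{\log N}$ and $\log N$ factors is not how they arise: in the paper they come from a sub-gamma maximal inequality over the $N$ support points of the \emph{full matrix} $\bW$, used to control $\sup_u |\cA(u)|$ where $\cA(u) = \tfrac{1}{2}\sum_i u_i^2 A_i$ is the centered quadratic term of the global likelihood expansion; they have nothing to do with concentration of cavity conditional expectations, and in a coordinate-wise route there is no union bound over the support, so $\log N$ would not naturally appear at all. (This is not fatal: the resulting fourth-order remainders of order $n^2/(d(n-d))$ are absorbed by the first term when $d \in [\delta, n-\delta]$, so your route, if completed, would yield a bound that still implies the stated one --- but the mechanism you describe is not the one at work.) Third, you ignore the diagonal: $\bZ$ observes $W_{ii}$ through $Z_{ii}$ while $\bG$ has no diagonal, so the identity $\bO^{(0)}=\bG$, $\bO^{(\binom{n}{2})}=\bZ$ with only $\binom{n}{2}$ swaps does not hold as written; the paper first splits off $I(\{W_{ii}\}; \{Z_{ii}\} \mid \{Z_{ij}\}_{i<j})$ and bounds it by $B^2/4$ before comparing the off-diagonal parts.
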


\begin{remark} The concept of channel universality appeared in the work of Korada and Montanari~\cite{korada:2011} and subsequently developed in the context of community detection \cite{deshpande:2017, caltagirone:2018, lelarge:2018}  and low-rank matrix estimation \cite{lesieur:2017, barbier:2016a, krzakala:2016}. In relation to this work, the contribution of 
Theorem~\ref{thm:GtoZ} is that it holds under more general assumptions on both $\bW$ and the average degree $d$. \end{remark}

Theorem~\ref{thm:GtoZ}  implies that the joint information in $(\bG,\bY)$ about $\bX$ is asymptotically equivalent to the joint information in $(\bY,\bZ)$ about $\bX$.

\begin{cor}\label{cor:GtoZ} Let $(\bX, \bG)$ be drawn according to the degree-balance SBM with parameters $(n, d,p, R)$ where $p$ and $R$ are fixed and $d$ scales with $n$ such that both $d$ and $(n-d)$ tend to infinity. Let $\bY$ be drawn according to \eqref{eq:obs1}  and let  $\bZ$ be drawn according to  \eqref{eq:sym_model}  with $t = 1$ and  $\bW = n^{-1/2} \bX R \bX^T$. Then, 
\begin{align*}
\lim_{n \to \infty} \frac{1}{ n}  \left | I(\bX;  \bG, \bY)  -  I(\bX;  \bY, \bZ)  \right|   = 0 
\end{align*}
\end{cor}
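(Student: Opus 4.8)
The plan is to use the chain rule for mutual information to isolate the component that distinguishes the two observation models, and then to apply the channel-universality bound of Theorem~\ref{thm:GtoZ} conditionally on the side information $\bY$. First I would write
\begin{align*}
I(\bX; \bG, \bY) - I(\bX; \bY, \bZ) = I(\bX; \bG \mid \bY) - I(\bX; \bZ \mid \bY),
\end{align*}
since each side of the original difference contains the common term $I(\bX;\bY)$, which cancels. This reduces the claim to showing that $\tfrac{1}{n}\,\bigl| I(\bX; \bG \mid \bY) - I(\bX; \bZ \mid \bY)\bigr|$ tends to zero.

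The next step is to replace $\bX$ by $\bW = n^{-1/2}\bX R \bX^T$ in both conditional terms. By construction $\bW$ is a deterministic function of $\bX$, and the noises generating $\bY$, $\bG$, and $\bZ$ are mutually independent given $\bX$, with $\bG$ and $\bZ$ depending on $\bX$ only through $\bW$. Hence $\bG \perp (\bX, \bY) \mid \bW$ and $\bZ \perp (\bX, \bY) \mid \bW$, so that conditionally on any value $\bY = \by$ the chain $\bX - \bW - \bG$ is Markov. Writing $I(\bX;\bG\mid\bY) = I(\bW;\bG\mid\bY) + I(\bX;\bG\mid\bW,\bY)$ and using that the last term vanishes by the conditional independence just noted (and likewise for $\bZ$) gives the exact identities
\begin{align*}
I(\bX; \bG \mid \bY) = I(\bW; \bG \mid \bY), \qquad I(\bX; \bZ \mid \bY) = I(\bW; \bZ \mid \bY).
\end{align*}
It therefore suffices to bound $\tfrac{1}{n}\,\bigl|I(\bW;\bG\mid\bY) - I(\bW;\bZ\mid\bY)\bigr|$.

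For each fixed $\by$, under the conditional law $P_{\bW\mid\bY=\by}$ the matrix $\bZ$ is still generated from $\bW$ by the Wigner channel \eqref{eq:sym_model} with $t=1$ and $\bG$ by the Bernoulli channel \eqref{eq:Aij}, exactly as in Theorem~\ref{thm:GtoZ} (conditioning on $\bY$ reweights the prior on $\bW$ but leaves the channels unchanged, by $\bG,\bZ \perp \bY \mid \bW$). The hypotheses of the theorem hold uniformly in $\by$: the entries satisfy $|W_{ij}| = n^{-1/2}|X_i^T R X_j| \le B/\sqrt{n}$ with $B = \max_{a,b}|\mu_a^T R \mu_b|$ a constant depending only on $(p,R)$, and $\bW$ has support of cardinality $N \le k^n$, so $\log N \le n\log k$. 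Since $C(\delta,B)$ and the right-hand side of the bound do not depend on the prior on $\bW$, applying the theorem to $P_{\bW\mid\bY=\by}$ and taking the expectation over $\bY$ yields, for all large $n$,
\begin{align*}
\bigl|I(\bW;\bG\mid\bY) - I(\bW;\bZ\mid\bY)\bigr| \le C(\delta, B)\left( \frac{n^{3/2} + n\sqrt{n\log k}}{\sqrt{d(n-d)}} + \frac{n^2\log k}{d(n-d)} \right).
\end{align*}

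Finally I would verify that this bound, divided by $n$, vanishes. Using $d(n-d) \ge \tfrac{n}{2}\min(d, n-d)$, the three resulting terms are bounded by constant multiples (involving $\sqrt{\log k}$ or $\log k$) of $\min(d,n-d)^{-1/2}$ or $\min(d,n-d)^{-1}$, and hence tend to zero because Assumption~\ref{assump:lin_deg} forces $\min(d,n-d)\to\infty$. I expect the only delicate point to be the second paragraph: one must check carefully that $\bG \perp \bY \mid \bW$ and $\bZ \perp \bY \mid \bW$, so that conditioning on $\bY$ neither alters the two channels nor the data-processing identity relating $\bX$ and $\bW$, uniformly over $\by$. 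Everything else is a routine application of Theorem~\ref{thm:GtoZ} together with the crude support bound $N \le k^n$.
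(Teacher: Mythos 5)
Your proposal is correct and follows essentially the same route as the paper: the chain rule together with the Markov structure of $(\bW,\bX,\bY,\bZ)$ reduces the difference to $I(\bW;\bG\mid\bY)-I(\bW;\bZ\mid\bY)$, Theorem~\ref{thm:GtoZ} is applied conditionally on each realization $\by$ with $N=k^n$ and $|W_{ij}|\le B/\sqrt{n}$, and Jensen's inequality plus Assumption~\ref{assump:lin_deg} finishes the argument. Your explicit verification of the conditional independences $\bG\perp(\bX,\bY)\mid\bW$ and $\bZ\perp(\bX,\bY)\mid\bW$ just spells out what the paper leaves implicit.
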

\begin{proof}
Combining the chain rule for mutual information with the Markov structure in 
 $(\bW,  \bX , \bY, \bZ)$ leads to
\begin{align*}
I(\bX;  \bG, \bY)  -  I(\bX;  \bY, \bZ)  & = I(\bW;  \bG \mid  \bY) - I(\bW;  \bZ \mid  \bY).
\end{align*}
By assumption, $\bX$ has finite support of cardinality $ k^n$ and bounded entries. This implies that $\bW$ has  finite support of cardinality $N = k^n$ and bounded entries $|W_{ij}| \le B/\sqrt{n}$ where the constant $B$ depends only on $(p ,R)$ .  
For every realization $\by$ of $\bY$,  Theorem~\ref{thm:GtoZ}  implies that there is a constant $C(p,R)$ such that 
\[ 
\frac{1}{n} |I(\bW; \bG \mid \bY = \by)  - I(\bW; \bZ  \mid \bY = \by) | \le C(p, R)  \sqrt{ \frac{1}{d} + \frac{1}{ n - d} } ,
\] 
for all $n$ and $d$ sufficiently large.  The stated result then follows from Jensen's inequality and the assumptions on $n$ and $d$.   
\end{proof}

\subsection{Interpolation via Mutual Information} 

Theorem~\ref{thm:GtoZ} provides a link between  community detection and symmetric matrix estimation. The next step in our analysis is to study an interpolating function that transitions smoothly from the symmetric matrix  model to the signal-plus-noise model.  We note that a number of approaches have been developed in the statistical physics literature, including Guerra's interpolation method~\cite{guerra:2003} and the adaptive interpolation method \cite{barbier:2018}. In this paper we consider an approach inspired by the work of Reeves~\cite{reeves:2017e}, which leverages the functional properties of mutual information in Gaussian channels.

The central object of interest is the mutual information functions $I_{\bX, \bW}: \mathbb{S}_+^\ell \times [0, \infty) \to \reals$ defined by
\begin{align}
I_{\bX, \bW}(S,t) & \triangleq \frac{1}{n}   I(\bX, \bW ; \bY, \bZ). \label{eq:I_XW}
\end{align}
This function has a number of useful properties. Combining the chain rule for mutual information with the Markov structure in  $(\bW, \bX , \bY, \bZ)$ allows us to write
\begin{align*}
I(\bX, \bW ; \bY, \bZ ) = I(\bX; \bY)  + I(\bW; \bZ \mid \bY ) \\
 = I(\bW ; \bZ ) + I(\bX ; \bY \mid \bZ). 
\end{align*}
Hence, the special cases $t=0$ and $S=0$ are given by
\begin{align*}
I_{\bX, \bW }(S, 0 ) & = I_{\bX }(S)  \triangleq 
\frac{1}{n}  I(\bX; \bY )\\
I_{\bX, \bW}(0, t ) & = I_{\bW}(t)  \triangleq 
\frac{1}{n}  I(\bW; \bZ).
\end{align*}
In this way, $I_{\bX, \bW}(S,t)$ provides a bridge between the symmetric matrix estimation problem, with or without side information, and the signal-plus-noise problem.  Notice that if the rows of $\bX$ are independent, then  $I(\bX; \bY) = \frac{1}{n} \sum_{i=1}^n I(X_i ; Y_i)$.  In particular,  if the rows $\bX$ are drawn i.i.d.\  from a distribution $P_X$ on $\reals^d$  (as is assumed in Theorem~\ref{thm:cF0}) then $I_{\bX}(S)$ is equal to the mutual information function $I_X(S)$ introduced in Section~\ref{sec:signal_plus_noise}.

It was previously shown that $I_{\bX, \bW}(S,t)$ possesses several desirable properties: it is concave and twice differentiable in the pair $(S,t)$~\cite[Lemma~4]{reeves:2018a}. 
Let the partial gradients with respect to the first and second arguments be denoted by  $I^{(1)}_{\bX,\bW} : \psd^\ell \times [0, \infty) \to \psd^\ell$ and $I^{(2)}_{\bX,\bW}: \psd^\ell \times [0, \infty) \to \reals$, respectively. By the matrix I-MMSE relation, it follows that: 
\begin{align}
 I^{(1)}_{\bX, \bW}(S, t) & = \frac{1}{2} \MMSE(\bX \mid \bY, \bZ) \label{eq:IXW_1}\\
  I^{(2)}_{\bX, \bW}(S, t) & = \frac{1}{4 n} 
   \ex{ \| \bW - \ex{ \bW \mid \bY, \bZ}\|_F^2}.
\end{align}
The details of this derivation are given in Appendix~\ref{sec:I_MMSE_sym}.

The next result provides a non-asymptotic upper bound on  $I_{\bX, \bW}(S,t)$ in terms of the signal-plus-noise model. Remarkably, the only restriction on $\bX$ is that it has finite fourth moments.  The proof is given in Section~\ref{sec:thm:I_XW_UB_proof}.

\begin{theorem}\label{thm:I_XW_UB}
Let $\bX \in \mathbb{R}^{n \times \ell}$ be a random matrix with finite fourth moments and let $\bW = \frac{1}{\sqrt{n}} \bX R \bX^T$ where $R \in \mathbb{S}^\ell$ is invertible.  For all $S \in \mathbb{S}^\ell_+$ and  $t \in (0, \infty)$, the mutual information function defined in \eqref{eq:I_XW}  satisfies
\begin{align*}
I_{\bX, \bW}(S, t)  & \le \inf_{\Delta \in \psd^d}  \left\{ I_{\bX}(S + t \Delta )  +\frac{t}{4}  \gtr\left(  \left(  \Gamma  R -  R^{-1} \Delta \right)^2  \right) \right\} \\
& \quad 
+ \frac{t}{ 4n^2} \ex{  \left \| R  \bX^T \bX -   R \ex{ \bX^T \bX} \right\|_F^2},
\end{align*}
where $\Gamma = \frac{1}{n} \ex{ \bX^T \bX}$. 
\end{theorem}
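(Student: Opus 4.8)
The plan is to prove the bound by a Guerra-type interpolation between the symmetric-matrix channel \eqref{eq:sym_model} and the signal-plus-noise channel \eqref{eq:obs1}, differentiating the interpolating mutual information and controlling the derivative through the matrix I-MMSE identities in \eqref{eq:IXW_1}. Fix $\Delta\in\psd^\ell$ and, for $s\in[0,1]$, let $\bY_s$ be the signal-plus-noise observation of $\bX$ at SNR matrix $S+st\Delta$ and let $\bZ_s$ be the symmetric observation of $\bW$ run for time $(1-s)t$. Set $\phi(s)=\tfrac1n I(\bX;\bY_s,\bZ_s)$. Since $\bW$ is a deterministic function of $\bX$, the endpoints are $\phi(0)=I_{\bX,\bW}(S,t)$ and $\phi(1)=I_{\bX}(S+t\Delta)$, so by the fundamental theorem of calculus $I_{\bX,\bW}(S,t)=I_{\bX}(S+t\Delta)-\int_0^1\phi'(s)\,\dd s$. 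It then suffices to show the pointwise bound $\phi'(s)\ge-\tfrac{t}{4}\gtr((\Gamma R-R^{-1}\Delta)^2)-\tfrac{t}{4}E$, where $E=\tfrac1{n^2}\ex{\|R\bX^T\bX-R\ex{\bX^T\bX}\|_F^2}$; integrating over $s$ and taking the infimum over $\Delta\in\psd^\ell$ gives the stated inequality.

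Differentiating $\phi$ and applying the matrix I-MMSE relations (the gradient in the first argument is $\tfrac12\MMSE(\bX\mid\bY_s,\bZ_s)$ and the derivative in the symmetric-channel time is $\tfrac1{4n}\ex{\|\bW-\ex{\bW\mid\bY_s,\bZ_s}\|_F^2}$) yields, with $M(s)=\MMSE(\bX\mid\bY_s,\bZ_s)$,
\[
\phi'(s)=\frac t2\gtr\!\left(M(s)\Delta\right)-\frac{t}{4}\cdot\frac1n\ex{\left\|\bW-\ex{\bW\mid\bY_s,\bZ_s}\right\|_F^2}.
\]
I would then split the $\bW$-MMSE as $\tfrac1n\ex{\|\bW\|_F^2}-\tfrac1n\ex{\|\widehat\bW\|_F^2}$ with $\widehat\bW=\ex{\bW\mid\bY_s,\bZ_s}$. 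Writing $\widehat\Gamma=\tfrac1n\bX^T\bX$ and using $\|\bW\|_F^2=n\gtr(R\widehat\Gamma R\widehat\Gamma)$, the centering $\ex{\widehat\Gamma}=\Gamma$ together with the elementary trace inequality $\gtr(B^2)\le\|B\|_F^2$ applied to $B=R(\widehat\Gamma-\Gamma)$ gives $\tfrac1n\ex{\|\bW\|_F^2}\le\gtr((R\Gamma)^2)+E$. This is exactly where the error term $E$, and the finite-fourth-moment hypothesis on $\bX$, enter.

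The heart of the argument is a completion of the square that lower bounds the explained variance $\tfrac1n\ex{\|\widehat\bW\|_F^2}$ of the quadratic observable $\bW$ by a $\Delta$-weighted functional of the linear MMSE matrix. Introducing two conditionally independent posterior replicas $\bX^{(1)},\bX^{(2)}$ and the Bayes (Nishimori) identity, one has $\tfrac1n\ex{\|\widehat\bW\|_F^2}=\ex{\gtr(R\bq^T R\bq)}$ for the overlap $\bq=\tfrac1n\bX^{(1)T}\bX^{(2)}$, whose mean satisfies $\ex{\bq}=\Gamma-M(s)$ by the law of total variance. Substituting the expansion $\gtr((\Gamma R-R^{-1}\Delta)^2)=\gtr((R\Gamma)^2)-2\gtr(\Gamma\Delta)+\gtr((R^{-1}\Delta)^2)$, the target pointwise bound on $\phi'(s)$ collapses to the single inequality
\[
\ex{\gtr\!\left(R\bq^T R\bq\right)}\ \ge\ 2\gtr\!\left((\Gamma-M(s))\Delta\right)-\gtr\!\left((R^{-1}\Delta)^2\right),
\]
which I would attack by the projection estimate $\tfrac1n\ex{\|\widehat\bW\|_F^2}\ge\tfrac2n\ex{\langle\bW,\bB\rangle}-\tfrac1n\ex{\|\bB\|_F^2}$ (valid for any observation-measurable $\bB$, using $\ex{\langle\widehat\bW,\bB\rangle}=\ex{\langle\bW,\bB\rangle}$) with $\bB$ tuned to the quadratic form $R\otimes R$ against $R^{-1}\Delta$. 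The restriction $\Delta\in\psd^\ell$ keeps the conjugate term $\gtr((R^{-1}\Delta)^2)=\gtr((\Delta^{1/2}R^{-1}\Delta^{1/2})^2)\ge0$ nonnegative, which is what makes the infimum well posed.

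The step I expect to be the main obstacle is precisely this last inequality. Because $\bW$ is quadratic in $\bX$, its explained variance is governed by the squared overlap $\gtr(R\bq^T R\bq)$, a quadratic form in $\bq$ with the operator $R\otimes R$ that is \emph{indefinite} whenever $R$ has mixed-sign eigenvalues (the case allowed here, since $R$ is only assumed invertible). Hence the completion of the square is not a sum of squares and termwise positivity fails; the inequality must be established in expectation, exploiting the special structure of the Bayes-optimal overlap — that $\ex{\bq}=\Gamma-M(s)$ is symmetric positive semidefinite and that the overlap fluctuations are tied down by the Nishimori identity — together with the constraint $\Delta\succeq0$. Carrying this out while preserving a fully non-asymptotic bound carrying only the single error term $E$ is the crux, and is the sense in which the argument is a novel variation of the Guerra interpolation built on the matrix I-MMSE relation.
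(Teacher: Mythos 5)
Your interpolation setup is sound as far as it goes: the endpoint identities, the derivative formula via the matrix I-MMSE relations, the bound $\frac{1}{n}\ex{\|\bW\|_F^2}\le\gtr((R\Gamma)^2)+E$ (which is indeed exactly where the error term and the fourth-moment hypothesis enter), and the algebraic reduction of the pointwise derivative bound to
\begin{align*}
\ex{\gtr\left(R\bq R\bq^T\right)} \ \ge\ 2\gtr\left((\Gamma - M(s))\Delta\right) - \gtr\left((R^{-1}\Delta)^2\right)
\end{align*}
are all correct. But this last inequality --- which you yourself flag as the crux --- is precisely the step you do not prove, and it is where the entire argument lives. For $R\succ 0$ it is the pointwise completion of the square $\|R^{1/2}\bq R^{1/2}-R^{-1/2}\Delta R^{-1/2}\|_F^2\ge 0$; for indefinite $R$ (which the theorem permits, since $R$ is only assumed invertible) the quadratic form $\bq\mapsto\gtr(R\bq R\bq^T)$ is indefinite, and neither the pointwise argument nor your proposed projection estimate goes through: the bound $\frac{1}{n}\ex{\|\widehat{\bW}\|_F^2}\ge\frac{2}{n}\ex{\langle\bW,\bB\rangle}-\frac{1}{n}\ex{\|\bB\|_F^2}$ with $\bB$ quadratic in the posterior mean produces a penalty quadratic in $\Delta$ weighted by the \emph{random} posterior Gram matrix, not the deterministic $\gtr((R^{-1}\Delta)^2)$, and I do not see how to tune $\bB$ to close that gap. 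So the proposal is a plan whose central inequality is missing, and it is not even clear that the pointwise-in-$s$ bound you need is true along your linear path when $R$ has mixed-sign eigenvalues.

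The paper's proof avoids this obstacle by working in the Legendre dual rather than the primal. Instead of fixing $\Delta$ and following a linear SNR path, it fixes the dual variable $U$ (associated with the MMSE matrix) in the conjugate $J_{\bX,\bW}(U,t)=\sup_S\{I_{\bX,\bW}(S,t)-\frac{1}{2}\gtr(SU)\}$ and uses the envelope theorem to write $J_{\bX,\bW}(U,t)-J_{\bX}(U)=\int_0^t I^{(2)}_{\bX,\bW}(S^*(U,\tau),\tau)\,\dd\tau$. The only inequality required is Lemma~\ref{lem:grad_inq}, $I^{(2)}_{\bX,\bW}\le\frac{1}{4}g(2I^{(1)}_{\bX,\bW})$, whose proof exhibits the difference as a single posterior-variance term with no $\Delta$ in sight; the KKT condition $I^{(1)}_{\bX,\bW}(S^*(U,t),t)\preceq\frac{1}{2}U$ together with the Loewner-monotonicity of $g$ then controls the integrand. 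The variable $\Delta$ reappears only at the very end through the conjugacy $g(U)=\inf_{\Delta\succeq 0}\{h(\Delta)+2\gtr(\Delta U)\}$, where the optimizer $\Delta=R(\Gamma-U)R$ is automatically positive semidefinite --- so the completion of the square is performed on deterministic matrices, where it is exact, rather than on the random overlap, where it fails for indefinite $R$. To salvage your route you would need either to restrict to definite $R$ or to supply a genuinely new argument for the overlap inequality; otherwise the dual formulation is the way to go.
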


If the rows of $\bX$ are sufficiently uncorrelated then the term $\frac{1}{n^2} \ex{  \left \| R  \bX^T \bX -   R \ex{ \bX^T \bX} \right\|_F^2}$ converges to zero in the large-$n$ limit. The case of i.i.d.\ rows is summarized as follows: 

\begin{cor}\label{cor:I_XW_UB}
Let $\bX \in \mathbb{R}^{n \times \ell}$ be a random matrix whose rows are drawn i.i.d.\ from a distribution $P_{X}$ on $\reals^d$ with finite forth moments and let $\bW = \frac{1}{\sqrt{n}} \bX R \bX^T$ where $R \in \mathbb{S}^\ell$ is invertible.  For all $S \in \mathbb{S}^\ell_+$ and  $t \in (0, \infty)$, the mutual information function defined in \eqref{eq:I_XW}  satisfies
\begin{align*}
\limsup_{n \to \infty} I_{\bX, \bW}(S, t)  & \le \inf_{\Delta \in \psd^d}  \left\{ I_{\bX}(S + t \Delta )  +\frac{t}{4}  \gtr\left(  \left(  \ex{ X X^T}  R -  R^{-1} \Delta \right)^2  \right) \right\} .
\end{align*}
\end{cor}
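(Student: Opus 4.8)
The plan is to obtain Corollary~\ref{cor:I_XW_UB} as a direct specialization of Theorem~\ref{thm:I_XW_UB}: under i.i.d.\ rows the infimum on the right-hand side becomes independent of $n$, and the additional residual term present in Theorem~\ref{thm:I_XW_UB} vanishes as $n\to\infty$. First I would record the two simplifications that follow from i.i.d.\ rows. Since $\bX^T\bX = \sum_{i=1}^n X_iX_i^T$ with the $X_i$ i.i.d.\ with law $P_X$, the normalized second-moment matrix is $\Gamma = \frac{1}{n}\ex{\bX^T\bX} = \ex{XX^T}$, which is independent of $n$. Moreover, as noted in the discussion preceding \eqref{eq:I_XW}, for i.i.d.\ rows $I(\bX;\bY) = \sum_i I(X_i;Y_i) = n\,I(X;Y)$, so $I_{\bX}(S+t\Delta) = I_X(S+t\Delta)$ is also independent of $n$. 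Hence the entire infimum term in Theorem~\ref{thm:I_XW_UB} is a fixed quantity $C(S,t)$, and the only object that still depends on $n$ is the residual.

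The one genuinely nontrivial step is to show that
\begin{align*}
\frac{1}{n^2}\,\ex{\left\| R\bX^T\bX - R\,\ex{\bX^T\bX}\right\|_F^2}\longrightarrow 0 .
\end{align*}
Setting $B_i = X_iX_i^T - \ex{X_iX_i^T}$, the matrix inside the norm equals $R\sum_{i=1}^n B_i$, and the bound $\|RA\|_F \le \|R\|_{\mathrm{op}}\|A\|_F$ gives $\ex{\|R\sum_i B_i\|_F^2}\le \|R\|_{\mathrm{op}}^2\,\ex{\|\sum_i B_i\|_F^2}$. Because the $B_i$ are independent and mean-zero, the cross terms vanish in expectation and $\ex{\|\sum_i B_i\|_F^2} = \sum_i\ex{\|B_i\|_F^2} = n\,\ex{\|B_1\|_F^2}$. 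Finally $\ex{\|B_1\|_F^2}\le \ex{\|X_1X_1^T\|_F^2} = \ex{\|X_1\|^4}$, which is finite precisely by the fourth-moment hypothesis. Combining these estimates bounds the residual by $\|R\|_{\mathrm{op}}^2\,\ex{\|X\|^4}/n$, which is $O(1/n)$.

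To finish, I would fix $S$ and $t$, apply Theorem~\ref{thm:I_XW_UB} for each $n$ to get $I_{\bX,\bW}(S,t)\le C(S,t) + O(1/n)$, and take $\limsup_{n\to\infty}$; since $C(S,t)$ does not depend on $n$ and the residual tends to zero, the limit superior is bounded by $C(S,t) = \inf_{\Delta\in\psd^\ell}\{I_X(S+t\Delta) + \frac{t}{4}\gtr((\ex{XX^T}R - R^{-1}\Delta)^2)\}$, which is the claimed bound. I expect the second-moment estimate on the residual to be the only step requiring care; everything else is bookkeeping, and the finite-fourth-moment assumption is exactly what makes that estimate hold uniformly in $n$.
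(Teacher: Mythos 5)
Your proposal is correct and follows essentially the same route as the paper: specialize Theorem~\ref{thm:I_XW_UB}, note that under i.i.d.\ rows $\Gamma=\ex{XX^T}$ and $I_{\bX}=I_X$ are $n$-independent, and kill the residual term by observing that $R\bX^T\bX$ is a sum of $n$ i.i.d.\ matrices so its centered Frobenius second moment scales like $n$, i.e.\ the residual is $O(1/n)$ under the fourth-moment hypothesis. Your explicit bound $\|R\|_{\mathrm{op}}^2\,\ex{\|X\|^4}/n$ just makes quantitative the one-line variance computation the paper states directly.
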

\begin{proof}
Noting that $R \bX^T \bX = \sum_{i=1}^n R X_i X_i^T$ is the sum of  $n$ i.i.d.\ matrices leads to 
\[
\frac{1}{n^2} \ex{  \left \| R  \bX^T \bX -   R \ex{ \bX^T \bX} \right\|_F^2} = \frac{1}{n} \ex{  \left \| R  X X^T -   R \ex{ X X^T} \right\|_F^2},\]
which converges to zero as $n$ increases to infinity. 
\end{proof}

Combining Corollary~\ref{cor:GtoZ}  and Corollary~\ref{cor:I_XW_UB} leads directly to an upper bound on the mutual information in the community detection problem (Theorem~\ref{thm:cF0}). The details of the proof are given in Appendix~\ref{sec:thm:cF0_proof}.  To show that this bound is tight requires significantly  more work.  In this direction, we build upon the work of Lelarge and  Miolane~\cite[Theorem~12]{lelarge:2018}, who give an explicit characterization of the large-$n$ limit for the matrix estimation problem in the setting where $S= 0$. Although their result is stated originally for the special case where $R$ is the identity matrix, it extends to the case described below, where $R$ is definite. For completeness a detailed mapping between their statement of this result and the one used in this paper is provided in Appendix~\ref{sec:thm:matrix_factorizatiion_proof}.

\begin{theorem}[{Lelarge and Miolane~\cite[Theorem~12]{lelarge:2018}}]\label{thm:matrix_factorizatiion} 
Let $\bX \in \mathbb{R}^{n \times \ell}$ be a random matrix whose rows are drawn i.i.d.\ from a distribution $P_{X}$ on $\reals^\ell$ with finite second moments and let $\bW = \frac{1}{\sqrt{n}} \bX^T R \bX$ where $R$ is either positive definite or negative definite.  For all  $t \in(0, \infty)$, the mutual information function defined in \eqref{eq:I_XW}  satisfies
\begin{align*}
\lim_{n \to \infty} I_{\bX, \bW}(0, t)  & = \inf_{\Delta \succeq 0}  \left\{ I_{\bX}( t \Delta )  +\frac{t}{4}  \gtr\left(  \left(  \ex{ X X^T}   R -  R^{-1} \Delta \right)^2  \right) \right\} .
\end{align*}
\end{theorem}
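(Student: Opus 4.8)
The plan is to reduce the statement to the special case $R = I$ already established by Lelarge and Miolane~\cite[Theorem~12]{lelarge:2018}, via an explicit change of variables that absorbs $R$ into the signal distribution. Assume first that $R \succ 0$ and let $R^{1/2}$ be its symmetric positive definite square root. I would introduce the whitened matrix $\bX' = \bX R^{1/2}$, whose rows $X'_i = R^{1/2} X_i$ are i.i.d.\ from the pushforward law of $X' = R^{1/2} X$. The key observation is that the quadratic form is invariant under this reparametrization, since by \eqref{eq:W} we have $\bW = \frac{1}{\sqrt{n}} \bX R \bX^T = \frac{1}{\sqrt{n}} \bX' (\bX')^T$. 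Because $\bZ = \sqrt{t}\,\bW + \bm{\xi}$ depends on $\bX$ only through $\bW$, the quantity $I_{\bX, \bW}(0,t) = \frac{1}{n} I(\bW; \bZ)$ is literally unchanged when $(\bX, R)$ is replaced by $(\bX', I)$. Applying \cite[Theorem~12]{lelarge:2018} to $\bX'$ with identity coupling then gives
\[
\lim_{n \to \infty} I_{\bX, \bW}(0, t) = \inf_{\Delta' \succeq 0} \left\{ I_{X'}(t \Delta') + \frac{t}{4} \gtr\left( \left( \ex{X'(X')^T} - \Delta' \right)^2 \right) \right\}.
\]

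The next step is to rewrite the right-hand side in terms of the original quantities through the bijection $\Delta' = R^{-1/2} \Delta R^{-1/2}$ of $\psd^\ell$. Two elementary identities do the work. First, since a Gaussian channel is determined by its Fisher-information matrix, the channel $Y' = (S')^{1/2} X' + N' = (S')^{1/2} R^{1/2} X + N'$ carries the same information about $X$ as the canonical channel with signal-to-noise matrix $R^{1/2} S' R^{1/2}$, i.e.\ $I_{X'}(S') = I_X(R^{1/2} S' R^{1/2})$; taking $S' = t \Delta'$ yields $I_{X'}(t \Delta') = I_X(t \Delta)$, which matches the first term. Second, writing $\Sigma = \ex{X X^T}$ so that $\ex{X'(X')^T} = R^{1/2} \Sigma R^{1/2}$, the factorization
\[
R^{1/2}\Sigma R^{1/2} - R^{-1/2}\Delta R^{-1/2} = R^{-1/2}\left( R\Sigma R - \Delta\right) R^{-1/2}
\]
combined with the cyclic property of the trace gives $\gtr\left( (\ex{X'(X')^T} - \Delta')^2 \right) = \gtr\left( (\Sigma R - R^{-1}\Delta)^2 \right)$. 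Since $\Delta \mapsto \Delta'$ is a bijection of $\psd^\ell$, the two infima coincide and the claimed formula follows.

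For the negative definite case I would reduce to $\widetilde R = -R \succ 0$. Because $\bW = -\widetilde{\bW}$ with $\widetilde{\bW} = \frac{1}{\sqrt{n}} \bX \widetilde R \bX^T$, and the Wigner noise $\bm{\xi}$ in \eqref{eq:sym_model} is symmetric in distribution under negation, observing $\bZ = \sqrt{t}\,\bW + \bm{\xi}$ is informationally equivalent to observing $-\bZ = \sqrt{t}\,\widetilde{\bW} - \bm{\xi}$; hence $I_{\bX, \bW}(0,t) = I_{\bX, \widetilde{\bW}}(0,t)$. The positive definite case applied to $\widetilde R$ yields the variational formula with $\widetilde R$ in place of $R$, and since $\Sigma \widetilde R - \widetilde R^{-1}\Delta = -(\Sigma R - R^{-1}\Delta)$ the quadratic trace term is invariant under $R \mapsto -R$, so the two expressions agree. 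I expect the genuine difficulty to lie not in this reduction but in the bookkeeping needed to match the present normalization of the observation model and of $I_{\bX,\bW}$ in \eqref{eq:I_XW} to the conventions of \cite{lelarge:2018}---in particular aligning their signal-to-noise parameter, their scaling of $\bW$, and their limiting free-energy functional with \eqref{eq:W} and \eqref{eq:sym_model}---which is precisely the content deferred to the appendix.
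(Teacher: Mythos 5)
Your proposal follows essentially the same route as the paper's derivation in Appendix~\ref{sec:thm:matrix_factorizatiion_proof}: reduce to the identity coupling via $\tilde{\bX} = \bX R^{1/2}$ (which leaves $\bW$ and hence $I(\bW;\bZ)$ unchanged), invoke Lelarge--Miolane for $R=I$, and translate back using $I_{R^{1/2}X}(S') = I_X(R^{1/2} S' R^{1/2})$ together with the bijection $\Delta' = R^{-1/2}\Delta R^{-1/2}$ and the cyclic trace identity. The one piece you defer---verifying that the Lelarge--Miolane free-energy functional $\tilde{\cF}_t(S)$ equals $I_X(tS) + \frac{t}{4}\gtr\left(\left(\ex{XX^T}-S\right)^2\right)$---is exactly the bookkeeping the appendix carries out, while your handling of the negative-definite case (via the sign-symmetry of the Wigner noise) is if anything more explicit than the paper's.
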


\subsection{Proof of Theorem~\ref{thm:I_XW_UB}} \label{sec:thm:I_XW_UB_proof}

The first step in the proof is given by the the following lemma, which establishes a functional  relationship between the first and second partial gradients of $I_{\bX, \bW}(S,t)$. 

\begin{lemma}\label{lem:grad_inq} 
 The gradients of the function $I_{\bX, \bW}(S,t)$ defined in \eqref{eq:I_XW}  satisfy 
\begin{align}
  I^{(2)}_{\bX, \bW}(S, t) \le \frac{1}{4}  g\left( 2 I^{(1)}_{\bX, \bW}(S, t) \right),
 \end{align}
 where $g: \mathbb{S}^\ell_+ \to \reals$ is defined according to
\begin{align}
g(U) & = \frac{1}{n^2} \gtr\left( \ex{  \left( R \bX^T \bX  \right)^2} \right)   -  \gtr\left(  \left(  R \left(\Gamma  -   U \right) \right)^2  \right). \label{eq:g} 
\end{align}
with $\Gamma = \frac{1}{n} \ex{ \bX \bX^T}$. 
\end{lemma}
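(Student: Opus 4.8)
The plan is to reduce the claimed gradient inequality to a single second-moment inequality about the estimator of $\bW$, and then to prove that inequality by comparing the optimal (conditional-mean) estimator against the ``plug-in'' estimator built from $\hat{\bX} = \ex{\bX \mid \bY, \bZ}$. First I would rewrite both sides via the I-MMSE identities. Setting $U = 2 I^{(1)}_{\bX,\bW}(S,t) = \MMSE(\bX\mid\bY,\bZ)$ from \eqref{eq:IXW_1}, the second identity gives $4 I^{(2)}_{\bX,\bW}(S,t) = \frac1n\ex{\|\bW - \ex{\bW\mid\bY,\bZ}\|_F^2}$, and orthogonality of the conditional mean splits this as $\frac1n\ex{\|\bW\|_F^2} - \frac1n\ex{\|\ex{\bW\mid\bY,\bZ}\|_F^2}$. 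Using $\bW = \frac1{\sqrt n}\bX R\bX^T$ together with the cyclic property of the trace, $\|\bX R\bX^T\|_F^2 = \gtr((R\bX^T\bX)^2)$, so $\frac1n\ex{\|\bW\|_F^2} = \frac1{n^2}\gtr(\ex{(R\bX^T\bX)^2})$, which is exactly the first term of $g(U)$. Hence the lemma is equivalent to the lower bound $\frac1n\ex{\|\ex{\bW\mid\bY,\bZ}\|_F^2}\ge\gtr\left((R(\Gamma-U))^2\right)$.

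Next I would simplify the right-hand side. Applying the law of total variance row by row, $\ex{X_iX_i^T} = \ex{\cov(X_i\mid\bY,\bZ)} + \ex{\hat{X}_i\hat{X}_i^T}$, so that $\Gamma - U = \frac1n\ex{\bX^T\bX} - \MMSE(\bX\mid\bY,\bZ) = \frac1n\ex{\hat{\bX}^T\hat{\bX}}$. Writing $\hat{\bP} = \ex{\bX R\bX^T \mid \bY,\bZ}$ and clearing a factor of $n$, the inequality to prove becomes
\begin{align*}
\ex{\|\hat{\bP}\|_F^2} \ge \gtr\left(\left(R\,\ex{\hat{\bX}^T\hat{\bX}}\right)^2\right).
\end{align*}

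The core step is to expand $\hat{\bP}$ around the plug-in estimator. Since $\hat{\bX}$ is $(\bY,\bZ)$-measurable and $\ex{\bX-\hat{\bX}\mid\bY,\bZ}=0$, the cross terms vanish and
\begin{align*}
\hat{\bP} = \hat{\bX} R \hat{\bX}^T + \bD, \qquad \bD \triangleq \ex{(\bX-\hat{\bX}) R (\bX-\hat{\bX})^T \mid \bY, \bZ}.
\end{align*}
When $R\succeq0$ the matrix $\bD$ is positive semidefinite, as is $\hat{\bX} R\hat{\bX}^T$; expanding $\|\hat{\bP}\|_F^2 = \|\hat{\bX}R\hat{\bX}^T\|_F^2 + 2\gtr(\hat{\bX}R\hat{\bX}^T\bD) + \|\bD\|_F^2$ and discarding the last two terms (the middle being a trace of a product of positive semidefinite matrices, hence nonnegative) gives $\ex{\|\hat{\bP}\|_F^2}\ge\ex{\gtr((R\hat{\bX}^T\hat{\bX})^2)}$. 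Finally, for $R\succeq0$ the map $M\mapsto\gtr((RM)^2) = \|R^{1/2}MR^{1/2}\|_F^2$ is convex, so Jensen's inequality yields $\ex{\gtr((R\hat{\bX}^T\hat{\bX})^2)}\ge\gtr((R\ex{\hat{\bX}^T\hat{\bX}})^2)$, which is precisely the desired bound.

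The main obstacle is that both sign steps above — the positivity of $\bD$ and the convexity of $M\mapsto\gtr((RM)^2)$ — rely on $R$ being definite. For definite $R$ this is harmless: replacing $R$ by $-R$ only flips $\bW\mapsto-\bW$ and leaves every mutual information (hence $g$, $I^{(1)}$, and $I^{(2)}$) unchanged, so one may assume $R\succ0$ without loss of generality. For genuinely indefinite $R$, by contrast, neither $\bD$ nor the Jensen gap is sign-definite, and a direct check on a small deterministic instance (e.g.\ $\ell=2$, $R=\diag(1,-1)$, and $X$ supported on $\{(1,1),(1,-1)\}$, so that $\bW\equiv0$ while $\Gamma=I$) shows the reduced inequality can fail. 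Thus the definiteness of $R$ is the essential hypothesis the argument exploits; extending the statement to indefinite invertible $R$ would require a fundamentally different estimate.
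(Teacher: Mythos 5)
Your argument is correct where it applies, and it reaches essentially the same reduced inequality as the paper but closes it by a genuinely different route. The paper expresses $I^{(1)}_{\bX,\bW}$ and $I^{(2)}_{\bX,\bW}$ through two conditionally independent posterior replicas $\bA,\bB$ of $\bX$ given $(\bY,\bZ)$ (Appendices~\ref{sec:matrix_inputs} and \ref{sec:I_MMSE_sym}), so that $\tfrac14 g(2I^{(1)}_{\bX,\bW}) - I^{(2)}_{\bX,\bW}$ collapses to $\tfrac{1}{n^2}\gtr(\ex{(R\bA^T\bB - \ex{R\bA^T\bB})^2})$, and then asserts in one line that this trace is non-negative. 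Your decomposition $\hat{\bP} = \hat{\bX}R\hat{\bX}^T + \bD$ followed by Jensen is an explicit proof of exactly that lower bound on $\tfrac1n\ex{\|\ex{\bW\mid\bY,\bZ}\|_F^2}$, at the price of invoking $R\succeq 0$ twice (positivity of $\bD$, and convexity of $M\mapsto\gtr((RM)^2)$). Since $g$, $I^{(1)}_{\bX,\bW}$ and $I^{(2)}_{\bX,\bW}$ are all invariant under $R\mapsto -R$, your reduction to $R\succ 0$ is legitimate, so you have a complete and more transparent proof in the definite case.

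Two remarks on the indefinite case. First, your counterexample is misdescribed: with $R=\diag(1,-1)$ and rows uniform on $\{(1,1),(1,-1)\}$ one has $X_i R X_i^T=0$ but $X_i R X_j^T = 1-\epsilon_i\epsilon_j\in\{0,2\}$ for $i\ne j$ (with $\epsilon_i$ the sign of the second coordinate), so $\bW\not\equiv 0$. Second, and more importantly, the example still does what you want once computed correctly: one finds $\tfrac{1}{n^2}\gtr(\ex{(R\bX^T\bX)^2}) = 2-2/n$ while $\gtr((R\Gamma)^2)=\gtr(R^2)=2$, so $g(0)=-2/n<0$; taking $S$ large (so that $2I^{(1)}_{\bX,\bW}\to 0$ while $I^{(2)}_{\bX,\bW}\ge 0$) then violates the claimed inequality. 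So your diagnosis is right: the lemma inherits only invertibility of $R$ from Theorem~\ref{thm:I_XW_UB}, but it genuinely needs definiteness, and the step of the paper's own proof that breaks without it is precisely the unjustified assertion that $\gtr(\ex{N^2})\ge 0$ for the non-symmetric random matrix $N = R\bA^T\bB-\ex{R\bA^T\bB}$. This is worth flagging, since Theorem~\ref{thm:cF0} is stated under Assumption~\ref{assump:lin_deg} alone, without Assumption~\ref{assump:defR}.
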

\begin{proof}
Based on the analysis of the MMSE matrix of a linear Gaussian channel with matrix input (Appendix~\ref{sec:matrix_inputs}) and the partial derivatives of the mutual information function in symmetric matrix estimation (Appendix~\ref{sec:I_MMSE_sym}) we obtain
\begin{align*}
I^{(1)}_{\bX, \bW}(S,t)  & = \frac{1}{2 n } \left(   \ex{ \bX^T \bX}  -   \ex{ \bA^T \bB} \right)  \\
 I^{(2)}_{\bX, \bW}(S,t) & = \frac{1}{4 n^2} \left(   \gtr\left( \ex{  \left( R \bX^T \bX  \right)^2} \right)   -  \gtr\left(  \ex{\left(  R \bA^T \bB \right)^2  }\right)  \right), 
\end{align*}
where $\bA$ and $\bB$  are conditionally independent draws form the posterior distribution of $\bX$ given $ (\bY, \bZ)$.  Comparing these expressions with the definition of $g(U)$, leads to
\begin{align*}
\frac{1}{4} g\left( 2  I^{(1)}_{\bX, \bW}(S,t) \right)  -  I^{(2)}_{\bX, \bW}(S,t)  & = \frac{1}{n^2} \gtr\left(  \ex{\left(  R \bA^T \bB \right)^2  }\right) - \frac{1}{n^2} \gtr\left( \left(  \ex{  R \bA^T \bB}  \right)^2  \right)\\
& = \frac{1}{n^2}  \gtr\left(  \ex{\left(  R \bA^T \bB  - \ex{  R \bA^T \bB}  \right)^2  }\right).
\end{align*}
Noticing that this expression is non-negative completes the proof.
\end{proof}


The next step in our analysis is to focus on the convex conjugate (or Legendre--Fenchel transform) of $I_{\bX, \bW}(\cdot, t)$. Specifically, we define the extended real-valued function $J_{\bX,\bW} : \mathbb{S}_+^\ell  \times [0,t) \to \reals \cup \{+ \infty\}$ according to
\begin{align}
J_{\bX,\bW}(U, t) & \triangleq \sup_{S \in\mathbb{S}^\ell_+} \left\{ I_{\bX, \bW}(S , t ) - \frac{1}{2} \gtr(S U)  \right\}. \label{eq:J_XW}
\end{align}
Here, we have introduced the factor of one half in so that the dual variable $U$ can be associated with the MMSE matrix. 
The function $J_{\bX,\bW}(\cdot, t)$ is convex  because it is the pointwise maximum of affine functions. By the Fenchel--Moreau theorem~(see e.g.,~\cite[Theorem~13.37]{bauschke:2017}), the fact that $I_{\bX, \bW}(\cdot , t)$  is a proper upper-semicontinuous concave function  implies that the Legendre--Fenchel transform is a bijection, and thus 
\begin{align}
I_{\bX,\bW}(S, t) & = \inf_{U \in\cU } \left\{ J_{\bX, \bW}(U , t ) + \frac{1}{2} \gtr(S U)  \right\}, \label{eq:J_biconjugate}
\end{align}
where  $\cU \triangleq \{ 2 I^{(1)}_{\bX}(S)   \, : \,  S \in \psd^\ell  \}\subseteq \psd^\ell.$ 

%

Working with the transformed representation allows us to convert the functional constraint on the partial derivatives given in Lemma~\ref{lem:grad_inq} into an upper bound on the convex conjugate. 

\begin{lemma}\label{lem:J_XW_inq}
For all $U \in \cU$ we have
\begin{align}
 J_{\bX, \bW}(U , t )    \le  J_{\bX}(U ) +  \frac{t}{4} g(U),
\end{align}
where  $g(U)$  is defined in \eqref{eq:g}. 
\end{lemma}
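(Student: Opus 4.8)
The plan is to read the claimed inequality as a statement about how the convex conjugate $J_{\bX,\bW}(U,\cdot)$ grows as its second argument increases from $0$ to $t$, and to control this growth by differentiating in $t$. Since $I_{\bX,\bW}(S,0)=I_{\bX}(S)$, the conjugate at $t=0$ is exactly $J_{\bX}(U)=J_{\bX,\bW}(U,0)$, so it suffices to prove the pointwise derivative bound $\partial_t J_{\bX,\bW}(U,t)\le\tfrac14 g(U)$ and integrate over $[0,t]$. What makes this promising is that $g(U)$ does not depend on $t$: by \eqref{eq:g} it is assembled only from $R$, $U$, and the prior quantities $\Gamma$ and $\frac{1}{n^2}\gtr(\ex{(R\bX^T\bX)^2})$. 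Hence a uniform derivative bound integrates immediately to the desired $\tfrac{t}{4}g(U)$.

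To compute the derivative, fix $U\in\cU$ and let $S^*=S^*(t)$ attain the supremum in \eqref{eq:J_XW}. Because $I_{\bX,\bW}(\cdot,t)$ is concave and differentiable, the stationarity condition for the concave objective $S\mapsto I_{\bX,\bW}(S,t)-\tfrac12\gtr(SU)$ at an interior maximizer reads $I^{(1)}_{\bX,\bW}(S^*,t)=\tfrac12 U$, that is,
\begin{align*}
2\,I^{(1)}_{\bX,\bW}(S^*,t)=U.
\end{align*}
The envelope theorem then gives $\partial_t J_{\bX,\bW}(U,t)=I^{(2)}_{\bX,\bW}(S^*,t)$, since the $S$-gradient of the objective vanishes at $S^*$. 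Substituting the stationarity identity into Lemma~\ref{lem:grad_inq} yields
\begin{align*}
\partial_t J_{\bX,\bW}(U,t)=I^{(2)}_{\bX,\bW}(S^*,t)\le\tfrac14\,g\bigl(2\,I^{(1)}_{\bX,\bW}(S^*,t)\bigr)=\tfrac14\,g(U),
\end{align*}
which is exactly the derivative bound we need.

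The main obstacle is making this envelope step rigorous, and in particular handling the case where the supremum in \eqref{eq:J_XW} is attained on the boundary $S=0$ rather than at an interior stationary point. This occurs precisely when $U$ exceeds, in the positive semidefinite order, the largest attainable value of $2\,I^{(1)}_{\bX,\bW}(\cdot,t)=\MMSE(\bX\mid\bY,\bZ)$ at time $t$; the stationarity identity is then replaced by the one-sided optimality condition $2\,I^{(1)}_{\bX,\bW}(0,t)\preceq U$. The way out is that $g$ is monotone in the positive semidefinite order on the relevant region. Differentiating \eqref{eq:g} gives
\begin{align*}
\nabla_U g(U)=2\,R(\Gamma-U)R\succeq0\qquad\text{whenever }U\preceq\Gamma,
\end{align*}
because $R$ is symmetric and invertible, so congruence by $R$ preserves the cone. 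Every $U\in\cU$ equals $\MMSE(\bX\mid\bY)$ for some SNR $S$ at $t=0$ and therefore satisfies $U\preceq\tfrac1n\sum_i\cov(X_i)\preceq\Gamma$; thus monotonicity upgrades the boundary inequality to $g\bigl(2\,I^{(1)}_{\bX,\bW}(0,t)\bigr)\le g(U)$, and Lemma~\ref{lem:grad_inq} again delivers $\partial_t J_{\bX,\bW}(U,t)=I^{(2)}_{\bX,\bW}(0,t)\le\tfrac14 g(U)$. The remaining points are routine: existence of the maximizer follows from coercivity of the concave objective when $U\succ0$ (with the boundary of the cone handled by continuity), and passing from the pointwise one-sided Dini derivative bound to the integrated inequality uses that $t\mapsto J_{\bX,\bW}(U,t)$ is monotone, since $I^{(2)}_{\bX,\bW}\ge0$.
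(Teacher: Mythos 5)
Your proposal is correct and follows essentially the same route as the paper: the envelope theorem applied to $t \mapsto J_{\bX,\bW}(U,t)$, the optimality condition $2\,I^{(1)}_{\bX,\bW}(S^*,t) \preceq U$ at the maximizer, monotonicity of $g$ with respect to the Loewner order on $\{U : U \preceq \Gamma\}$, and Lemma~\ref{lem:grad_inq}, integrated from $0$ to $t$ with $J_{\bX,\bW}(U,0)=J_{\bX}(U)$. The only cosmetic imprecision is your dichotomy of ``interior maximizer'' versus ``$S^*=0$'' --- the boundary of $\psd^\ell$ also contains nonzero singular matrices --- but the one-sided KKT condition together with monotonicity of $g$, which you already invoke, covers that general case exactly as in the paper's proof.
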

\begin{proof}
The assumption that $U \in \cU$  combined with the fact that $I^{(1)}_{\bX, \bW}(S, \cdot )$ is non-increasing in the Loewner partial order  ensures that supremum with respect to $S$ in \eqref{eq:J_XW} is attained on at least one point $S^*(U, t) \in \mathbb{S}_+^\ell$. By the Karush--Kuhn--Tucker conditions, the gradient with respect to $S$ evaluated at this point satisfies 
\begin{align}
I^{(1)}_{\bX, \bW}(S^*(U, t) , t ) \preceq \frac{1}{2}  U.  \label{eq:MleU}
\end{align}
Next, we note that $g(U)$ is non-decreasing with respect to the Loewner partial order. To see why, observe that for any $0 \preceq U \preceq V \preceq \Gamma$, we have $g(V) - g(U)   = \gtr ( R (V-U) R (2 \Gamma - U - V ) )  \ge 0$.  

We now employ the envelope  theorem~\cite{milgrom:2002},  which implies that $ J_{\bX, \bW}(U , t )$ is absolutely continuous in $t$ with
\begin{align}
J_{\bX, \bW}(U , t ) - J_{\bX, \bW}(U , 0 )  = \int_0^t  I^{(2)}_{\bX, \bW}(S^*(U,\tau )  , \tau ) \, \dd \tau.\label{eq:J2_UB}
\end{align}
The integrand in this expression can be upper bounded as follows: 
\begin{align}
  I^{(2)}_{\bX, \bW}(S^*(U,t)  , t )   \le  \frac{1}{4} g\left( I^{(1)}_{\bX, \bW}(S^*(U,t) , t ) \right ) \le  \frac{1}{4} g(U). 
 \end{align}
The first inequality is due to Lemma~\ref{lem:grad_inq} and the second inequality follows from \eqref{eq:MleU} and the fact that $g(U)$ is non-decreasing. Plugging this inequality back into \eqref{eq:J2_UB} completes the proof. 
\end{proof}

We are now  have all the ingredients needed for the proof of Theorem~\ref{thm:I_XW_UB}.  Starting with \eqref{eq:J_biconjugate} and then applying the bound in  Lemma~\ref{lem:J_XW_inq} allows us to write
\begin{align}
I_{\bX,\bW}(S, t) 
& =  \inf_{U \in \cU } \left\{ J_{\bX, \bW}(U , t ) + \frac{1}{2} \gtr(S U)  \right\} \notag \\
& \le  \inf_{U \in\cU } \left\{ J_{\bX}(U ) + \frac{t}{4} g(U)+ \frac{1}{2} \gtr(S U)   \right\} \label{eq:I_XW_UB_c}.
\end{align}
Note that this is a variational upper bound in terms of the dual variable $U$, which corresponds to the MMSE matrix. To rewrite this expression in terms of an infimum over the signal-to-noise matrix, we define the function $h : \psd^\ell \to \reals$ according
\begin{align*}
h(\Delta)  & \triangleq  \gtr\left(  \left(  \Gamma  R -  R^{-1} \Delta \right)^2  \right)+ \frac{1}{ n^2} \ex{  \left \| R  \bX^T \bX -   R \ex{ \bX^T \bX} \right\|_F^2}.
\end{align*}
Then, a straightforward calculation shows that $g(U)$ is the concave conjugate of $h(\Delta)$ in the following sense: 
\begin{align}
 g(U)  = \inf_{\Delta \in \psd^\ell } \left\{ h(\Delta)  +2 \gtr\left( \Delta U \right)\right\}, \label{eq:g_alt}
\end{align}
for all $0 \preceq U \preceq\Gamma$.  Plugging this characterization of $g(U)$ back into \eqref{eq:I_XW_UB_c}, and then swapping the order of the infimum with respect to $U$ and $\Delta$ leads to
\begin{align}
I_{\bX,\bW}(S, t) 
&  \le  \adjustlimits \inf_{U \in \cU }  \inf_{\Delta \in \psd^\ell } \left\{ J_{\bX}(U ) + \frac{t}{4} h(\Delta)   + \frac{1}{2} \gtr\left( (S + t \Delta) U \right)  \right\}  \notag\\
&  =  \adjustlimits   \inf_{\Delta \in \psd^\ell } \inf_{U \in \cU }\left\{ J_{\bX}(U ) + \frac{t}{4} h(\Delta)   + \frac{1}{2} \gtr\left( (S + t \Delta) U \right)  \right\}  \notag \\
&  =  \inf_{\Delta \in\mathbb{S}^\ell_{+}} \left\{ I_{\bX}(S + t \Delta ) + \frac{t}{4} h(\Delta)  \right\},  \notag 
\end{align}
where the final equality follows from \eqref{eq:J_biconjugate}. This concludes the proof of Theorem~\ref{thm:I_XW_UB}.

\section{Discussion} 
\label{sec:discussion}
The results presented in this paper recast the community detection problem as a multivariate problem making it possible to evaluate more than just traditional overall recovery tasks. By evaluating the formulas derived in Section~\ref{sec:formulas} we can now differentiate between the tasks of finding one community, all communities, and a subset of communities within a network. The formulas further allow us to identify a computational gap for regimes where certain recovery tasks should be theoretically attainable but where algorithms such as BP will fail to perform. 
\color{black}

\section*{Acknowledgment} 
The authors thank Lenka Zdeborov\'a for providing initial direction on this problem  and Jiaming Xu for helpful discussion regarding channel universality.  This was supported in part by funding from the Laboratory for Analytic Sciences (LAS) and by the NSF under Grant No.~1750362.  Any opinions, findings, conclusions, and recommendations expressed in this material are those of the authors and do not necessarily reflect the views of the sponsors.


\appendix

\section{Proofs of Results in Section~\ref{sec:formulas}}

\subsection{Proof of Theorem~\ref{thm:cF} } 

Combining Corollary~\ref{cor:GtoZ}  and Theorem~\ref{thm:matrix_factorizatiion} with $t = 1$  yields 
\begin{align*}
\lim_{n \to \infty} \frac{1}{n} I(\bX; \bG)   & = \inf_{\Delta\in \psd^\ell }  \left\{ I_{\bX}(  \Delta )  +\frac{1}{4}  \gtr\left(  \left(  \ex{ X X^T}   R -  R^{-1} \Delta \right)^2  \right) \right\},
\end{align*}
for any random matrix $\bX \in \reals^{n \times \ell}$ whose rows are drawn i.i.d.\ from a distribution  $P_X$ on $\reals^\ell$ with finite and bounded support.  Under the assumption that the rows are supported on the whitened representation described in Section~\ref{sec:whitened_basis}  it follows that $\ex{ X X^T} = I$. Furthermore, it can be verified that the infimum with respect to $\Delta$ is attained on the compact set $\{ \Delta \, : 0 \preceq \Delta \preceq R^{2}\}$ and thus the use of a minimum is justified. This concludes the proof of Theorem~\ref{thm:cF}.

\subsection{Proof of Theorem~\ref{thm:cF0} } \label{sec:thm:cF0_proof}

Combining  Corollary~\ref{cor:GtoZ} and Corollary~\ref{cor:I_XW_UB} with $t = 1$ yields 
\begin{align*}
\limsup_{n \to \infty} \frac{1}{n} I(\bX; \bG, \bY ) \le   \inf_{\Delta \in \psd^\ell} \left\{ I_{X}(S + \Delta )   +   \gtr\left(  \left(   \ex{ X X^T}  R -  R^{-1} \Delta \right)^2  \right) \right\} ,
\end{align*}
for any $S \in \psd^\ell$ and random matrix $\bX \in \reals^{n \times \ell}$ whose rows are drawn i.i.d.\ from a distribution $P_X$ on $\reals^\ell$ with finite and bounded support. Under the assumption that the rows are supported on the whitened representation described in Section~\ref{sec:whitened_basis}  it follows that $\ex{ X X^T} = I$. Furthermore, it can be verified that the infimum with respect to $\Delta$ is attained on the compact set $\{ \Delta \, : R ( I -  M_{X}(S) ) R \preceq \Delta \preceq R^2 \}$ and thus the use of a minimum is justified. This concludes the proof of Theorem~\ref{thm:cF0}.

\subsection{Proof of Theorem~\ref{thm:MMSE_UB}}
\label{sec:thm:MMSE_UB_proof}

The key idea underlying this proof is to exploit the integral form of  matrix I-MMSE relationship, which gives
\begin{align*}
I(\bX; \bG, \bY)  - I(\bX; \bG) = \frac{n}{2} \int_0^1 \gtr\left( \MMSE(\bX \mid \bG, \bY) \Big\vert_{S = S_u}  \frac{\dd}{ \dd u} S_u \right) \, \dd u,
\end{align*}
for any differentiable path $S_u$ with $S_0 = 0 $ and $S_1 = S$. Combining Theorems~\ref{thm:cF} and ~\ref{thm:cF0} provides an upper bound on the leading order terms of the left-hand side of this expression in the large-$n$ limit. We will show that this upper bound implies an asymptotic upper bound on the matrix MMSE  with respect to the Loewner partial order.

To simplify notation we let $\ell = k-1$ and define the functions $f_n:  \psd^\ell \to \reals$ and $f : \psd^n \to \reals$ according to
\begin{align*}
f_n(S) & \triangleq \frac{1}{n} I(\bX; \bG, \bY)\\
f(S) & \triangleq \min_{\Delta\in \psd^\ell}  \left\{ I_{\bX}(S + \Delta )  +\frac{1}{4} \gtr\left(  \left(  R -  R^{-1} \Delta \right)^2  \right) \right\}.
\end{align*}
For for all $S \in \psd^\ell$, the upper bound on the mutual information in Theorem~\ref{thm:cF} combined with the exact limit in Theorem~\ref{thm:cF0} allows us to write
\begin{align}
\limsup_{n \to \infty} \left\{  f_n(S) - f_n(0) \right\} \le f(S) - f(0). \label{eq:fn_limUB}
\end{align} 

The next step is to show that \eqref{eq:fn_limUB} implies an upper bound  for the gradient $\nabla f(S)$ for all positive definite $S$.  Let  $\cT = \{ T \in \psd^\ell \, : \, T \preceq I\}$.  For every $S \in \mathbb{S}_{++}^d$, $T \in \cT$ and $\eps \in (0,  \lambda_\text{min}(S)]$, we can write
\begin{align}
 \frac{1}{\eps} \left( f_n( \eps T) - f_n(0) \right) & = \frac{1}{\eps}  \int_0^\eps \gtr\left(\nabla f_n( u  T ) T \right)  \, \dd u  \ge \frac{1}{\eps} \int_0^\eps \gtr\left(\nabla f_n( S ) T \right)  \, \dd u  =  \gtr\left(\nabla f_n( S) T \right), \label{eq:fn_limUB_b}
\end{align}
where the inequality holds because $u T \preceq \eps T \preceq \eps I \preceq S$ for all $u \in [0,\eps]$ and $\nabla f_n $ is non-increasing with respect to the Loewner partial order. Meanwhile, we note that $f$ is concave because it is the poinitwise infimum of concave functions.  By the envelope theorem~\cite{milgrom:2002}, the supergradient of $f(S)$ at $S =0$ is the closure of the set $\{ \frac{1}{2}  M_{X}(\Delta) \, : \, \text{$\Delta$ attains the minimum in the definition $f$}\}$. 
Hence, 
\begin{align}
\frac{1}{\eps} \left( f(\eps T) - f(0) \right)  \le  \gtr\left( \nabla f(0) T  \right), \label{eq:fn_limUB_c}
\end{align} 
where $ \nabla f(0) $ denotes any matrix in the supergradient of $f(S)$ at $S= 0$. Combining \eqref{eq:fn_limUB},  \eqref{eq:fn_limUB_b}, and  \eqref{eq:fn_limUB_c} leads to
\begin{align}
\limsup_{n \to \infty} \gtr\left(\nabla f_n( S) T \right)  \le  \gtr\left( \nabla f(0) T  \right), \label{eq:fn_limUB_d}
\end{align}
for all $S \in \mathbb{S}_{++}^d$ and $T \in \cT$

The final step in the proof is to show that \eqref{eq:fn_limUB_d} implies an upper bound on the maximum eigenvalue of $\nabla f_n( S)  - \nabla f(0)$. To proceed,  observe that the set  $\cT$ is compact, and thus for every $\delta > 0$ there exists an integer $M$ and a set of matrices  $\{ T_1, \dots, T_M\}$ such that $\max_{T \in \cT} \min_{m \in [M]} \| T_m  - T\|_F \le \delta$. Therefore, the maximum eigenvalue can be upper bounded as follows: 
\begin{align*}
\lambda_\text{max}\left( \nabla f_n( S)  - \nabla f(0) \right) & = \max_{ T \in \cT} \gtr\left( \left( \nabla f_n( S)  - \nabla f(0) \right) T \right)  \\
 & \le  \max_{m \in [M]} \gtr\left( \left( \nabla f_n( S)  - \nabla f(0) \right) T_m \right)   + \delta \| \nabla f_n( S)  - \nabla f(0)\|_F,
\end{align*}
By \eqref{eq:fn_limUB_d}, the limit superior of the first term on the right-hand side is non-positive. Meanwhile  the gradient $\nabla f_n(S)$ is bounded uniformly with respect to $S$ and $n$. Noting that $\delta$ can be chosen arbitrarily small complete the proof of Theorem~\ref{thm:MMSE_UB}.

\subsection{Proof of Theorem~\ref{thm:MMSE_LB} } \label{proof:thm:MMSE_LB}
Given $t \in [0, \infty)$,  let $\bZ(t) = \sqrt{t/n} \bX R  \bX^T + \bm{\xi}$ where $\bm{\xi}$ is a standard Gaussian Wigner matrix.  Starting with the I-MMSE relation in \eqref{eq:IXZ_dt}, we obtain, for all $t > 0$, 
\begin{align*}
\frac{4 ( I(\bX; \bG, \bZ(t))   - I(\bX; \bG) )  }{nt}  &  = \frac{1}{t}  \int_0^t \frac{1}{n^2 } \ex{ \left\| \bX R\bX^T - \ex{ \bX R\bX^T \mid \bG , \bZ(\tau) } \right \|_F^2}  \, \dd \tau\\
&  \le  \frac{1}{n^2 } \ex{ \left\| \bX R \bX^T - \ex{ \bX R \bX^T \mid \bG } \right \|_F^2} ,
\end{align*}
where the inequality holds because the integrand is non-increasing in $\tau$. To characterize the asymptotic limit of  the left-hand side, we start with Theorem~\ref{thm:GtoZ} and use the same steps that led to Corollary~\ref{cor:GtoZ} to obtain
\begin{align}
\lim_{n \to \infty} \frac{1}{n}  \left | I(\bX; \bG, \bZ(t)  ) -  I(\bX; \bZ'(1),  \bZ(t))  \right| = 0,
\end{align}
where $\bZ'(1)$ and $\bZ(t)$ are conditionally independent given $\bX$.  By \cite[Lemma~2]{reeves:2018a}, the information provided by two independent Gaussian observations  can be expressed in terms of a  signal observation according to $I(\bX;\bZ'(1) ,  \bZ(t))  = I(\bX;\bZ'(1 + t))$. Thus we can apply Theorem~\ref{thm:matrix_factorizatiion} to obtain
\begin{align*}
\lim_{n \to \infty} \frac{1}{n} I(\bX; \bG, \bZ(t)  ) = \psi(1 + t) 
\end{align*}
where
\begin{align*}
 \psi(\gamma) \triangleq   \min_{\Delta \in \psd^\ell}  \underbrace{I_{X}(\Delta)  + \frac{1 }{ 4} \gtr\left( \left( \sqrt{ \gamma }R - \frac{1}{\sqrt{ \gamma}} R^{-1}   \Delta \right)^2 \right)}_{\cF_\gamma(\Delta)}.
\end{align*}
Putting the above pieces together, we obtain
\begin{align}
\liminf_{n \to \infty} \frac{1}{n^2 } \ex{ \left\| \bX R \bX^T - \ex{ \bX R  \bX^T \mid \bG } \right \|_F^2}  & \ge  \frac{ 4(  \psi( 1 + t )- \psi(1))}{ t},   \label{eq:matrix_MMSE_LB}
\end{align}
for all $t > 0$.

Next, we consider the limiting behavior of the right-hand side of \eqref{eq:matrix_MMSE_LB} as $t$ decreases to zero. 
Observe that the gradients of the potential function $\cF_\gamma(\Delta)$ are given by
 \begin{align}
\partial_\gamma \cF_\gamma(\Delta) & =  \frac{1}{4} \gtr (R^2  - \gamma^{-2} R^{-1} \Delta^2 R^{-1} )\\
\nabla_\Delta  \cF_\gamma(\Delta) &=  \frac{1}{2} M_{X}(\Delta)    - \frac{ 1}{2} I  + \frac{1}{2} \gamma^{-1} R^{-1} \Delta R^{-1} . \
\end{align}
Let $\cD = \arg\min \cF_{1}(\cdot)$. Starting with the envelope theorem~\cite{milgrom:2002}, we have
\begin{align}
\lim_{t \to 1^+} \frac{ 4(  \psi( 1+ t) - \psi(1))}{ t}  & =  \min_{\Delta \in \cD }  4 \partial_t  \cF_{1}( \Delta) \notag \\
&  =  \min_{\Delta \in \cD} \gtr (R^2  - R^{-1} \Delta^2 R^{-1} )\ \notag \\
& =   \min_{\Delta \in \cD} \gtr( R^2 -  R  ( I - M_{X}(\Delta))^2 R)  \label{eq:m0_LB}
\end{align}
where the last step  holds because  every $\Delta \in \cD$ is a stationary point of $\cF_{1}(\cdot)$ and thus satisfies $\Delta = R  (I- M_X(\Delta)) R $.

Combining Lemma~\ref{lem:grad_inq}, evaluated with $S =0$, with the assumption $ \frac{1}{n} \ex{ \bX \bX^T} = I$ gives
\begin{align}
 \frac{1}{n^2 } \ex{ \left\| \bX R \bX^T - \ex{ \bX R  \bX^T \mid \bG } \right \|_F^2}    \le \gtr\left( R^2 -  R  ( I - \MMSE(\bX \mid \bG))^2 R \right)   +  \ex{ \left\| \frac{1}{n}  \bX \bX^T - I \right \|^2_F} ,
  \end{align}
 where the second term on the right-hand side converges to zero in the large-$n$ limit by the law of large numbers. 

Combining  this inequality with \eqref{eq:matrix_MMSE_LB} and  \eqref{eq:m0_LB} gives
\begin{align}
\liminf_{n \to \infty}  \gtr(  R^2 -  R( I - \MMSE(\bX \mid \bG)))^2R )  \ge  \min_{\Delta \in \cD} \gtr( R^2-  R( I - M_{X}(\Delta))^2R ) . \notag
\end{align}
Rearranging the terms  completes the proof.



\section{Proof of Theorem~\ref{thm:GtoZ}}\label{proof:thm:GtoZ}

Recalling that $\bG$ is a symmetric matrix with zeros on the diagonal and entries above the diagonal drawn according to \eqref{eq:Aij}, we can write  $I(\bW; \bG) = I(\{W_{ij}\}_{i < j} ;  \{ G_{ij} \}_{i < j} )$. Meanwhile,  the fact that $\bZ$ is symmetric allows us to write
\begin{align}
I(\bW; \bZ) &= I(\{W_{ij}\}_{i < j} ;  \{ Z_{ij} \}_{i < j} ) + I(\{W_{ii} \} ;  \{ Z_{ii} \} \mid \{ Z_{ij} \}_{i < j} ), 
\label{eq:IWZdecomp}
\end{align}
where $\{W_{ii}\}$ denotes the diagonal entries of $\bW$.  By the chain rule for mutual information and the conditional independence of  $\{Z_{ij}\}_{i \le j}$ given $\bW$, the  second term on the right-hand side  of \eqref{eq:IWZdecomp} can be upper bounded as follows: 
\begin{align*}
I(\{W_{ii} \} ;  \{ Z_{ii} \} \mid \{ Z_{ij} \}_{i < j} )  \le \sum_{i=1}^n I(W_{ii};  Z_{ii}   ) \le \sum_{i=1}^n  \frac{1}{2} \log(1 + B^2/(2n))  \le B^2/4,
\end{align*} 
where the second inequality follows from the assumption $\var(W_{ij}) \le B^2/n$ and the capacity of the additive Gaussian noise channel.  In the following,  we compare $I(\bW; \bG)$ with  the first term on the right-hand side of \eqref{eq:IWZdecomp}.

To simplify notation, let $m = n(n-1)/2$ and let $W$, $G$ and  $Z$ denote the $m$-dimensional vectors obtained by stacking the columns above the diagonal in $\bW$, $\bG$, and $\bZ$, respectively. The mutual information terms of interest can then be expressed as
\begin{align*}
I(W;G) & =  \int \DKL{P_{G|W = w}}{P_{G} } \,  \dd P_{W}(w) \\
I(W;Z) & =  \int \DKL{P_{Z|W = w}}{P_{Z} } \,  \dd P_{W}(w),
\end{align*}
where $P_{G \mid W =w}$ is the conditional distribution of $G$ corresponding to a realization $w$ of $W$ and $\DKL{P}{Q}$ denotes the relative entropy between distributions $P$ and $Q$.  Our approach is to prove that the inequality 
\begin{align}
\left |  \DKL{P_{G|W = w}}{P_{G} } -  \DKL{P_{Z|W = w}}{P_{Z}}  \right|  \le  C(\delta, B)  \left(  \frac{n^{3/2}  + n  \log N }{ \sqrt{d (n-d)}}   + \frac{n\,  \log N   }{ d (n-d)}   \right), \label{eq:DPGw_inq} 
\end{align}
holds uniformly for all $w \in \reals^m$ satisfying $\|w\|_\infty \le B/ \sqrt{n}$. The desired result for the mutual information then follows from Jensen's inequality.

\subsection{Proof of Inequality~\eqref{eq:DPGw_inq} }

Condition on a realization $w $ of $W$ and let $G \sim P_{G \mid W = w}$. Let $P_U$ be the shifted distribution defined by $\dd P_U(u) = \dd P_{W}(w + u)$ and let $\cU$ denote the support of $P_U$. For each $u \in \cU$, we define the log likelihood ratio according to 
\begin{align*}
\cL(u) & \triangleq   \log \frac{ \dd P_{G \mid W }(G \mid w + u)} { \dd P_{G \mid W }(G \mid w)} = \sum_{i=1}^m  \log \frac{ \dd P_{G_i \mid W_i }(G_i \mid w_i + u_i)} { \dd P_{G_i \mid W_i }(G_i \mid w_i)}. 
\end{align*}
Using this notation,  the relative entropy be written as
\begin{align}
\DKL{P_{G|W = w}}{P_{G} }  & =  - \ex{  \log  \left( \int e^{ \cL(u) } \, \dd P_{U}(u) \right) }, \label{eq:Dint}
\end{align} 
where the expectation is with respect to $G \sim P_{G \mid W = w}$.  The score function associated with $w$ is the $m$-dimensional random vector given by $V \triangleq \nabla  \cL(0)$ and the Fisher information matrix associated with $w$  is the $m \times m$ positive semidefinite matrix  given by $\cI \triangleq \cov(V) = - \ex{ \nabla^2 \cL(0)}$. Under the Bernoulli observation model in \eqref{eq:Aij}, the entries of $V$ are independent and given by
\begin{align}
V_i 
& =  \frac{G}{ \sqrt{d/ (n-d)}  + w_i} - \frac{1-G}{ \sqrt{ (n-d)/d}  - w_i}, \label{eq:Vi}
\end{align}
and the Fisher information matrix  is diagonal with 
\begin{align}
\cI_{ii} 
& =     \frac{1  }{ (\sqrt{d/ (n-d)}  + w_i )( \sqrt{ (n-d)/d}  - w_i)}. \label{eq:cIi}
\end{align}

To proceed, we define two different approximations to the relative entropy in \eqref{eq:Dint} according to 
\begin{align*}
\widehat{D}_1  & \triangleq  - \bEx_{V}\left[   \log  \left( \int e^{ \langle u, V \rangle  - \frac{1}{2} \langle u , \cI u \rangle  } \, \dd P_{U}(u) \right)  \right] \\
\widehat{D}_2  & \triangleq -\bEx_{\tilde{V}}\left[   \log  \left( \int e^{ \langle u, \tilde{V} \rangle  - \frac{1}{2} \langle u , \cI u \rangle  } \, \dd P_{U}(u) \right) \right] 
\end{align*}
where $\tilde{V} \sim \normal(0, \cI)$ is a Gaussian random vector with the same mean and covariance as the score function $V$. By the triangle inequality, 
\begin{align*}
\left |  \DKL{P_{G|W = w}}{P_{G} } -  \DKL{P_{Z|W = w}}{P_{Z}}  \right|  \le  \left |  \DKL{P_{G|W = w}}{P_{G} } -  \widehat{D}_1 \right| + \left |  \widehat{D}_1-  \widehat{D}_2 \right| + \left | \widehat{D}_2-  \DKL{P_{Z|W = w}}{P_{Z}}  \right|. 
\end{align*}
The terms on the right-hand side are upper bounded in the following lemmas. The notation  $f(x)  = O(g(x))$ means that there is a universal constant $C$ such that $f(x) \le C g(x)$ and notation  $f(x)  = O_{B, \delta} (g(x))$ means that there is a constant $C(B, \delta)$ such that $f(x) \le C(B, \delta) \,   g(x).$

\begin{lemma} We have
\begin{align*}
\left |  \DKL{P_{G|W = w}}{P_{G} } -  \widehat{D}_1 \right|  &  = O_{B, \delta} \left( \frac{n^{3/2} +  n \sqrt{ \log N}}{ \sqrt{ d (n-d)} }  + \frac{  n \log N}{ d (n-d)}   \right) .
\end{align*}
\end{lemma}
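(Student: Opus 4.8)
The plan is to recognize $\widehat{D}_1$ as the second-order (local asymptotic normality) approximation of the relative entropy, in which each per-coordinate log-likelihood increment $\cL_i(u_i)$ is replaced by its quadratic Taylor polynomial about $u_i=0$, and then to control the error through the higher-order remainder. First I would Taylor-expand: since $\cL(u)=\sum_i \cL_i(u_i)$ with each $\cL_i$ a function of $u_i$ alone, the Hessian $\nabla^2\cL(0)$ is diagonal, and I can write $\cL_i(u_i)=V_i u_i+\tfrac12\cL_i''(0)u_i^2+\tfrac16\cL_i'''(\xi_i)u_i^3$ for some $\xi_i$ between $0$ and $u_i$. Because $\widehat{D}_1$ uses the exponent $\langle V,u\rangle-\tfrac12\langle u,\cI u\rangle$ with $\cI_{ii}=-\ex{\cL_i''(0)}$, the discrepancy in the exponent is exactly
\[
r(u)=\tfrac12\sum_i\big(\cL_i''(0)-\ex{\cL_i''(0)}\big)u_i^2+\tfrac16\sum_i\cL_i'''(\xi_i)u_i^3,
\]
a Hessian-fluctuation term whose coefficients are mean-zero in $G_i$, plus a cubic remainder.

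Second, I would convert this exponent discrepancy into a relative-entropy discrepancy by a single change of measure. Writing $\dd Q\propto e^{\langle V,U\rangle-\frac12\langle U,\cI U\rangle}\dd P_U$, one has
\[
\DKL{P_{G|W=w}}{P_G}-\widehat{D}_1=-\bEx_G\big[\log\bEx_Q[e^{r(U)}]\big].
\]
I then expand the inner log-moment-generating function, $\log\bEx_Q[e^{r}]=\bEx_Q[r]+\tfrac12\var_Q(r)+\ldots$, and take $\bEx_G$ on the outside. The point is that, to leading order in the tilt (where $Q$ is close to $P_U$ and decoupled from the $G$-dependence of $V$), the mean-zero structure of the fluctuation coefficients produces cancellation, so the surviving contributions are genuinely second order and the bound is driven by the cubic remainder and by the variance of the quadratic fluctuation.

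Third, I would estimate the ingredients from the explicit Bernoulli model. Differentiating $\log p(\cdot)$ and $\log(1-p(\cdot))$ for $p(w)=\tfrac dn+\tfrac{\sqrt{d(n-d)}}{n}w$ yields bounds on $|\cL_i''|$ and $|\cL_i'''|$ in terms of $p'=\sqrt{d(n-d)}/n$ together with $p\asymp d/n$ and $1-p\asymp(n-d)/n$; combined with $|u_i|\le 2B/\sqrt n$ (from $\|w\|_\infty,\|w+u\|_\infty\le B/\sqrt n$) and with the mean-zero independence of the $G_i$, the sums over the $m=n(n-1)/2$ coordinates concentrate. The finite support of $P_U$ (cardinality $N$) enters precisely when bounding tails and maxima of these sums over the support of $Q$: a sub-Gaussian maximal inequality applied to the odd cubic term contributes the $n\sqrt{\log N}$ numerator, while a cruder entropy/tail bound on the positive quadratic-fluctuation term contributes the $n\log N$ term and the $\tfrac{n\log N}{d(n-d)}$ correction. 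Collecting the pieces produces the stated $O_{B,\delta}$ bound.

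The hard part will be the aggregation in the third step. A naive triangle inequality that sums per-coordinate absolute bounds is far too lossy — it discards the $\sqrt{d(n-d)}$ scaling entirely — so the argument must exploit the mean-zero property and near-independence of the coordinates to pass from sums of magnitudes to second-moment and concentration control, while simultaneously handling the data-dependent tilt $Q$ (whose dependence on $G$ must be shown to contribute only at higher order) and the finite-support prior $P_U$ that generates the $\log N$ contributions. Obtaining the sharp joint dependence on $d$, $n-d$, and $\log N$, uniformly over all admissible $w$ with $\|w\|_\infty\le B/\sqrt n$, is the crux of the lemma.
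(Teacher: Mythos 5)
Your first step --- Taylor-expanding the exponent and isolating a mean-zero Hessian-fluctuation term $\cA(u)=\tfrac12\sum_i u_i^2\bigl(\partial_i^2\cL(0)+\cI_{ii}\bigr)$ plus a cubic remainder $\cR(u)$ --- is exactly the decomposition the paper uses. The genuine gap is in your second step. You propose to pass to the tilted measure $Q$, expand $\log\bEx_Q[e^{r}]$ as a cumulant series, and argue that the mean-zero structure of the coefficients produces cancellation. This is where the argument would stall: $Q$ is built from $V$, hence depends on $G$, so $\bEx_G\bigl[\bEx_Q[\cA(U)]\bigr]$ does \emph{not} vanish merely because the $A_i$ are mean-zero, and the higher-order terms of the cumulant expansion are left uncontrolled. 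None of this machinery is needed. The paper uses the elementary observation that if two exponents differ pointwise by $r(u)=\cA(u)+\cR(u)$, then the two log-integrals differ by at most $\sup_{u\in\cU}|r(u)|$; the entire lemma therefore reduces to bounding $\ex{\sup_{u}|\cA(u)|}+\ex{\sup_{u}|\cR(u)|}$, with no change of measure and no cancellation under the tilt.

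You have also misattributed the sources of the error terms, which matters for carrying out your third step. In the paper the cubic remainder is handled by a crude termwise bound --- $m=O(n^2)$ coordinates, $|u_i|\le 2B/\sqrt{n}$, and $\ex{\sup_u|\partial_i^3\cL(u)|}=O_{B,\delta}\bigl(n/\sqrt{d(n-d)}\bigr)$ --- and contributes only the $n^{3/2}/\sqrt{d(n-d)}$ term, with no $\log N$ factor at all. Both $\log N$-dependent terms come from the quadratic fluctuation $\cA$: each $A_i$ is sub-gamma by Bernstein's inequality with variance factor and scale factor both $O_{B,\delta}\bigl(n^2/(d(n-d))\bigr)$, and the maximal inequality over the $N$ points in the support of $P_U$ then yields $n\sqrt{\log N}/\sqrt{d(n-d)}$ from the sub-Gaussian part and $n\log N/(d(n-d))$ from the scale part. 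Your instinct that a naive triangle inequality on $\cA$ is too lossy and that concentration over the finite support is the right tool is correct; you just need to deploy it on the expected supremum of $\cA$ directly rather than inside a cumulant expansion, and you can afford to be crude on $\cR$.
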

\begin{proof}
Let $A  = (A_1, \dots, A_m)$ be  the zero-mean random vector defined by  $A_i = \partial_i^2 \cL(0)  + \cI_{ii}$ where $\partial_i^2$ denotes the second partial derivative with respect to $u_i$, and let $\{\cA(u) \, : \, u \in \cU\}$ be the random process given by  $ \cA(u)  =  \frac{1}{2} \sum_{i=1}^m u^2_i A_i.$  
The second order Tayler series expansion of $\cL(u)$ about the point $u  = 0$ can be expressed as
\begin{align*}
\cL(u) & =  \langle u, V \rangle  - \frac{1}{2} \langle u , \cI u \rangle +   \cA(u) + \cR(u), 
\end{align*}
where $\cR(u)$ is the remainder term. In view of  \eqref{eq:Dint} and the definition of $\widehat{D}_1$, it follows that 
\begin{align*}
\left |  \DKL{P_{G|W = w}}{P_{G} } -  \widehat{D}_1 \right|  & \le   \ex{ \sup_{u \in \cW} | \cA(u) |  }   +  \ex{ \sup_{u \in \cW} | \cR(u) |  } .
\end{align*}

We first consider the expected supremum of $\cR(u)$. By Taylor's theorem, there exists a vector $\tilde{u}$ between zero and $u$ such that 
\begin{align}
\cR(u) &= \frac{1}{6} \sum_{i=1}^m u_i^3 \partial_i^3 \cL(\tilde{u}). \label{eq:Remainder_decomp}
\end{align}
 Direct computation reveals that $\partial_i^3 \cL(u)   = 2  G( \sqrt{d/ (n-d)}  + w_i +u_i)^{-3} - 2(1-G)( \sqrt{ (n-d)/d}  - w_i - u_i )^{-3}$. 
Noting that  $|u_i| \le 2 B/\sqrt{n}$ for all $u \in \cU$, one obtains the uniform upper bound
\begin{align}
\ex{ \sup_{u \in \cU}  \left|  \partial_{i}^3  \cL(u)  \right| }  
 &= O_{B, \delta}\left(  \frac{n}{ \sqrt{d (n-d)}} \right) .  \label{eq:Remainder_decomp2}
\end{align}
Combining~\eqref{eq:Remainder_decomp}  and \eqref{eq:Remainder_decomp2} with the fact that  $m= O(n^2)$ and  $|u_i| \le  2 B/\sqrt{n}$ leads to   
\begin{align*}
\ex{\sup_{ u \in \cU} \cR(u) }   
= O_{B,\delta} \left(  \frac{n^{3/2} }{ \sqrt{d (n-d)}} \right). 
\end{align*}

Next, we consider the expected supremum of $\cA(u)$. Under the Bernoulli observation model in \eqref{eq:Aij}, the entries of $A$ are independent and a straightforward calculation shows that there exist numbers 
\begin{align}
\nu &=  O_{\delta, B} \left(  \frac{ n^2}{ d (n-d)}  \right)  \label{eq:nu_scale}\\
c &=  O_{\delta, B} \left( \frac{n^2}{ d (n-d)}  \right) , \label{eq:c_scale}
\end{align}
such that $\ex{ \left| A_i \right|^2} \le \nu$ and $\left| A_i \right|  \le c$ almost surely. By  Bernstein's Inequality \cite[Theorem~2.10]{boucheron:2013}, it follows that each $A_i$ is a sub-gamma random variable 
 with variance factor $\nu$ and scale factor $c$, i.e., the cumulant generating function satisfies 
\begin{align*}
\log \ex{ e^{ t A_i} } \le \frac{ \nu t^2}{ 2( 1 - ct)},
\end{align*}
for all $|t| \le c$. Hence, for all $u \in \cU$ and $|t| \le 2B ^2c/n $,
\begin{align*}
\log \ex{ e^{ t \cA(u) } } & =  \sum_{i=1}^m \log \ex{ e^{( t u_i^2 /2 ) A_i } }\\
&  \le  \sum_{i=1}^m \frac{   \nu (tu_i^2 /2)^2 }{ 2( 1 - c(t u_i^2 /2))} \\
&  \le   \frac{ 4 m B^4 n^{-2}  \nu t^2 }{ 2( 1 - 2 B^2 c n^{-1} t ))} , 
\end{align*}
where the equality follows from the independence of  the entries of $A$ and the last inequality holds because $u_i^2 \le 4 B^2/n$. An application of the maximal inequality~\cite[Corollary~2.6]{boucheron:2013} yields
\begin{align}
\ex{ \max_{u \in U}  |  \cA(u) | } &  \le \sqrt{ \frac{8 m B^4 \nu \log(2  N)}{n^2} }  + \frac{ 2 B^2 c  \log(2 N) }{n} .\label{eq:max_inq}
\end{align}
Combining \eqref{eq:max_inq} with $m =O(n^2)$ and  the scalings in \eqref{eq:nu_scale} and  \eqref{eq:c_scale} leads to the desired result. 
\end{proof}

\begin{lemma} We have
\begin{align*}
\left |  \widehat{D}_1   -  \widehat{D}_2 \right|  &  =  O_{\delta, B}  \left(  \frac{ n^{3/2}}{ \sqrt{d(n-d)} } \right).
\end{align*}
\end{lemma}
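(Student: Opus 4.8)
The key observation is that $\widehat{D}_1$ and $\widehat{D}_2$ are the \emph{same} deterministic functional evaluated at two different random vectors. Define the log-partition function $\Phi : \reals^m \to \reals$ by
\begin{align*}
\Phi(v) = \log \left( \int e^{ \langle u, v\rangle - \frac{1}{2} \langle u, \cI u\rangle} \, \dd P_U(u) \right),
\end{align*}
so that $\widehat{D}_1 = - \ex{ \Phi(V)}$ and $\widehat{D}_2 = - \ex{\Phi(\tilde V)}$. The plan is to bound $|\ex{\Phi(V)} - \ex{\Phi(\tilde V)}|$ by a Lindeberg-style coordinate-swapping argument. This is the natural tool here because (i) the entries of $V$ are independent, since each $V_i$ is a function of the independent observation $G_i$, and the entries of $\tilde V$ are independent, since $\cI$ is diagonal; and (ii) by construction $V$ and $\tilde V$ share the same first two moments, $\ex{V_i} = \ex{\tilde V_i} = 0$ and $\ex{V_i^2} = \ex{\tilde V_i^2} = \cI_{ii}$.

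First I would replace the coordinates of $V$ by those of $\tilde V$ one at a time. Writing $H_j$ for the hybrid vector whose first $j$ coordinates are taken from $\tilde V$ and whose remaining coordinates are taken from $V$, the telescoping identity gives
\begin{align*}
\ex{\Phi(\tilde V)} - \ex{\Phi(V)} = \sum_{j=1}^m \left( \ex{\Phi(H_j)} - \ex{\Phi(H_{j-1})}\right).
\end{align*}
The vectors $H_j$ and $H_{j-1}$ differ only in the $j$-th coordinate, carrying $\tilde V_j$ and $V_j$ respectively; crucially, both of these are independent of the remaining (shared) coordinates. Expanding $\Phi$ to third order in the $j$-th coordinate about the shared vector and using that $V_j$ and $\tilde V_j$ match in their first two moments cancels the zeroth-, first-, and second-order terms, leaving only third-order remainders, so that
\begin{align*}
\left| \ex{\Phi(H_j)} - \ex{\Phi(H_{j-1})} \right| \le \frac{1}{6} \left( \sup_{v} \left| \partial_j^3 \Phi(v) \right| \right) \left( \ex{ |V_j|^3} + \ex{ |\tilde V_j|^3} \right).
\end{align*}

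The two remaining ingredients are a uniform bound on the third derivatives of $\Phi$ and bounds on the third moments. For the former, note that $\Phi(\cdot)$ is the cumulant generating function of the exponentially tilted probability measure $\dd \mu_v(u) \propto e^{\langle u, v\rangle - \frac{1}{2}\langle u, \cI u\rangle}\, \dd P_U(u)$, which is supported on $\cU$. Hence $\partial_j^3 \Phi(v)$ is the third cumulant of $u_j$ under $\mu_v$; since $|u_j| \le 2B/\sqrt{n}$ on $\cU$ (because $\|w\|_\infty \le B/\sqrt n$ and $|W_{ij}| \le B/\sqrt n$), the third central moment is at most $(4B/\sqrt n)^3$, giving $\sup_v |\partial_j^3 \Phi(v)| = O_{B}(n^{-3/2})$ uniformly in $j$ and in the tilt $v$. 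For the latter, a direct computation from \eqref{eq:Vi}, using $p_i = \tfrac{\sqrt{d(n-d)}}{n}(\sqrt{d/(n-d)} + w_i)$, gives $\ex{|V_i|^3} = \tfrac{\sqrt{d(n-d)}}{n}(a_i^{-2} + b_i^{-2})$ with $a_i = \sqrt{d/(n-d)}+w_i$ and $b_i = \sqrt{(n-d)/d}-w_i$, which is $O_{B,\delta}(n/\sqrt{d(n-d)})$, while $\ex{|\tilde V_i|^3} = O(\cI_{ii}^{3/2}) = O_{B,\delta}(1)$. Substituting these bounds together with $m = O(n^2)$ into the telescoped sum yields $|\widehat{D}_1 - \widehat{D}_2| = O_{B,\delta}(n^2 \cdot n^{-3/2}\cdot n / \sqrt{d(n-d)}) = O_{B,\delta}(n^{3/2}/\sqrt{d(n-d)})$, as claimed; the contribution of the $\ex{|\tilde V_i|^3}$ terms is of lower order since $\sqrt{d(n-d)} \le n$.

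I expect the main obstacle to be the uniform third-derivative bound: one must recognize $\Phi$ as a log-partition function so that its derivatives become cumulants, and then exploit the compact support of $P_U$ to control the third cumulant uniformly over \emph{all} tilts $v$. This is precisely what converts the boundedness $|W_{ij}|\le B/\sqrt n$ into the $n^{-3/2}$ decay that lets the sum over $m = O(n^2)$ coordinates converge at the stated rate. A secondary technical point is the moment computation for $V_i$, which produces the $d(n-d)$ in the denominator and requires keeping $a_i$ and $b_i$ bounded away from zero; this is the origin of the $\delta$-dependence in the constant.
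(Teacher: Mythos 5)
Your proposal is correct and follows essentially the same route as the paper: the paper also writes $\widehat{D}_1-\widehat{D}_2$ as $\ex{\Phi(V)}-\ex{\Phi(\tilde V)}$ for the log-partition functional $\Phi$, invokes the generalized Lindeberg principle (citing Chatterjee's Theorem~1.1 rather than spelling out the telescoping), bounds $|\partial_i^3\Phi|=O_B(n^{-3/2})$ via the cumulants of the tilted measure on the compactly supported $\cU$, and combines this with the third-moment bound $O_{\delta,B}(n/\sqrt{d(n-d)})$ and $m=O(n^2)$. The only cosmetic differences are the sign convention on $\Phi$ and your explicit coordinate-swapping derivation of the Lindeberg bound.
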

\begin{proof}
Let $\Phi : \reals^m \to \reals$ be defined as $\Phi(v)  =    -  \log \int e^{ \langle v, u \rangle  - \frac{1}{2} \langle u, \cI u \rangle } \dd P_U(u)$. Then, we can write 
\begin{align*}
 \widehat{D}_1 - \widehat{D}_2 =    \ex{ \Phi( V)} -  \bEx[ \Phi(\tilde{V} )]
\end{align*}
where we recall that $V$ has independent entries and $\tilde{V}$ is a Gaussian vector with the same first two moments as $V$. We bound this difference using the generalized Lindeberg principle~\cite[Theorem~1.1]{chatterjee:2006}, which implies that, if  there exists a constant $L$ such that $| \partial_i^3 \Phi(v) | \le L$  for each $i$ and $v$, then
\begin{align}
\left |  \ex{ \Phi(V)} -  \ex{ \Phi(\tilde{V})} \right|  \le  \frac{m L }{6}   \max_{i \in [m]}   \left(   \ex{ |V_i|^{3} }  + \ex{ |\tilde{V}_i|^{3} }   \right). \label{eq:Lindeberg}
\end{align}

From \eqref{eq:Vi}  and \eqref{eq:cIi} it can be verified that the  third moments satisfy
\begin{align}
  \ex{ |V_i|^{3} }  + \ex{ |\tilde{V}_i|^{3} }    = O_{\delta, B}  \left(  \frac{ n}{ \sqrt{d(n-d)} } \right). \label{eq:third_moments}
\end{align}
Meanwhile, if we let $A$ be a $\cU$-valued random vector drawn according to the measure 
\begin{align*}
\frac{ e^{\langle  v  , u  \rangle   -  \frac{1}{2}  \langle u , \cI  u \rangle } \dd P_U(u) }{ \int   e^{\langle  v  , \eta' \rangle   -  \frac{1}{2}  \langle u' , \cI    u' \rangle } \dd P_{U} (u') } , 
\end{align*}
then the partial derivatives of $\Phi$ can be expressed as 
\begin{align*}
\partial_i \Phi(v) &=  - \ex{ A_i } \\
\partial^2_{i}  \Phi(v) &=  -  \ex{A_i^2}  + \ex{A_i}^2  \\
\partial^3_{i}  \Phi(v) &=  - \ex{ A_i^3}  + 3 \ex{ A_i^2} \ex{A_i}  - 2\ex{ A_i}^3.
\end{align*}
Noting that  $|A_i| \le 2B/\sqrt{n}$ for all $A \in \cU$ we see that   $|\partial^3_{i}  \Phi(v) |= O_B\left(  n^{-3/2} \right)$. Combining this inequality with \eqref{eq:Lindeberg} and \eqref{eq:third_moments} completes the proof. 
\end{proof}

\begin{lemma} We have
\begin{align*}
 \left | \widehat{D}_2-  \DKL{P_{Z|W = w}}{P_{Z}}  \right| &  =  O_{\delta, B}  \left(  \frac{ n^{3/2}}{ \sqrt{d(n-d)} } \right).
\end{align*}
\end{lemma}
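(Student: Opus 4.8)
The plan is to recognize that both quantities are values of a single functional and then interpolate between them. Define, for a positive definite diagonal matrix $J$, the Gaussian log-partition functional
\[
F(J) \triangleq -\bEx_{V \sim \normal(0, J)}\left[ \log \int e^{\langle u, V\rangle - \frac{1}{2}\langle u, J u\rangle}\, \dd P_U(u) \right].
\]
For the Gaussian Wigner channel the log-likelihood ratio is \emph{exactly} quadratic, the score is exactly $\normal(0,I)$, and the Fisher information is the identity; consequently $\DKL{P_{Z|W = w}}{P_{Z}} = F(I)$, while by its very definition $\widehat{D}_2 = F(\cI)$. Since $\cI$ is diagonal (see \eqref{eq:cIi}), the segment $J_s \triangleq (1-s) I + s\cI$ stays diagonal with entries $J_{s,ii} = 1 + s(\cI_{ii}-1)$, so it suffices to bound $\bigl|\tfrac{d}{ds}F(J_s)\bigr|$ uniformly in $s \in [0,1]$ and integrate.

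To differentiate cleanly I would first remove the $J$-dependence from the law of $V$ through the substitution $V = J^{1/2}\zeta$ with $\zeta \sim \normal(0,I)$, which gives $F(J) = -\bEx_\zeta[\log \int e^{\langle J^{1/2}u, \zeta\rangle - \frac12\|J^{1/2}u\|^2}\, \dd P_U(u)]$, placing all dependence on $s$ inside the integrand. Differentiating the log-partition produces an expectation against the tilted (posterior) measure $\dd\hat P(u) \propto e^{\langle J_s^{1/2}u,\zeta\rangle - \frac12\|J_s^{1/2}u\|^2}\,\dd P_U(u)$, denoted $\bEx[\,\cdot\mid\zeta]$. The term linear in $\zeta$ is then treated by Gaussian integration by parts (Stein's identity), using that $\partial_{\zeta_i}$ of the exponent equals $\sqrt{J_{s,ii}}\,u_i$; this collapses the cross term to a posterior variance and produces the clean identity
\[
\frac{d}{ds} F(J_s) = \frac{1}{2}\sum_{i=1}^m (\cI_{ii}-1)\,\bEx_\zeta\!\left[\bigl(\bEx[\hat U_i \mid \zeta]\bigr)^2\right].
\]

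It then remains only to bound the three factors. Because $P_U$ is supported on $\cU \subseteq \{u : |u_i| \le 2B/\sqrt{n}\}$, the posterior mean satisfies $(\bEx[\hat U_i\mid\zeta])^2 \le \max_{u\in\cU} u_i^2 \le 4B^2/n$. Using the explicit form \eqref{eq:cIi} together with $\sqrt{d/(n-d)}\cdot\sqrt{(n-d)/d} = 1$, one writes $\cI_{ii}-1 = \bigl(w_i^2 - (b-a)w_i\bigr)/\bigl((a+w_i)(b-w_i)\bigr)$ with $a = \sqrt{d/(n-d)}$ and $b = \sqrt{(n-d)/d}$; validity of the SBM bounds the denominator below by a constant $c(\delta,B) > 0$, while $|b-a|\,|w_i| = O_{\delta,B}(\sqrt{n/(d(n-d))})$ dominates the $w_i^2 = O(1/n)$ term, so $\max_i|\cI_{ii}-1| = O_{\delta,B}(\sqrt{n/(d(n-d))})$. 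Summing the $m = O(n^2)$ diagonal terms gives $|\tfrac{d}{ds}F(J_s)| = O_{\delta,B}(n^2 \cdot n^{-1}\cdot \sqrt{n/(d(n-d))}) = O_{\delta,B}(n^{3/2}/\sqrt{d(n-d)})$, and integrating over $s\in[0,1]$ yields the claim. The main obstacle is the derivative computation, where $J$ enters both the Gaussian measure and the quadratic exponent; the $V = J^{1/2}\zeta$ substitution followed by Stein's identity is precisely what makes the cross term cancel and reduces the whole estimate to the elementary support bound $|u_i|\le 2B/\sqrt{n}$ and the near-identity behavior of $\cI$.
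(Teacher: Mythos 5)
Your proposal is correct and follows essentially the same route as the paper: both identify $\widehat{D}_2$ and $\DKL{P_{Z|W=w}}{P_Z}$ as values of one Gaussian log-partition functional at $\cI$ and $I$, interpolate linearly between them, use Gaussian integration by parts (Stein's lemma) to reduce the derivative to the expected squared posterior mean, and then combine the support bound $|u_i|\le 2B/\sqrt{n}$ over $m=O(n^2)$ coordinates with $\|\cI - I\| = O_{B,\delta}\bigl(\sqrt{n/(d(n-d))}\bigr)$. The only cosmetic difference is that you carry out the computation entrywise along the diagonal path, whereas the paper bounds the full matrix gradient in nuclear norm and applies H\"older duality against $\|\cI - I\|$ in operator norm; for a diagonal $\cI$ these are the same estimate.
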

\begin{proof}
Let $\Psi : \psd^m \to \reals$ be defined as 
\begin{align*}
\Psi(K)  & =   -  \bEx_{N} \left[ \log \int e^{ \langle K^{1/2}   N  , u \rangle  - \frac{1}{2} \langle u, K u \rangle } \dd P_U(u) \right],
\end{align*}
where the expectation is with respect to $N \sim \normal(0, I_m)$. Then, a straightforward calculation reveals that 
\begin{align*}
\widehat{D}_2 -  \DKL{P_{Z|W = w}}{P_{Z}}  & = \Psi(\cI)  -  \Psi(I),
\end{align*}
where we recall that $\cI$ is a diagonal matrix given by \eqref{eq:cIi}.

Next, we consider the gradient of $\Psi(K)$. Let  $\mu(\cdot  \mid K, N)$ be the probability measure on $\cU$ defined by
\begin{align*}
\dd  \mu(u  \mid K, N) & =  \frac{ e^{ \langle K^{1/2} N ,  u \rangle  - \frac{1}{2} \langle u, K u \rangle }   \dd P_U(u)  }{ \int e^{ \langle K^{1/2} N ,  u' \rangle  - \frac{1}{2} \langle u', K u' \rangle } \dd P_{U}(u')}, 
\end{align*} 
and observe that
\begin{align*}
\nabla \Psi(K) 
& =  \frac{1}{2}  \ex{   \int \left(  uu^T -  K^{-1/2}N u^T     \right) \dd \mu(u \mid K, N) } . 
\end{align*}
Using Gaussian integration by parts (Stein's lemma) in conjunction with  the relation
\begin{align*}
\nabla_N  \,    \dd \mu(u  \mid K, N) & =  \left(  K^{1/2} u - \int  u' \,  \dd \mu(u' \mid K, N) \right)  \,    \dd \mu(u \mid K, N),
\end{align*} 
leads to 
\begin{align*}
\nabla \Psi(K) & =  \frac{1}{2}  \ex{   \int  u\,  \dd \mu(u \mid K, N) \left(  \int   u\,  \dd \mu(u \mid K, N) \right)^T  }. 
\end{align*}
This identity implies that the nuclear norm of the gradient is bounded by
\begin{align*}
\| \nabla \Psi(K)\|_\star  = \gtr\left(  \nabla \Psi(K) \right)   = \frac{1}{2}  \ex{\left \| \int  u\,  \dd \mu(u \mid K, N) \right \|^2} \le    \sup_{u \in \cU} \frac{1}{2}  \|u \|^2 \le \frac{2 m B^2}{ n} 
\end{align*}
where the last step holds because $\|u\| \le \sqrt{m} 2 B/\sqrt{n}$ for all $u \in \cU$. 

With these results in hand, we can now write
\begin{align*}
\left | \Psi(\cI)  -  \Psi(I)\right| & = \left |  \int_0^1  \frac{\dd}{ \dd t} \Psi( t \cI - (1- t) I)  \, \dd t \right| \\
&  = \left |  \int_0^1  \gtr\left( \nabla \Psi( t \cI - (1- t) I)  ( \cI - I )  \right)  \dd t \right| \\
&  \le \left |  \int_0^1  \|  \nabla \Psi( t \cI - (1- t) I) \|_\star  \|\cI - I \|  \, \dd t\right| \\
&\le \frac{2 m B^2}{ n}  \| \cI - I\|.
\end{align*}
Finally, from \eqref{eq:cIi}, it can be verified that
\[
\|\cI - I\|  = O_{B,\delta} \left( \frac{\sqrt{n}}{ \sqrt{ d (n-d)}}  \right), 
\]
which completes the proof. 
\end{proof}

\color{black}

\section{Derivation of Theorem~\ref{thm:matrix_factorizatiion}} \label{sec:thm:matrix_factorizatiion_proof}
First we observe that if $R$ is positive definite then $R^{1/2}$ is well defined. Introducing the transformed representation $\tilde{\bX} =  \bX R^{1/2}$, we can then write 
\begin{align}
\bW  = \frac{1}{\sqrt{n}} \bX R  \bX^T  = \frac{1}{\sqrt{n}} \tilde{\bX} \tilde{\bX}^T \label{eq:Wtilde}.
\end{align}
Note that if $R$ is negative definite then the same decomposition holds with $(-R)^{1/2}$. This transformation shows that it is sufficient to focus on setting where $R$ is the identity matrix.

The result given in \cite[Theorem~12]{lelarge:2018} is stated as follows: 
\begin{align*}
\lim_{n \to \infty} \frac{1}{n} I\left(\bX ;  \sqrt{\frac{t}{n}}\bX \bX^T +\bm{\xi} \right )  = \inf_{S \in \mathbb{S}_+^{d} } \tilde{\cF}_t(S) ,
\end{align*}
where
\begin{align*}
\tilde{\cF}_t(S) & =  \frac{t}{4} \left\| \bEx \left[ X X^T \right] \right \|_F^2  + \frac{t}{4} \|S\|_F^2 -  \ex{ \log\left(  \int  \dd P_{X}(x)  \exp\left(   \sqrt{t} N^T S^{1/2} x  + t X^T S x -   \frac{t}{2} x^T S x \right)  \right)},
\end{align*}
with $N \sim \normal(0, I_d)$  independent of $X \sim P_{X}$. To see that this expression is equivalent to the on given in Theorem~\ref{thm:matrix_factorizatiion}, observe that the mutual information function $I_{X}(S)$ can be expressed as follows: 
\begin{align*}
I_{X}(S) 
& = \ex{ \log\left( \frac{\exp\left( - \frac{1}{2} \| N\|_F^2    \right)  }{ \int  \dd P_{\tilde{X}}(x)  \exp\left( - \frac{1}{2} \| N+ S^{1/2} X  - S^{1/2} x\|_F^2    \right) }  \right) }\\
& = -  \ex{ \log\left(  \int  \dd P_{X}(x)  \exp\left(   N^T S^{1/2} x  + X^T S x -  \frac{1}{2} x^T S x \right)  \right)} + \frac{1}{2} \gtr\left(S \ex{ X X^T}\right).
\end{align*}
Rearranging  terms leads to
\begin{align*}
\tilde{\cF}_t(S) 
 & = I_{X}(t S)  + \frac{t}{ 4} \gtr\left( \left(\ex{X X^T  } - S \right)^2 \right). 
 \end{align*}
Finally, using the scaling relationship  $I_{R^{1/2} X}(S)= I_{X}(R^{1/2}SR^{1/2} )$ leads to the version of the result stated in Theorem~\ref{thm:matrix_factorizatiion}.


\section{Mutual Information and MMSE in Gaussian Noise}\label{sec:I-MMSE}

\subsection{Linear Gaussian Channel}\label{sec:gaussian_channel}

The scalar I-MMSE relationship~\cite{guo:2005a} asserts the the derivative of mutual information in a Gaussian noise channel with respect to the inverse noise variance is equal to one half times the MMSE. A recent line of work in the information theory literature has focused on  multivariate extensions of this result for linear Gaussian channel~\cite{guo:2005a, palomar:2006,lamarca:2009, payaro:2009}. This section  briefly reviews some of  results described by the first author and others~\cite{reeves:2018a}. Given a random vector $X \in \reals^d$ the functions $I_{X} : \psd^d \to [0, \infty)$ and $M_{X} : \psd^d \to  \psd^d$ are defined as \cite{reeves:2018a}:
\begin{align*}
I_X(S) & = I(X ; Y), \\
M_{X}(S) & = \ex{ \cov(X \mid Y)},
\end{align*}
where $Y = S^{1/2} X + N$ with independent Gaussian noise $N \sim \normal(0, I_d)$. These functions have a number of important properties. The function $I_X(S)$ is concave~\cite[Theorem~1]{reeves:2018a} and the matrix version of I-MMSE relation is given by $\nabla I_{X}(S) = \frac{1}{2} M_{X}(S)$ \cite[Lemma~4]{reeves:2018a}. Furthermore, these functions are able to characterize a linear Gaussian channel characterized by an arbitrary matrix $A \in\reals^{m \times n}$ via the following relationship \cite[Lemma~1]{reeves:2018a}:
\begin{align}
I(X ; A X + N') &= I_{X}(A^T A) \label{eq:I_A},
\end{align}
where $N' \sim \normal(0, I_m)$ is independent of $X$.

\subsection{Linear Gaussian Channel with Matrix Input}\label{sec:matrix_inputs}

The properties of the mutual information and MMSE described in Section~\ref{sec:gaussian_channel} extend naturally to the setting where the input is an $n \times d$ random matrix $\bX = [X_1, \dots, X_n]^T$ and the observations are given by $\bY = \bX S^{1/2} + \bN$ where $S \in \psd^d$ and $\bN$ is an $n \times d$ standard Gaussian matrix. In this setting, we define the functions:
\begin{align*}
I_{\bX}(S) & = I(\bX ; \bY), \\
M_{\bX}(S) & = \sum_{i=1}^n  \ex{ \cov(X_i \mid \bY)}. 
\end{align*} 
Using vectorization, the mutual information function can be expressed equivalently as
\begin{align}
I_{\bX}(S) & =I_{\gvec(\bX)}(I_n \otimes S), \label{eq:IX_vec}
\end{align}
where $\gvec(\bX)$ denotes the $nd \times 1$ vector obtained by stacking the columns in $\bX$ and $\otimes$ denotes the Kronecker product and. From this relationship, one finds that the I-MMSE relation still holds for matrix inputs, that is  $
\nabla I_{\bX}(S) = \frac{1}{2} M_{\bX}(S)$.

Next, we consider a useful representation of the MMSE matrix $M_{\bX}(S)$. Let $\bA$ and $\bB$ denote conditionally  independent draws form the posterior distribution of $\bX$ given $\bY$. Then, the conditional covariance can be expressed as
\begin{align*}
\cov(X_i \mid \bY) = \ex{ X_i X_i^T \mid \bY } - \ex{ A_i B_i^T} 
\end{align*}
and taking the expectation with resect to $\bY$ gives
\begin{align*}
 \ex{ \cov(X_i \mid \bY)} = \ex{ X_i X_i^T} - \ex{ A_i B_i^T}.
\end{align*}
Summing over the indices leads to
\begin{align}
M_{\bX}(S) & =\ex{ \bX^T \bX} - \ex{ \bA^T \bB} . \label{eq:MX_AB}
\end{align}

\subsection{Symmetric Matrix Estimation} \label{sec:I_MMSE_sym}

In the symmetric matrix estimation problem,  the goal it estimate an unknown matrix $\bX \in\reals^{n \times d}$ from observations of the form 
\begin{align}
\bZ  = \bX R \bX^T + \bm{\xi}, \label{eq:sym_matrix_model}
\end{align}
where $R \in \mathbb{S}^d$ is known and  $\bm{\xi} \in \mathbb{S}^n$ is a standard Gaussian Wigner matrix.  In this section, we show that this model can be viewed as special case of the linear Gaussian channel associated with matrix input given by the tensor product $\bX  \otimes \bX$, and thus the mutual information and MMSE can be characterized using the functions introduced in Sectioin~\ref{sec:matrix_inputs}

The first step is to observe that symmetric noise model given in  \eqref{eq:sym_matrix_model} provides the same information as the following asymmetric noise model:
\begin{align}
\tilde{\bZ} =\frac{1}{\sqrt{2}}  \bX R \bX^T + \bN, \label{eq:sym_matrix_model_alt}
\end{align}
where $\bN$ is an $n \times n$ standard Gaussian matrix. To see why,  note that  $\tilde{\bZ}$ can be decomposed uniquely in terms of the symmetric matrix   $(\tilde{\bZ} + \tilde{\bZ}^T)/\sqrt{2} = \bX R \bX^T  + (\bN + \bN^T)/\sqrt{2} $ and the antisymmetric matrix $(\tilde{\bZ} - \tilde{\bZ}^T )/\sqrt{2} =(\bN - \bN^T)/2$. By the orthogonal invariance of the Gaussian distribution, the antisymmetric matrix is independent of both  $\bX$ and $(\tilde{\bZ} + \tilde{\bZ}^T)/\sqrt{2}$. Noticing that $(\bN + \bN^T)/\sqrt{2}$ is a standard Gaussian Wigner matrix shows that $I(\bX  ; \bZ) = I(\bX  ; \tilde{\bZ})$.

The next step is to use vectorization to represent  the observation model  in \eqref{eq:sym_matrix_model_alt} as a linear Gaussian channel with matrix input:
\begin{align*}
\gvec(\tilde{\bZ})  = \frac{1}{\sqrt{2}} (\bX \otimes  \bX) \gvec(R)   + \gvec(\bN).
\end{align*}
In view of both \eqref{eq:I_A} and \eqref{eq:IX_vec}, the mutual information can be expressed as
\begin{align*}
I(\bX  ; \bZ) & =  I(\bX \otimes \bX  ; \bZ)  = I_{\bX \otimes \bX }\left( \frac{1}{2} \gvec(R) \gvec(R)^T \right),
\end{align*}
where the first equality holds because $\bX \otimes \bX$ is a deterministic function of $\bX$. 

This characterization of the mutual information is useful because it allows us to compute gradients with respect to the matrix $R$.  By the I-MMSE relation and the chain rule, 
\begin{align}
\nabla_{\gvec(R)}  I(\bX  ; \bZ)  = \frac{1}{2}  M_{\bX \otimes \bX}\left( \frac{1}{2} \gvec(R) \gvec(R)^T \right) \gvec(R). \label{eq:IXZ_grad}
\end{align}
Furthermore, by \eqref{eq:MX_AB}, the MMSE matrix can be expressed as
\begin{align*}
M_{\bX \otimes \bX}\left( \frac{1}{2} \gvec(R) \gvec(R)^T \right)& = \ex{ (\bX \otimes  \bX)  (\bX \otimes \bX)^T}  - \ex{ (\bA \otimes  \bA) (\bB \otimes  \bB)^T} \\
& = \ex{ (\bX \bX^T) \otimes  (\bX \bX^T)}  - \ex{ (\bA \bB^T)  \otimes (\bB \bA^T)},
\end{align*}
where $\bA$ and $\bB$ denote conditionally independent draws from the posterior distribution of $\bX$ given $\bZ$.  Therefore, \eqref{eq:IXZ_grad} can be rewritten compactly as 
\begin{align*}
\nabla_{R}  I(\bX  ; \bZ) 
& =\frac{1}{2}  \left( \ex{ \bX^T \bX  R \bX^T \bX} -   \ex{ (\bA^T \bB) R  (\bB^T \bA)} \right).
\end{align*}
Finally, if we consider the parameterization  $R_t = \sqrt{t} R$ for some $t \ge 0$, then the partial derivative with respect to $t$ is given by
\begin{align}
\frac{\partial} {\partial t}  I(\bX  ; \bZ) 
& = \frac{1}{4} \gvec(R)^T M_{\bX \otimes \bX}\left( \frac{1}{2} \gvec(R) \gvec(R)^T \right) \gvec(R) \notag\\
& =\frac{1}{4} \gtr \left( \ex{ R \bX^T \bX  R \bX^T \bX} -   \ex{R  (\bA^T \bB) R  (\bB^T \bA)} \right). \label{eq:IXZ_dt}
\end{align}

\bibliographystyle{IEEEtran}

\bibliography{arxiv_full.bbl}

\begin{thebibliography}{10}
\providecommand{\url}[1]{#1}
\csname url@samestyle\endcsname
\providecommand{\newblock}{\relax}
\providecommand{\bibinfo}[2]{#2}
\providecommand{\BIBentrySTDinterwordspacing}{\spaceskip=0pt\relax}
\providecommand{\BIBentryALTinterwordstretchfactor}{4}
\providecommand{\BIBentryALTinterwordspacing}{\spaceskip=\fontdimen2\font plus
\BIBentryALTinterwordstretchfactor\fontdimen3\font minus
  \fontdimen4\font\relax}
\providecommand{\BIBforeignlanguage}[2]{{%
\expandafter\ifx\csname l@#1\endcsname\relax
\typeout{** WARNING: IEEEtran.bst: No hyphenation pattern has been}%
\typeout{** loaded for the language `#1'. Using the pattern for}%
\typeout{** the default language instead.}%
\else
\language=\csname l@#1\endcsname
\fi
#2}}
\providecommand{\BIBdecl}{\relax}
\BIBdecl

\bibitem{holland1983stochastic}
P.~W. Holland, K.~B. Laskey, and S.~Leinhardt, ``Stochastic blockmodels: First
  steps,'' \emph{Social networks}, vol.~5, no.~2, pp. 109--137, 1983.

\bibitem{decelle:2011a}
A.~Decelle, F.~Krzakala, C.~Moore, and L.~Zdeborov\'a, ``Asymptotic analysis of
  the stochastic block model for modular networks and its algorithmic
  applications,'' \emph{Physitcal Review Eq}, vol.~84, no.~6, Dec. 2011.

\bibitem{deshpande:2017}
Y.~Deshpande, E.~Abbe, and A.~Montanari, ``Asymptotic mutual information for
  the balanced binary stochastic block model,'' \emph{Information and
  Inference}, vol.~6, no.~2, pp. 125--170, Jun. 2017.

\bibitem{caltagirone:2018}
F.~Caltagirone, M.~Lelarge, and L.~Miolane, ``Recovering asymmetric communities
  in the stochastic block model,'' \emph{IEEE Transactions on Network Science
  and Engineering}, vol.~5, no.~3, pp. 237--246, 2018.

\bibitem{lelarge:2018}
M.~Lelarge and L.~Miolane, ``Fundamental limits of symmetric low-rank matrix
  estimation,'' \emph{Probability Theory and Related Fields}, 2018.

\bibitem{barbier:2016a}
J.~Barbier, M.~Dia, N.~Macris, F.~Krzakala, T.~Lesieur, and L.~Zdeborov\'a,
  ``Mutual information for symmetric rank-one matrix estimation: {A} proof of
  the replica formula,'' in \emph{Advances in Neural Information Processing
  Systems (NIPS)}, vol.~29, Barcelona, Spain, 2016, pp. 424--432.

\bibitem{lesieur:2017}
T.~Lesieur, F.~Krzakala, and L.~Zdeborov\'a, ``Constrained low-rank matrix
  estimation: {P}hase transitions, approximate message passing and
  applications,'' \emph{Journal of Statistical Mechanics: Theory and
  Experiment}, Jul. 2017.

\bibitem{krzakala:2016}
F.~Krzakala, J.~Xu, and L.~Zdeborov\'{a}, ``Mutual information in rank-one
  matrix estimation,'' in \emph{Proceedings of the IEEE Information Theory
  Workshop (ITW)}, 2016.

\bibitem{deshpande:2018}
Y.~Deshpande, A.~Montanari, E.~Mossel, and S.~Sen, ``Contextual stochastic
  block models,'' in \emph{NeurIPS}, 2018.

\bibitem{abbe:2018a}
E.~Abbe and C.~Sandon, ``Proof of the achievability conjectures for the general
  stochastic block model,'' \emph{Communications on Pure and Applied
  Mathematics}, vol.~71, no.~7, pp. 1334--1406, 2018.

\bibitem{banks:2016a}
J.~Banks, C.~Moore, J.~Neeman, and P.~Netrapalli, ``Information-theoretic
  thresholds for community detection in sparse networks,'' in \emph{Conference
  On Learning Theory}, 2016.

\bibitem{abbe:2018}
E.~Abbe, ``Community detection and stochastic block models: {R}ecent
  developments,'' \emph{Journal of Machine Learning Research}, vol.~18, no.
  177, pp. 1--86, 2018.

\bibitem{rohe2011spectral}
K.~Rohe, S.~Chatterjee, B.~Yu \emph{et~al.}, ``Spectral clustering and the
  high-dimensional stochastic blockmodel,'' \emph{The Annals of Statistics},
  vol.~39, no.~4, pp. 1878--1915, 2011.

\bibitem{suwan:2016}
S.~Suwan, D.~S. Lee, R.~Tang, D.~L. Sussman, M.~Tang, and C.~E. Priebe,
  ``Empirical {B}ayes estimation for the stochastic blockmodel,''
  \emph{Electronic Journal of Statistics}, vol.~10, no.~1, pp. 761--782, 2016.

\bibitem{reeves:2018a}
G.~Reeves, H.~D. Pfister, and A.~Dytso, ``Mutual information as a function of
  matrix {SNR} for linear gaussian channels,'' in \emph{Proceedings of the IEEE
  International Symposium on Information Theory (ISIT)}, Vail, CO, Jun. 2018.

\bibitem{yuille:2002}
A.~L. Yuille and A.~Rangarajan, ``The concave-convex procedure ({CCCP}),'' in
  \emph{Advances in Neural Information Processing Systems (NIPS)}, 2002, pp.
  1033--1040.

\bibitem{korada:2011}
S.~B. Korada and A.~Montanari, ``Applications of the {L}indeberg principle in
  communications and statistical learning,'' \emph{IEEE Transactions on
  Information Theory}, vol.~57, no.~4, p. 2011, Apr. 2011.

\bibitem{guerra:2003}
F.~Guerra, ``Broken replica symmetry bounds in the mean field spin glass
  model,'' \emph{Communications in Mathematical Physics}, vol. 233, no.~2, pp.
  1--12, 2003.

\bibitem{barbier:2018}
J.~Barbier and N.~Macris, ``The adaptive interpolation method: a simple scheme
  to prove replica formulas in bayesian inference,'' \emph{Probability Theory
  and Related Fields}, Oct. 2018.

\bibitem{reeves:2017e}
G.~Reeves, ``Additivity of information in multilayer networks via additive
  {G}aussian noise transforms,'' in \emph{Proceedings of the Allerton
  Conference on Communication, Control, and Computing}, Monticello, IL, 2017,
  [Online]. {A}vailable \url{https://arxiv.org/abs/1710.04580}.

\bibitem{bauschke:2017}
H.~H. Bauschke and P.~L. Combettes, \emph{Convex Analysis and Monotone Operator
  Theory in {H}ilbert Spaces}, 2nd~ed.\hskip 1em plus 0.5em minus 0.4em\relax
  Springer, 2017.

\bibitem{milgrom:2002}
P.~Milgrom and I.~Segal, ``Envelope theorems for arbitrary choice sets,''
  \emph{Econometrica}, vol.~70, no.~2, pp. 583--601, Mar. 2002.

\bibitem{boucheron:2013}
S.~Boucheron, G.~Lugosi, and P.~Massart, \emph{Concentration Inequalities: A
  Nonasymptotic Theory of Independence}.\hskip 1em plus 0.5em minus 0.4em\relax
  Oxford University Press, 2013.

\bibitem{chatterjee:2006}
S.~Chatterjee, ``A generalization of the lindeberg principle,'' \emph{The
  Annals of Probability}, vol.~34, no.~6, pp. 2061--2076, 2006.

\bibitem{guo:2005a}
D.~Guo, S.~Shamai, and S.~Verd\'{u}, ``Mutual information and minimum
  mean-square error in {G}aussian channels,'' \emph{IEEE Transactions on
  Information Theory}, vol.~51, no.~4, pp. 1261--1282, Apr. 2005.

\bibitem{palomar:2006}
D.~P. Palomar and S.~Verd\'{u}, ``Gradient of mutual information in linear
  vector {G}aussian channels,'' \emph{IEEE Transactions on Information Theory},
  vol.~52, no.~1, pp. 141--154, Jan. 2006.

\bibitem{lamarca:2009}
M.~Lamarca, ``Linear precoding for mutual information maximization in {MIMO}
  systems,'' in \emph{Proceedings of the International Conference on Wireless
  Communication Systems}, Tuscany, Italy, Sep. 2009.

\bibitem{payaro:2009}
M.~Payar\'{o} and D.~Palomar, ``Hessian and concavity of mutual information,
  differential entropy, and entropy power in linear vector {G}aussian
  channels,'' \emph{IEEE Transactions on Information Theory}, vol.~55, no.~8,
  pp. 3613--3628, 2009.

\end{thebibliography}

\end{document}